\providecommand{\U}[1]{\protect\rule{.1in}{.1in}}
\providecommand{\U}[1]{\protect\rule{.1in}{.1in}}
\newcommand{\R}{\mathbb{R}}
\newcommand{\Z}{\mathbb{Z}}
\newtheorem{theorem}{Theorem}
\newtheorem{corollary}{Corollary}
\newtheorem{lemma}{Lemma}
\newtheorem{remark}{Remark}
\newtheorem{definition}{Definition}
\useunder{\uline}{\ul}{}
\newcommand{\multiline}[1]{  \begin{tabularx}{\dimexpr\linewidth-\ALG@thistlm}[t]{@{}X@{}}
#1
\end{tabularx}
}
\setlist[itemize]{leftmargin=*}
\begin{document}


\title{Smooth Robustness Measures for Symbolic Control Via Signal Temporal Logic}

\author{Shirantha Welikala, Hai Lin and Panos J. Antsaklis 
\thanks{The authors gratefully acknowledge the fruitful discussions with Vince Kurtz about different smooth robustness measures and their implementation.}
\thanks{The support of the National Science Foundation (Grant No. IIS-1724070, CNS-1830335, IIS-2007949) is gratefully acknowledged.}
\thanks{The authors are with the Department of Electrical Engineering, College of Engineering, University of Notre Dame, IN 46556, \texttt{{\small \{wwelikal,hlin1,pantsakl\}@bu.edu}}.}}
\maketitle

\begin{abstract}
Symbolic control problems aim to synthesize control policies for dynamical systems under complex temporal specifications. For such problems, Signal Temporal Logic (STL) is increasingly used as the formal specification language due to its rich expressiveness. 
Moreover, the degree of satisfaction of STL specifications can be evaluated using \emph{STL robust semantics} as a scalar robustness measure. This capability of STL enables transforming a symbolic control problem into an optimization problem that optimizes the corresponding robustness measure. 
However, since these robustness measures are non-smooth and non-convex, exact solutions can only be computed using computationally inefficient mixed-integer programming techniques that do not scale well. Therefore, recent literature has focused on using smooth approximations of these robustness measures to apply scalable and computationally efficient gradient-based methods to find local optima solutions. 
In this paper, we first generalize two recently established smooth robustness measures (SRMs) and two new ones and discuss their strengths and weaknesses. 
Next, we propose \emph{STL error semantics} to characterize the approximation errors associated with different SRMs under different parameter configurations. This allows one to sensibly select an SRM (to optimize) along with its parameter values. 
We then propose \emph{STL gradient semantics} to derive explicit gradients of SRMs leading to improve computational efficiency as well as accuracy compared to when using numerically estimated gradients.
Finally, these contributions are highlighted using extensive simulation results.
\end{abstract}

\thispagestyle{empty} \pagestyle{empty}


\section{Introduction}

A symbolic control problem arises when a dynamical system needs to be controlled such that its trajectory satisfies some complex temporal specifications. For example, such specifications may include requirements expected from the dynamical system, like achieving time-sensitive goals, avoiding undesirable behaviors and ensuring desirable behaviors over certain periods of time. To precisely express such temporal specifications, temporal logics like Linear Temporal Logic \cite{Pnueli1977}, Computation Tree Logic \cite{Clarke1982}, Interval Temporal Logic \cite{Schwartz1983},  Metric Temporal logic \cite{Koymans1990}, Time Window Temporal Logic \cite{Vasile2017} and Signal Temporal Logic (STL) \cite{Maler2004} can be used.

Among these temporal logics, STL has the advantage of having \emph{STL robust semantics} that can systematically transform the given temporal specifications and a system trajectory (i.e., a ``signal'') into a scalar robustness measure - which is indicative of both the satisfaction/violation and the degree of satisfaction/violation of the specification via its sign and magnitude, respectively. This unique property enables STL to transform a symbolic control problem into an optimization problem that finds the control sequence (signal) that maximizes the corresponding robustness measure. Due to these capabilities of STL, in recent years, it has been increasingly used in control systems related application domains such as in planning \cite{Mehdipour2019}, robotics \cite{Haghighi2019}, multi-agent systems \cite{Pant2018}, automotive systems \cite{Fainekos2012}, cyber-physical systems \cite{Raman2015} and biological systems \cite{Sankaranarayanan2012,Mehdipour2019b}.

The conventional robustness measure is non-smooth and non-convex due to the involved min and max operators in its definition. Therefore, optimizing such a robustness measure is an extremely challenging task. The work in \cite{Belta2019} formulates this optimization problem as a Mixed-Integer Program (MIP) so as to obtain an exact solution. However, this MIP formulation scales poorly with respect to many aspects of the original optimization problem (e.g., the complexity of the specification). Moreover, MIPs are inherently computationally inefficient to solve due to their complexity properties. With regard to this optimization problem, prior work have also explored alternative MIP formulations \cite{Saha2016,Raman2015} as well as heuristic \cite{Abbas2012,Lavalle2001} and non-smooth \cite{Abbas2013,Abbas2014} optimization approaches, but with no solution to the aforementioned scalability and computational efficiency concerns.

To address these concerns, several recent works such as \cite{Mehdipour2019,Haghighi2019,Gilpin2021} pioneered by \cite{Pant2017} propose to use smooth approximations of the conventional robustness measure so as to exploit the scalable, computationally efficient and well-established gradient-based optimization methods. It is worth noting that even though gradient-based optimization methods are susceptible to the issue of local optima, there are several efficient and practical ways to overcome this issue \cite{Lasdon2008,Welikala2019J1,Hoos2005}. Moreover, the consensus in the literature \cite{Pant2017,Pant2018,Mehdipour2019,Haghighi2019,Gilpin2021} seems to indicate that the scalability and the computational efficiency offered by adopting gradient-based optimization methods outweigh the effect of any such drawbacks.

Typically, an approximate smooth robustness measure (SRM) is obtained by replacing (approximating) the min and max operators in the conventional robustness measure with some smooth operators. Therefore, depending on the nature of the used smooth operators, the resulting SRM will have different characteristics. For example, \cite{Pant2017} proposes to use quasi-min and quasi-max \cite{Lange2014} operators as the smooth operators. However, even though the resulting SRM is globally smooth, it is not sound, i.e., achieving a positive value for the SRM does not imply the satisfaction of the specification \cite{Pant2017}. To recover this soundness property, \cite{Mehdipour2019} proposes to use smooth operators based on arithmetic and geometric means (partly motivated by \cite{Haghighi2019}). Even though the resulting SRM ensures soundness and also leads to more conservative results (i.e., more uniformly robust to external disturbances), it is not globally smooth \cite{Mehdipour2019}. Motivated by this predicament, the subsequent work \cite{Gilpin2021} (which also is the predecessor to this work) proposes to use quasi-min and soft-max \cite{Lange2014} operators as smooth operators. The resulting SRM is sound, globally smooth, and also offers a tunable level of conservativeness \cite{Gilpin2021} (via selecting the parameters of the used smooth operators).

Despite these successive recent improvements, there still are many contributions to be made in this research. For example, even though the proposed SRMs in \cite{Mehdipour2019} and \cite{Gilpin2021} are sound, they are not reverse sound, i.e., achieving a negative value for the SRM does not imply the violation of the specification. In fact, in some applications (e.g., in falsification problems \cite{Abbas2012}), the reverse soundness property is far more crucial than the soundness property. To address this need, inspired by \cite{Pant2017,Gilpin2021}, in this paper, we propose a new SRM that is reverse sound and also globally smooth. Along the same lines, we propose another SRM that aims to provide more flexibility when tuning the level of conservativeness (still, via selecting the parameters of the used smooth operators).

A natural question to ask at this point is: ``How one should select these smooth operator parameters?'' Typically, smaller parameter values lead to more conservative results, while larger parameter values lead to more accurate SRM (i.e., closer to the actual robustness measure). Authors in \cite{Pant2017} motivate the latter accuracy argument as it leads to ensure properties like soundness and reverse soundness (together called completeness) asymptotically. However, authors in \cite{Gilpin2021} point out the prior conservativeness argument. Nevertheless, both works do not provide a systematic approach to select these parameters. To answer this question, in this paper, we propose \emph{STL error semantics} to determine a bound for the approximation error associated with the used SRM in terms of the used operator parameters. Note that, since the width of such an error bound is a measure of the accuracy of the used SRM in terms of the used operator parameters, it can easily be used to tune the operator parameters and compare different (existing and newly proposed) SRMs. 

As mentioned before, the primary motivation behind using an SRM (as opposed to using the non-smooth conventional robustness measure) is to enable the use of efficient and scalable gradient-based optimization methods \cite{Pant2017,Pant2018,Haghighi2019,Mehdipour2019,Gilpin2021}. Therefore, intuitively, to best reap these benefits of gradient-based optimization, having an accurate as well as efficient way to evaluate the required ``gradients'' is of prime importance (here, the ``gradients'' refer to the gradients of the used SRM with respect to the system controls).  Clearly, the use of numerically evaluated gradients is both computationally inefficient as well as inaccurate \cite{Pant2017}. However, to the best of the authors’ knowledge, experimental results reported in existing literature \cite{Pant2017,Pant2018,Haghighi2019,Mehdipour2019,Gilpin2021} use numerically evaluated gradients defined via finite differences \cite{Pant2017} or software packages like AutoGrad \cite{Maclaurin2015}. Motivated by this, we propose \emph{STL gradient semantics} to determine the explicit gradient of any used SRM. Such derived explicit gradients are accurate (by definition) and significantly efficient compared to numerically evaluated gradients, and thus, allows us to reap the benefits of gradient-based optimization methods to the fullest.

In all, our contributions can be summarized as follows:
\begin{enumerate}
    \item Two new SRMs are proposed. One is guaranteed to be reverse sound, while the other is intended to provide more flexibility when tuning the level of conservativeness.
    \item We propose \emph{STL error semantics} to determine a bound for the approximation error of the used SRM in terms of the used operator parameters, enabling one to sensibly select an SRM along with its smooth parameter values.
    \item We propose \emph{STL gradient semantics} to derive the explicit gradient of the used SRM, leading to improve the accuracy and the efficiency (of both gradients and results) compared to when using numerically estimated gradients. 
    \item An extensive collection of simulation results has been reported along with source codes for reproduction and reuse purposes. 
\end{enumerate}


This paper is organized as follows. 
Section \ref{Sec:ProblemFormulation} introduces the considered class of symbolic control problems and the \emph{STL robust semantics}. Section \ref{Sec:SmoothRobustnessMeasures} provides the details of existing and new SRMs and introduces the \emph{smooth STL robust semantics}. The proposed \emph{STL error semantics} that can be used to characterize the approximation errors associated with different SRMs are discussed in Section \ref{Sec:ApproximationErrors}. The proposed \emph{STL gradient semantics} that can be used to derive explicit gradients of SRMs are discussed in Section \ref{Sec:Gradients}. Several observed simulation results, along with few interesting future research directions, are provided in Section \ref{Sec:SimulationResults} before concluding the paper in Section \ref{Sec:Conclusion}.

\section{Problem Formulation}
\label{Sec:ProblemFormulation}

We consider a non-linear discrete-time system of the form:
\begin{equation}\label{Eq:AgentDynamics}
\begin{aligned} 
    x_{t+1} =& f(x_t,u_t),\\
        y_t =& g(x_t,u_t),
\end{aligned}
\end{equation}
where $x_t\in \mathcal{X} \subseteq \R^n$ is the system state, $u_t\in\mathcal{U} \subseteq \R^m$ is the control input, and $y_t\in\mathcal{Y} \subseteq \R^p$ is the output signal. 
It is also assumed that $f(\cdot,\cdot)$ and $g(\cdot,\cdot)$ in \eqref{Eq:AgentDynamics} are continuous and differentiable. Given a finite horizon $T$, an initial condition $x_0$ and a bounded-time STL specification $\varphi_0$, our goal is to synthesize a control input sequence $u \triangleq \{u_t\}_{t\in[0,T]}$ such that the resulting output signal $y \triangleq \{y_t\}_{t\in[0,T]}$ satisfies $\varphi_0$ (here, $[0,T]$ denotes the discrete interval $\{0,1,2,\ldots,T\}$).

\subsection{STL Robust Semantics}
STL was first introduced in \cite{Maler2004} as a variant of temporal logic tailored for specifying desired temporal properties of real-valued signals. In particular, an STL specification $\varphi$ is built recursively from the predicates using the STL syntax: 
\begin{equation}\label{Eq:STLSyntax}
    \varphi :=  \pi \vert \neg \varphi \vert \varphi_1 \land \varphi_2 \vert \varphi_1 \mathbf{U}_{[t_1,t_2]}\varphi_2,
\end{equation}
where $\pi = (\mu^\pi(y_t)\geq 0)$ is a predicate defined by the function $\mu^\pi: \R^p\rightarrow \R$. While some STL synthesis methods require $\mu^\pi(\cdot)$ to be linear, in our approach, we only require $\mu^\pi(\cdot)$ to be differentiable. The \emph{conjunction} ($\land$) and \emph{negation} ($\neg$) operators can be used to derive other Boolean operators such as \emph{disjunction} ($\lor$) and \emph{implication} ($\implies$). Similarly, the \emph{until} ($\varphi_1 \mathbf{U}_{[t_1,t_2]} \varphi_2$) operator can be used to derive other temporal operators such as \emph{eventually} ($\mathbf{F}_{[t_1,t_2]}\varphi$) and \emph{always}  ($\mathbf{G}_{[t_1,t_2]}\varphi$). Here $[t_1,t_2]$ represents a discrete interval starting from a time point $t_1$ to a time point $t_2 \geq t_1$  where $t_1,t_2 \in \Z$. Since we restrict ourselves to bounded-time STL specifications, $t_2<\infty$.   

For example, the STL specification $\varphi = (\textbf{G}_{[a,b]}\varphi_1) \land (\textbf{F}_{[b,c]}\varphi_2)$ states that ``$\varphi_1$ must be True at all time points in $[a,b]$ and $\varphi_2$ must be True at some time point in $[b,c]$.'' Here, each $\varphi_i, i\in\{1,2\}$ is an STL specification itself, which incidentally can be a predicate $\varphi_i = \pi_i = (\mu^{\pi_i}(y_t)\geq 0)$. In the latter case, notice the dependence of $\varphi_i$ on the time point $t \in [0,T] \supseteq [a,c]$.

We use the \emph{STL robust semantics} (defined below) to assign a scalar value (called the \emph{robustness measure}) for a signal with respect to a given specification. This particular robustness measure is positive if and only if the signal satisfies the given specification.
In other words, $\rho^\varphi(y)>0 \iff y\vDash \varphi$, where $\rho^\varphi(y)$ is the robustness measure of the signal $y$ with respect to the specification $\varphi$. Let us use the notation $(y,t)\triangleq\{y_\tau\}_{\tau\in[t,T]}$ to denote the suffix of the signal $y (\triangleq\{y_\tau\}_{\tau\in[0,T]})$ starting from time point $t$ (note that, $y\equiv(y,0)$). 

\begin{definition}\label{Def:STLRobustSemantics}
(STL Robust Semantics)
\begin{itemize}
    \item $y \vDash \varphi \iff \rho^\varphi((y,0)) > 0$
    \item $y \nvDash \varphi \iff \rho^\varphi((y,0)) < 0$
    \item $\rho^\pi((y,t)) = \mu^\pi(y_t)$
    \item $\rho^{\neg\varphi}((y,t)) = -\rho^\varphi((y,t))$
    \item $\rho^{\varphi_1 \land \varphi_2}((y,t)) = \min \{\rho^{\varphi_1}((y,t)),\, \rho^{\varphi_2}((y,t))\}$
    \item $\rho^{\varphi_1 \lor \varphi_2}((y,t)) = \max \{\rho^{\varphi_1}((y,t)),\, \rho^{\varphi_2}((y,t))\}$
    \item $\rho^{\mathbf{F}_{[t_1,t_2]}\varphi}((y,t)) = \max_{\tau\in[t+t_1,t+t_2]}\{\rho^\varphi((y,\tau))\}$
    \item $\rho^{\mathbf{G}_{[t_1,t_2]}\varphi}((y,t)) = \min_{\tau\in[t+t_1,t+t_2]}\{\rho^\varphi((y,\tau))\}$
    \item $\rho^{\varphi_1\mathbf{U}_{[t_1,t_2]}\varphi_2}((y,t)) = \max_{\tau\in[t+t_1,t+t_2]}\{\min\{$\newline
    $\rho^{\varphi_1}((y,\tau)),\min_{\delta\in[t+t_1,\tau]}\{\rho^{\varphi_2}((y,\delta))\}\}$
\end{itemize}
\end{definition}


\subsection{Synthesis Problem as an Optimization Problem}

Finding a control input sequence $u$ for the system \eqref{Eq:AgentDynamics} so that the resulting output signal $y$ satisfies the given specification $\varphi_0$ is formally known as the \emph{synthesis problem}. Using STL robust semantics discussed above, this synthesis problem can be stated as an optimization problem of the form:
\begin{equation}\label{Eq:SynthesisProblem}
    \begin{aligned}
        u^* = &\ \underset{u}{\arg\max}& \ &\rho^{\varphi_0}((y,0))\\
        &\ \mbox{Subject to:}&\ &x_{t+1} = f(x_t,u_t),\ x_0 \mbox{ is a given,}\\
        & & \ &y_t = g(x_t,u_t),\\
        & & &y_t \in \mathcal{Y},\ x_t\in\mathcal{X},\ u_t\in\mathcal{U},\ \forall t\in[0,T].
    \end{aligned}
\end{equation}

Note that the three feasibility constraints in the last line of \eqref{Eq:SynthesisProblem} can be included in the robustness measure (i.e., in the objective function of \eqref{Eq:SynthesisProblem}) by considering them as STL specifications, say $\varphi_y, \varphi_x, \varphi_u$, on respective signals $y, x, u$ (similar to $y$ and $u$, $x \triangleq \{x_t\}_{t\in [0,T]}\equiv(x,0)$). For example, if $\mathcal{U}=\{u: u \in \R^m, \Vert u \Vert \leq a\}$, then, $\varphi_u = \textbf{G}_{[0,T]}(a - \Vert u_t \Vert \geq 0)$.

Therefore, \eqref{Eq:SynthesisProblem} can be re-stated as 
\begin{equation}\label{Eq:SynthesisProblem2}
    \begin{aligned}
        u^* = &\ \underset{u}{\arg\max}& \ &\rho^{\varphi}(y,x,u)\\
        &\ \mbox{Subject to:}&\ &x_{t+1} = f(x_t,u_t),\ x_0 \mbox{ is a given,}\\
        & & \ &y_t = g(x_t,u_t),\ \forall t\in[0,T],
    \end{aligned}
\end{equation}
where $\varphi \triangleq \varphi_0 \land \varphi_y \land \varphi_x \land \varphi_u$, and hence, $\rho^{\varphi}(y,x,u) = \max\{\rho^{\varphi_0}(y),\, \rho^{\varphi_y}(y),\, \rho^{\varphi_x}(x),\, \rho^{\varphi_u}(u)\}$. With a slight abuse of notation, we can replace the STL specification $\varphi_0 \land \varphi_y$ simply by $\varphi_y$. As a result, the robustness measure objective in \eqref{Eq:SynthesisProblem2} can be written of as 
\begin{equation}\label{Eq:SynthesisProblem2Objective}
    \rho^{\varphi}(y,x,u) \triangleq \max\{\rho^{\varphi_y}(y),\, \rho^{\varphi_x}(x),\, \rho^{\varphi_u}(u)\}.
\end{equation}

It is worth pointing out that the state signal $x$ and the output signal $y$ are both fully determined by the control signal $u$ through the agent dynamics \eqref{Eq:AgentDynamics} and the given initial condition $x_0$. Exploiting this dependence, we can define a composite control dependent signal $s(u) \triangleq \{s_t(u)\}_{t\in[0,T]}$ where $s_t(u) \triangleq [y_t^\top(u),x_t^\top(u),u_t^\top]^\top \in \R^{p+n+m}$. Hence, $\rho^{\varphi}(y,x,u)$ in \eqref{Eq:SynthesisProblem2Objective} can be written as $\rho^{\varphi}(s(u)) = \rho^{\varphi}(y(u),x(u),u)$. Therefore, \eqref{Eq:SynthesisProblem2} can be re-stated as an unconstrained optimization problem:
\begin{equation}\label{Eq:SynthesisProblem3}
u^* = \underset{u}{\arg\max} \ \rho^{\varphi}(s(u)),    
\end{equation}
where the objective function $\rho^{\varphi}(s(u))$ is simply the robustness measure of the signal $s(u)$ with respect to the specification $\varphi$.

According to the STL robust semantics, if the determined optimal control signal $u^*$ in \eqref{Eq:SynthesisProblem3} is such that $\rho^{\varphi}(s(u^*))>0$, the signal $s(u^*)$ is guaranteed to satisfy the specification $\varphi$. In other words, denoting the corresponding output and state signals respectively as $y^*$ and $x^*$, $y^*$ is guaranteed to satisfy the given specification $\varphi_0$ in \eqref{Eq:SynthesisProblem}, and, signals $y^*$, $x^*$ and $u^*$ will always remain within their respective feasible regions stated in \eqref{Eq:SynthesisProblem}. 


\begin{remark}
In this paper, we focus on solving the optimization problem \eqref{Eq:SynthesisProblem3} (same as \eqref{Eq:SynthesisProblem} and \eqref{Eq:SynthesisProblem2}) to determine the optimal controls for the synthesis problem. An alternative energy-aware approach would be to solve an optimization problem of the form
\begin{equation}\label{Eq:SynthesisProblem4}
u^* = \underset{u}{\arg\max} \ \rho^{\varphi}(s(u)) - \alpha \Vert u \Vert^2, \end{equation}
where $\alpha\in\R$ is a scaling factor. Similarly, an alternative receding horizon control approach (preferred if $T$ is large) can also be facilitated - by slightly modifying the form of \eqref{Eq:SynthesisProblem4}.  
\end{remark}

\section{Smooth Robustness Measures (SRMs)}
\label{Sec:SmoothRobustnessMeasures}

The robustness measure objective $\rho^{\varphi}(s(u))$ of the optimization problem \eqref{Eq:SynthesisProblem3} (i.e., the interested synthesis problem) is non-smooth and non-convex due to the involved non-smooth $\min\{\cdot\}$ and $\max\{\cdot\}$ operators in the STL robust semantics. In the special case where the system \eqref{Eq:AgentDynamics} is linear and all the predicates required to define the STL specification $\varphi$ are linear, the problem \eqref{Eq:SynthesisProblem3} can be encoded as a Mixed Integer Program (MIP) \cite{Belta2019}. However, the complexity of this MIP's solution process grows exponentially with the number of predicates and $T$. Therefore, even for such a special (linear) case, solving \eqref{Eq:SynthesisProblem3} is extremely challenging.

To address this challenge, the recent literature \cite{Pant2017,Haghighi2019,Mehdipour2019,Gilpin2021} has focused on replacing the non-smooth robustness measure objective $\rho^{\varphi}(s(u))$ in \eqref{Eq:SynthesisProblem3} with a smooth approximation of it. Such a smooth approximation is denoted as $\tilde{\rho}^{\varphi}(s(u))$ and is referred to as a \emph{smooth robustness measure} (SRM). This modification enables the use of computationally efficient and scalable gradient-based optimization techniques to solve \eqref{Eq:SynthesisProblem3}.  

Typically, an SRM $\tilde{\rho}^{\varphi}(s(u))$ is obtained by replacing (approximating) the non-smooth $\min\{\cdot\}$ and  $\max\{\cdot\}$ operators involved in $\rho^{\varphi}(s(u))$ by respective smooth-min and smooth-max operators (denoted as $\widetilde{\min}\{\cdot\}$ and $\widetilde{\max}\{\cdot\}$). In particular, to evaluate such an SRM $\tilde{\rho}^{\varphi}(s(u))$, we use the \emph{smooth STL robust semantics} defined below.  

\begin{definition}\label{Def:SmoothSTLRobustSemantics}
(Smooth STL Robust Semantics)
\begin{itemize}
    \item $\tilde{\rho}^\pi((s,t)) = \mu^\pi(s_t)$
    \item $\tilde{\rho}^{\neg\pi}((s,t)) = -\tilde{\rho}^\pi((s,t))$
    \item $\tilde{\rho}^{\varphi_1 \land \varphi_2}((s,t)) = \widetilde{\min} \{\tilde{\rho}^{\varphi_1}((s,t)),\, \tilde{\rho}^{\varphi_2}((s,t))\}$
    \item $\tilde{\rho}^{\varphi_1 \lor \varphi_2}((s,t)) = \widetilde{\max} \{\tilde{\rho}^{\varphi_1}((s,t)),\, \tilde{\rho}^{\varphi_2}((s,t))\}$
    \item $\tilde{\rho}^{\mathbf{F}_{[t_1,t_2]}\varphi}((s,t)) = \widetilde{\max}_{\tau\in[t+t_1,t+t_2]}\{\tilde{\rho}^\varphi((s,\tau))\}$
    \item $\tilde{\rho}^{\mathbf{G}_{[t_1,t_2]}\varphi}((s,t)) = \widetilde{\min}_{\tau\in[t+t_1,t+t_2]}\{\tilde{\rho}^\varphi((s,\tau))\}$
    \item $\tilde{\rho}^{\varphi_1\mathbf{U}_{[t_1,t_2]}\varphi_2}((s,t)) = \widetilde{\max}_{\tau\in[t+t_1,t+t_2]}\{\widetilde{\min}\{
    \newline 
        \tilde{\rho}^{\varphi_1}((s,\tau)),\, \widetilde{\min}_{\delta\in[t+t_1,\tau]}\{\tilde{\rho}^{\varphi_2}((s,\delta))\}\}\}$
\end{itemize}
\end{definition}

In writing the above semantics, for notational simplicity, we have omitted representing the control ($u$) dependence of the composite signal ($s(u)$). Moreover, without loss of generality, we have assumed that the STL specification is given in disjunctive normal form, i.e., the negation operator is only applied to predicates $\pi$. It is worth noting that any STL specification can be written in disjunctive normal form \cite{Gilpin2021}.

\begin{table*}[!t]
\caption{A summary of available different smooth-min ($\widetilde{\min}\{\cdot\}$) and smooth-max ($\widetilde{\max}\{\cdot\}$) operators and their characteristics. Here, $a \triangleq \{a_1,a_2,\ldots,a_m\}$ is a finite set of real numbers (assume to be in descending order) and parameters $k_1,k_2 \in \R_{> 0}$.} 
\label{Tab:SmoothOperators}
\centering
\resizebox{\textwidth}{!}{
\begin{tabular}{|c|l|l|l|}
\hline
\multicolumn{1}{|c|}{Smooth Operator} & 
\multicolumn{1}{c|}{\begin{tabular}[c]{@{}c@{}}Operator Definition\end{tabular}} & 
\multicolumn{1}{c|}{\begin{tabular}[c]{@{}c@{}}Gradient\\ (e.g., $\frac{\partial}{\partial a_i}\widetilde{\min}\{a\}=\,?$) \end{tabular}} & 
\multicolumn{1}{c|}{\begin{tabular}[c]{@{}c@{}}Approximation Error Band\\ 
(e.g., $(\min\{a\}-\widetilde{\min}\{a\})\in\,?$)\end{tabular}} \\ \hline
$k_1$-Quasi-$\min$ & 
$\widetilde{\min}\{a\} \triangleq -\frac{1}{k_1}\log\left(\sum_{i=1}^m e^{-k_1a_i}\right)$ & 
$\frac{e^{-k_1a_i}}{\sum_{j=1}^m e^{-k_1a_j} }$ & 
$[0,\ \frac{\log(m)}{k_1}]$ \\ \hline
$k_2$-Quasi-$\max$ & 
$\widetilde{\max}\{a\} \triangleq \frac{1}{k_2}\log\left(\sum_{i=1}^m e^{k_2a_i}\right)$ & 
$\frac{e^{k_2a_i}}{\sum_{j=1}^m e^{k_2a_j}}$ & 
$[-\frac{\log(m)}{k_2},\ 0]$ \\ \hline
$k_1$-Soft-$\min$ &
$\widetilde{\min}\{a\} \triangleq \frac{\sum_{i=1}^m a_i e^{-k_1a_i}}{\sum_{i=1}^m e^{-k_1a_i}}$ & 
$\frac{e^{-k_1a_i}}{\sum_{j=1}^m e^{-k_1a_j}}\left[1 - k_1(a_i-\widetilde{\min}\{\bar{a}\})\right]$ & $[-\frac{a_1-a_m}{1+\frac{e^{k_1(a_{m-1}-a_m)}}{m-1}},\ 0]$ \\ \hline
$k_2$-Soft-$\max$ &
$\widetilde{\max}\{a\} \triangleq \frac{\sum_{i=1}^m a_i e^{k_2a_i}}{\sum_{i=1}^m e^{k_2a_i}}$ & 
$\frac{e^{k_2a_i}}{\sum_{j=1}^m e^{k_2a_j}}\left[1 - k_2(\widetilde{\max}\{\bar{a}\}-a_i)\right]$ & $[0,\ \frac{a_1-a_m}{1+\frac{e^{k_2(a_1-a_2)}}{m-1}}]$ \\ \hline
\end{tabular}
}
\end{table*}

\begin{table}[t!]
\caption{Different smooth robustness measures (SRMs) defined by the used smooth-min and smooth-max operators.}
\label{Tab:SmoothRobustnessMeasures}
\resizebox{\columnwidth}{!}{
\begin{tabular}{|c|r|c|c|}
\hline
\multicolumn{2}{|c|}{\multirow{2}{*}{\begin{tabular}[c]{@{}c@{}}SRM Type\end{tabular}}} &
  \multicolumn{2}{c|}{\begin{tabular}[c]{@{}c@{}}Smooth-max \\ Operator\end{tabular}} \\ \cline{3-4} 
\multicolumn{2}{|c|}{} &
  $k_2$-Quasi-$\max$ &
  $k_2$-Soft-$\max$ \\ \hline
\multirow{2}{*}{\begin{tabular}[c]{@{}c@{}}Smooth-min \\ Operator\end{tabular}} &
  $k_1$-Quasi-$\min$ &
  \begin{tabular}[c]{@{}c@{}}SRM1 \\ (Proposed in \cite{Pant2017}) \end{tabular} &
  \begin{tabular}[c]{@{}c@{}}SRM2 \\ (Proposed in \cite{Gilpin2021}) \end{tabular} \\ \cline{2-4} 
 &
  $k_1$-Soft-$\min$ &
  \begin{tabular}[c]{@{}c@{}}SRM3 \\ (New)\end{tabular} &
  \begin{tabular}[c]{@{}c@{}}SRM4 \\ (New)\end{tabular} \\ \hline
\end{tabular}
}
\end{table}

Table \ref{Tab:SmoothOperators} summarizes possible two smooth-min operators and two smooth-max operators that can be used in Def. \ref{Def:SmoothSTLRobustSemantics}. Based on the configuration of the used smooth-min and smooth-max operators, as outlined Tab. \ref{Tab:SmoothRobustnessMeasures}, we can define four different SRMs (labeled SRM1-4). In the following four subsections, we discuss each of these different SRMs.

\paragraph*{Notation} Unless stated otherwise, $a \triangleq \{a_1,a_2,\ldots,a_m\}$ is used to represent an arbitrary finite set of real numbers (without loss of generality, $a$ is also assumed to be in descending order). Further, $k_1,k_2\in \R_{>0}$ are used to represent two user defined parameters. We also point out that, we have slightly abused the notation $[\cdot,\cdot]$ by using it to represent closed intervals in both $\R$ and $\Z$. However, we believe its meaning would be clear from the context.



\subsection{SRM1: Proposed in \cite{Pant2017}}

The work in \cite{Pant2017} has proposed to use $k_1$-Quasi-$\min$ and $k_2$-Quasi-$\max$ operators (see Tab. \ref{Tab:SmoothOperators} and \cite{Lange2014}):
\begin{equation}\label{Eq:SmoothMinMaxOperatorsPant}
\begin{aligned}
    \widetilde{\min}\{a\} \triangleq& -\frac{1}{k_1}\log\left(\sum_{i=1}^m e^{-k_1a_i}\right) \mbox{ and }\\
    \widetilde{\max}\{a\} \triangleq& \ \ \ \  \frac{1}{k_2}\log\left(\sum_{i=1}^m e^{k_2a_i}\right),
\end{aligned}
\end{equation}
as smooth-min and smooth-max operators, respectively. The operators in \eqref{Eq:SmoothMinMaxOperatorsPant} are smooth everywhere, and thus, their respective gradients can be derived easily (provided in Tab. \ref{Tab:SmoothOperators}). The following lemma establishes bounds on the approximation errors associated with the operators in \eqref{Eq:SmoothMinMaxOperatorsPant}.

\begin{lemma}\label{Lm:SmoothMinMaxOperatorErrorBoundsPant}
The smooth-min and smooth-max operators in \eqref{Eq:SmoothMinMaxOperatorsPant} satisfy the following respective approximation error bounds:
\begin{equation}\label{Eq:Lm:SmoothMinMaxOperatorErrorBoundsPant}
    \begin{aligned}
        (\min\{a\} - \widetilde{\min}\{a\})
        \in&[0,\,\frac{1}{k_1}\log(1+(m-1)e^{-k_1(a_{m-1}-a_m)})]\\
        \subseteq&[0,\ \frac{\log(m)}{k_1}] \mbox{ and }\\
        (\max\{a\} - \widetilde{\max}\{a\})
        \in&[-\frac{1}{k_2}\log(1+(m-1)e^{-k_2(a_1-a_2)}),\,0]\\
        \subseteq&[-\frac{\log(m)}{k_2},\ 0].
    \end{aligned}
\end{equation}
\end{lemma}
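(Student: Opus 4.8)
The plan is to prove the two bounds separately, and within each to derive the sharper (first) bound directly from the definition of the operator, then obtain the coarser (second) bound by a short monotonicity argument.

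First I would treat the smooth-max case, since the smooth-min case follows by the symmetry $\widetilde{\min}\{a\} = -\widetilde{\max}\{-a\}$ together with $\min\{a\} = -\max\{-a\}$. Write $M \triangleq \max\{a\} = a_1$ (recall $a$ is in descending order). For the upper bound, note that $\sum_{i=1}^m e^{k_2 a_i} \geq e^{k_2 a_1}$, so $\widetilde{\max}\{a\} = \frac{1}{k_2}\log\left(\sum_{i=1}^m e^{k_2 a_i}\right) \geq \frac{1}{k_2}\log\left(e^{k_2 a_1}\right) = a_1 = \max\{a\}$, giving $\max\{a\} - \widetilde{\max}\{a\} \leq 0$. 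For the lower bound, factor out $e^{k_2 a_1}$: $\sum_{i=1}^m e^{k_2 a_i} = e^{k_2 a_1}\left(1 + \sum_{i=2}^m e^{-k_2(a_1 - a_i)}\right)$. Since $a_i \leq a_2$ for all $i \geq 2$, we have $a_1 - a_i \geq a_1 - a_2$, hence $\sum_{i=2}^m e^{-k_2(a_1-a_i)} \leq (m-1)e^{-k_2(a_1-a_2)}$. Taking logs and dividing by $k_2$ yields $\widetilde{\max}\{a\} \leq a_1 + \frac{1}{k_2}\log\left(1 + (m-1)e^{-k_2(a_1-a_2)}\right)$, which is exactly the claimed lower bound. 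The coarser inclusion follows because $e^{-k_2(a_1-a_2)} \leq 1$ (as $a_1 \geq a_2$ and $k_2 > 0$), so the bracketed term is at most $1 + (m-1) = m$, and $\log$ is increasing.

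Then the smooth-min case: apply the smooth-max result to the descending-order set $\{-a_m, -a_{m-1}, \ldots, -a_1\}$, whose largest element is $-a_m$ and second-largest is $-a_{m-1}$, so the role of $a_1 - a_2$ is played by $(-a_m) - (-a_{m-1}) = a_{m-1} - a_m$. Negating the resulting inequalities and using $\widetilde{\min}\{a\} = -\widetilde{\max}\{-a\}$, $\min\{a\} = -\max\{-a\}$, $k_1$ in place of $k_2$, gives $(\min\{a\} - \widetilde{\min}\{a\}) \in [0,\, \frac{1}{k_1}\log(1 + (m-1)e^{-k_1(a_{m-1}-a_m)})] \subseteq [0, \frac{\log(m)}{k_1}]$.

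I do not expect a genuine obstacle here; the result is a standard log-sum-exp estimate. The only points requiring care are keeping the descending-order convention consistent when passing between $\min$ and $\max$ via negation (so that the exponent gap is $a_{m-1}-a_m$ rather than $a_1 - a_m$), and checking that the coarsening step uses the sign of $k_1, k_2 > 0$ and $a_1 \geq a_2$, $a_{m-1} \geq a_m$ to ensure the extra exponential factors are bounded by $1$. A degenerate subtlety worth a remark is the case $m = 1$, where both intervals collapse to $\{0\}$ and the bounds hold trivially.
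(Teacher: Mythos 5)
Your proof is correct and follows essentially the same route as the paper: factor out the extremal exponential, bound the remaining sum by monotonicity, and take logarithms. The only cosmetic difference is that you prove the smooth-max case explicitly and transfer it to the smooth-min case via the duality $\widetilde{\min}\{a\}=-\widetilde{\max}\{-a\}$ (with the correct relabelling of the gap as $a_{m-1}-a_m$), whereas the paper works out the smooth-min case directly and asserts the max case ``by the same steps''; your duality step is, if anything, slightly more explicit.
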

\begin{proof}
Recall that the set $a=\{a_1,a_2,\ldots,a_m\}$ is in descending order. Therefore, the approximation error associated with the smooth-min operator in \eqref{Eq:SmoothMinMaxOperatorsPant} is 
\begin{align}
    (\min\{a\} -& \widetilde{\min}\{a\}) 
    =\ a_m + \frac{1}{k_1}\log\left(\sum_{i=1}^m e^{-k_1a_i}\right) \nonumber \\
    =& \frac{1}{k_1}\log\left(\frac{\sum_{i=1}^m e^{-k_1a_i}}{e^{-k_1a_m}}\right) \nonumber \\
    =& \frac{1}{k_1}\log\left(1+\frac{\sum_{i=1}^{m-1} e^{-k_1a_i}}{e^{-k_1a_m}}\right) \nonumber \\
    \leq&\frac{1}{k_1}\log\left(1+\frac{\sum_{i=1}^{m-1} e^{-k_1a_{m-1}}}{e^{-k_1a_m}}\right) \nonumber \\
    =&\frac{1}{k_1}\log\left(1+(m-1)e^{-k_1(a_{m-1}-a_m)}\right) \label{Eq:Lm:SmoothMinMaxOperatorErrorBoundsPantStep0}\\ \label{Eq:Lm:SmoothMinMaxOperatorErrorBoundsPantStep1}
    \leq &\ \frac{\log(m)}{k_1}.
\end{align}
Both inequalities used in the above derivation exploit the monotonicity property of the $\log(\cdot)$ function. Using the same property, we can also write
\begin{align}
    (\min\{a\} - \widetilde{\min}\{a\}) 
    =&\ a_m + \frac{1}{k_1}\log\left(\sum_{i=1}^m e^{-k_1a_i}\right) \nonumber \\
    \geq&\ a_m + \frac{1}{k_1}\log\left(e^{-k_1a_m}\right) \nonumber \\ \label{Eq:Lm:SmoothMinMaxOperatorErrorBoundsPantStep2}
    =&\ 0.
\end{align}
The relationships in
\eqref{Eq:Lm:SmoothMinMaxOperatorErrorBoundsPantStep0}, \eqref{Eq:Lm:SmoothMinMaxOperatorErrorBoundsPantStep1} and \eqref{Eq:Lm:SmoothMinMaxOperatorErrorBoundsPantStep2} establish the first result in \eqref{Eq:Lm:SmoothMinMaxOperatorErrorBoundsPant}. Following the same steps, the second result in \eqref{Eq:Lm:SmoothMinMaxOperatorErrorBoundsPant} can also be established. This completes the proof.
\end{proof}

From the above lemma, it is clear that the approximation errors associated with the smooth-min and smooth-max operators in \eqref{Eq:SmoothMinMaxOperatorsPant} are bounded and also vanishes as $k_1,k_2\rightarrow\infty$. Since these operators are used when determining the SRM1 $\tilde{\rho}^{\varphi}(s(u))$ (via Def. \ref{Def:SmoothSTLRobustSemantics}), we can use \eqref{Eq:Lm:SmoothMinMaxOperatorErrorBoundsPant} to bound the approximation error associated with the SRM1 as    
\begin{equation}\label{Eq:Lm:SRM1ApproxError}
    (\rho^{\varphi}(s(u)) - \tilde{\rho}^{\varphi}(s(u)) \in [L_k^\varphi,U_k^\varphi],
\end{equation}
where $L_k^\varphi$ and $U_k^\varphi$ are two real numbers dependent on the STL specification $\varphi$ and the parameters $k = [k_1,k_2]$. In other words, the actual (non-smooth) robustness measure $\rho^{\varphi}(s(u))$ with respect to the SRM1 $\tilde{\rho}^{\varphi}(s(u))$ is placed such that  
\begin{equation}
    \tilde{\rho}^{\varphi}(s(u)) + L_k^\varphi \leq \rho^{\varphi}(y,x,u) \leq \tilde{\rho}^{\varphi}(s(u)) + U_k^\varphi.
\end{equation}
The exact details on how $L_k^\varphi$ and $U_k^\varphi$ can be computed are provided in Section \ref{Sec:ApproximationErrors}. Regardless, at this point, it should be clear from \eqref{Eq:Lm:SmoothMinMaxOperatorErrorBoundsPant} that neither $L_k^\varphi$ nor $U_k^\varphi$ in \eqref{Eq:Lm:SRM1ApproxError} is necessarily zero. This implies that, in this scenario (SRM1), achieving $\tilde{\rho}^{\varphi}(s(u)) > 0$ for some composite signal $s(u)$  does not guarantee the satisfaction of the specification $\varphi$, i.e.,  
\begin{equation*}
    \tilde{\rho}^{\varphi}(s(u)) > 0   \centernot \implies  \rho^{\varphi}(s(u)) > 0.
\end{equation*}
Similarly, in this scenario (SRM1), achieving $\tilde{\rho}^{\varphi}(s(u)) < 0$ for some composite signal $s(u)$  does not guarantee the violation of the specification $\varphi$ either, i.e.,
\begin{equation*}
    \tilde{\rho}^{\varphi}(s(u)) < 0   \centernot \implies  \rho^{\varphi}(s(u)) < 0.
\end{equation*}
The above two characteristics indicate that the SRM1 is neither \emph{sound}, \emph{revers-soundness} nor \emph{complete}. The formal definitions of these concepts: \emph{soundness}, \emph{reverse-soundness} and \emph{completeness} considered in this paper are given in the following definition. 
\begin{definition}\label{Def:SRMProperties}
A smooth robustness measure $\tilde{\rho}^{\varphi}(\cdot)$ is called: \begin{enumerate}
    \item ``\emph{sound}'' if $\tilde{\rho}^{\varphi}(s(u)) > 0   \implies  \rho^{\varphi}(s(u)) > 0$, for any signal $s(u)$ and specification $\varphi$,
    \item ``\emph{reverse-sound}'' if $\tilde{\rho}^{\varphi}(s(u)) < 0   \implies  \rho^{\varphi}(s(u)) < 0$, for any signal $s(u)$ and specification $\varphi$,
    \item ``\emph{complete}'' if it is both sound and reverse-sound. 
\end{enumerate}
\end{definition}

Since the approximation errors associated with the smooth-min and smooth-max operators in \eqref{Eq:SmoothMinMaxOperatorsPant} vanish as the parameters $k_1, k_2 \rightarrow \infty$ (see Lm. \ref{Lm:SmoothMinMaxOperatorErrorBoundsPant}), the same can be expected from the approximation error associated with the SRM1, i.e., as $k_1, k_2 \rightarrow \infty$, the SRM1 $\tilde{\rho}^{\varphi}(s(u)) \rightarrow \rho^{\varphi}(s(u))$. Therefore, as $k_1, k_2 \rightarrow \infty$, SRM1 becomes sound, reverse-sound and complete. Hence, the SRM1 is called \emph{asymptotically sound}, \emph{asymptotically reverse-sound} and \emph{asymptotically complete}.

\subsection{SRM2: Recovering the Soundness Property \cite{Gilpin2021}}

Even though the aforementioned asymptotic properties of SRM1 are theoretically reassuring, when it comes to implementations, allowing the parameters $k_1, k_2 \rightarrow \infty$ is not practical (see \eqref{Eq:SmoothMinMaxOperatorsPant}). This limitation is partially addressed in \cite{Gilpin2021} by proposing a new SRM (labeled as SRM2) that is sound - irrespective of the parameters $k_1, k_2 \in \R_{>0}$. Moreover, this SRM2 is asymptotically reversed-sound and asymptotically complete. In all, SRM2 \cite{Gilpin2021} recovers the lacking soundness property in SRM1 \cite{Pant2017}.  

In particular, SRM2 is defined using $k_1$-Quasi-$\min$ and $k_2$-Soft-$\max$ operators (see Tab. \ref{Tab:SmoothOperators}):
\begin{equation}\label{Eq:SmoothMinMaxOperatorsGilpin}
    \begin{aligned}
        \widetilde{\min}\{a\} = &-\frac{1}{k_1}\log\left(\sum_{i=1}^m e^{-k_1a_i}\right) \mbox{ and}\\
        \widetilde{\max}\{a\} = &\frac{\sum_{i=1}^m a_i e^{k_2a_i}}{\sum_{i=1}^m e^{k_2a_i}},
    \end{aligned}
\end{equation}
as smooth-min and smooth-max operators, respectively. Similar to before, the above operators are smooth everywhere, and their respective gradients are provided in Tab \ref{Tab:SmoothOperators}. The following lemma establishes the bounds on the approximation errors associated with the operators in \eqref{Eq:SmoothMinMaxOperatorsGilpin}.

\begin{lemma}\label{Lm:SmoothMinMaxOperatorErrorBoundsGilpin}
The smooth-min and smooth-max operators in \eqref{Eq:SmoothMinMaxOperatorsGilpin} satisfy the following respective approximation error bounds:
\begin{equation}\label{Eq:Lm:SmoothMinMaxOperatorErrorBoundsGilpin}
    \begin{aligned}
        (\min\{a\} - \widetilde{\min}\{a\})
        \in&[0,\,\frac{1}{k_1}\log(1+(m-1)e^{-k_1(a_{m-1}-a_m)})]\\
        \subseteq&[0,\,\frac{\log(m)}{k_1}] \mbox{ and }\\
        (\max\{a\} - \widetilde{\max}\{a\})
        \in& [0,\,(a_1-a_m)(1-\frac{1}{\sum_{i=1}^m e^{-k_2(a_1-a_i)}})]\\
        \subseteq&[0,\, \frac{a_1-a_m}{1+\frac{e^{k_2(a_1-a_2)}}{m-1}}].
    \end{aligned}
\end{equation}
\end{lemma}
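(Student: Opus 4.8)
The plan is to treat the two operators separately, exactly as in the proof of Lemma~\ref{Lm:SmoothMinMaxOperatorErrorBoundsPant}. The smooth-min bound here involves the very same $k_1$-Quasi-$\min$ operator that appears in SRM1, so I would simply reproduce the argument of Lemma~\ref{Lm:SmoothMinMaxOperatorErrorBoundsPant} verbatim: expand $\min\{a\} - \widetilde{\min}\{a\} = a_m + \frac{1}{k_1}\log\left(\sum_{i=1}^m e^{-k_1 a_i}\right)$, factor out $e^{-k_1 a_m}$, bound the remaining $m-1$ terms by $e^{-k_1 a_{m-1}}$ using descending order, and then relax to $\frac{\log(m)}{k_1}$; the lower bound $0$ comes from keeping only the $i=m$ term. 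Nothing new is needed for this half.

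The new content is the smooth-max bound for the $k_2$-Soft-$\max$ operator. First I would note that $\widetilde{\max}\{a\}$ is a convex combination of the $a_i$: writing $w_i \triangleq e^{k_2 a_i}/\sum_{j=1}^m e^{k_2 a_j}$, we have $w_i > 0$, $\sum_{i=1}^m w_i = 1$, and $\widetilde{\max}\{a\} = \sum_{i=1}^m w_i a_i$. Since $a$ is in descending order, $\max\{a\} = a_1$, and a weighted average never exceeds the largest entry, so $\max\{a\} - \widetilde{\max}\{a\} = a_1 - \sum_i w_i a_i \ge 0$, which is the lower bound. For the upper bound I would rewrite $a_1 - \widetilde{\max}\{a\} = \sum_{i=1}^m w_i(a_1 - a_i) = \sum_{i=2}^m w_i(a_1 - a_i)$ (the $i=1$ term vanishes), then use $a_1 - a_i \le a_1 - a_m$ for every $i \ge 2$ to get $a_1 - \widetilde{\max}\{a\} \le (a_1-a_m)\sum_{i=2}^m w_i = (a_1-a_m)(1-w_1)$. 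Since $1/w_1 = \sum_{i=1}^m e^{-k_2(a_1-a_i)}$, this is precisely the first (tighter) bound in \eqref{Eq:Lm:SmoothMinMaxOperatorErrorBoundsGilpin}.

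The final step is to loosen $(1-w_1)$ to the stated closed form. Setting $S \triangleq \sum_{i=2}^m e^{-k_2(a_1-a_i)}$, one has $1-w_1 = S/(1+S)$, a function of $S$ that is monotonically increasing. Descending order gives $a_i \le a_2$, hence $e^{-k_2(a_1-a_i)} \le e^{-k_2(a_1-a_2)}$ for $i \ge 2$, so $S \le (m-1)e^{-k_2(a_1-a_2)}$; substituting this into $S/(1+S)$ and simplifying the algebra yields $\dfrac{a_1-a_m}{1 + e^{k_2(a_1-a_2)}/(m-1)}$, the claimed bound. I do not expect a real obstacle here — the whole argument rests on the convex-combination observation, on $\log(\cdot)$ being increasing, and on the descending-order hypothesis — so the only thing to watch is keeping the direction of every inequality consistent, particularly when moving from $w_1$ to $S$ and when the increasing map $S \mapsto S/(1+S)$ is applied.
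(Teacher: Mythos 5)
Your proposal is correct and follows essentially the same route as the paper: the quasi-min half is delegated to Lemma~\ref{Lm:SmoothMinMaxOperatorErrorBoundsPant} verbatim, and for the soft-max the paper performs the identical chain of inequalities (your $(a_1-a_m)(1-w_1)$ is exactly the paper's $(a_1-a_m)\sum_{i=2}^m e^{k_2a_i}/\sum_{i=1}^m e^{k_2a_i}$, and the relaxation via $S\le(m-1)e^{-k_2(a_1-a_2)}$ matches its second step). The explicit convex-combination framing is a slightly cleaner packaging of the same algebra, not a different argument.
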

\begin{proof}
The first result in \eqref{Eq:Lm:SmoothMinMaxOperatorErrorBoundsGilpin} has already been proven in Lm. \ref{Lm:SmoothMinMaxOperatorErrorBoundsPant}. Hence, here we only need to prove the second result in \eqref{Eq:Lm:SmoothMinMaxOperatorErrorBoundsGilpin}. The approximation error associated with the smooth-max operator in \eqref{Eq:SmoothMinMaxOperatorsGilpin} is  
\begin{align}
    (\max\{a\} -& \widetilde{\max}\{a\}) 
    = a_1  - \frac{\sum_{i=1}^m a_i e^{k_2a_i}}{\sum_{i=1}^m e^{k_2a_i}} \nonumber \\
    =& \frac{a_1 \sum_{i=2}^m e^{k_2a_i} - \sum_{i=2}^m a_i e^{k_2a_i}}{\sum_{i=1}^m e^{k_2a_i}} \nonumber \\
    \leq& \frac{a_1 \sum_{i=2}^m e^{k_2a_i} - a_m\sum_{i=2}^m e^{k_2a_i}}{\sum_{i=1}^m e^{k_2a_i}} \nonumber \\
    =& (a_1-a_m)\frac{\sum_{i=2}^m e^{k_2a_i}}{\sum_{i=1}^m e^{k_2a_i}} \label{Eq:Lm:SmoothMinMaxOperatorErrorBoundsGilpinStep0} \\
    =& (a_1-a_m)\left(1-\frac{1}{\sum_{i=1}^m e^{-k_2(a_1-a_i)}}\right). \label{Eq:Lm:SmoothMinMaxOperatorErrorBoundsGilpinStep1}
\end{align}
An alternative lower bound can be obtained by continuing from \eqref{Eq:Lm:SmoothMinMaxOperatorErrorBoundsGilpinStep0} as
\begin{align}
    (\max\{a\} - \widetilde{\max}\{a\})  
    \leq& \frac{(a_1-a_m)}{\frac{\sum_{i=1}^m e^{k_2a_i}}{\sum_{i=2}^m e^{k_2a_i}}} 
    = \frac{(a_1-a_m)}{1+\frac{e^{k_2a_1}}{\sum_{i=2}^m e^{k_2a_i}}} \nonumber\\
    \leq& \frac{(a_1-a_m)}{1+\frac{e^{k_2(a_1-a_2)}}{m-1}}. \label{Eq:Lm:SmoothMinMaxOperatorErrorBoundsGilpinStep2}
\end{align}
All the inequalities used in the above derivations exploit the monotonicity property the exponential function. Using the same property, we can also write
\begin{align}
    (\max\{a\} - \widetilde{\max}\{a\})
     =&\ a_1  - \frac{\sum_{i=1}^m a_i e^{k_2a_i}}{\sum_{i=1}^m e^{k_2a_i}} \nonumber \\
    \geq&\ a_1  - \frac{a_1\sum_{i=1}^m e^{k_2a_i}}{\sum_{i=1}^m e^{k_2a_i}}\nonumber\\
    =&\ 0.
    \label{Eq:Lm:SmoothMinMaxOperatorErrorBoundsGilpinStep3}
\end{align}
The relationships in
\eqref{Eq:Lm:SmoothMinMaxOperatorErrorBoundsGilpinStep1}, \eqref{Eq:Lm:SmoothMinMaxOperatorErrorBoundsGilpinStep2} and \eqref{Eq:Lm:SmoothMinMaxOperatorErrorBoundsGilpinStep3} prove the second result in \eqref{Eq:Lm:SmoothMinMaxOperatorErrorBoundsGilpin}, and thus, the proof is complete.
\end{proof}

Recall that the operators in \eqref{Eq:SmoothMinMaxOperatorsGilpin} define the SRM2 $\tilde{\rho}^{\varphi}(s(u))$ via Def. \ref{Def:SmoothSTLRobustSemantics}. Now, the following lemma can be established regarding the approximation error associated with the SRM2.

\begin{lemma}\label{Lm:SRM2ApproxError}
The approximation error associated with the SRM2 $\tilde{\rho}^{\varphi}(s(u))$ is bounded such that
\begin{equation}\label{Eq:Lm:SRM2ApproxError}
    (\rho^{\varphi}(s(u)) - \tilde{\rho}^{\varphi}(s(u))) \in [0,U_k^\varphi],
\end{equation}
where $U_k^\varphi$ is some positive real number dependent on the specification $\varphi$ and the parameters $k = [k_1,k_2]$, and $U_k^\varphi \rightarrow 0$ as $k_1,k_2 \rightarrow \infty$.  
\end{lemma}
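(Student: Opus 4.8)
The plan is to prove the statement by structural induction on the STL specification $\varphi$. Since $\varphi$ may be assumed to be in disjunctive normal form (negation only in front of predicates, as noted after Def.~\ref{Def:SmoothSTLRobustSemantics}), the only cases to treat are $\pi$, $\neg\pi$, $\varphi_1\wedge\varphi_2$, $\varphi_1\vee\varphi_2$, $\mathbf{F}_{[t_1,t_2]}\varphi$, $\mathbf{G}_{[t_1,t_2]}\varphi$ and $\varphi_1\mathbf{U}_{[t_1,t_2]}\varphi_2$. The induction hypothesis is twofold: for every suffix time $t$, (i) $\tilde{\rho}^{\varphi}((s,t)) \le \rho^{\varphi}((s,t))$, i.e., SRM2 is an \emph{underapproximation} of the conventional robustness measure (this already gives the lower end $0$ of the claimed error interval), and (ii) $\rho^{\varphi}((s,t)) - \tilde{\rho}^{\varphi}((s,t)) \le U_k^\varphi$ for a nonnegative quantity $U_k^\varphi$ that is assembled recursively over the parse tree of $\varphi$, is finite, and tends to $0$ as $k_1,k_2\to\infty$.

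Two ingredients drive the inductive step. First, the two operators in \eqref{Eq:SmoothMinMaxOperatorsGilpin} are themselves underapproximations with vanishing error: by Lm.~\ref{Lm:SmoothMinMaxOperatorErrorBoundsGilpin} (together with Lm.~\ref{Lm:SmoothMinMaxOperatorErrorBoundsPant} for the min case) we have $0 \le \min\{a\} - \widetilde{\min}\{a\}$ and $0 \le \max\{a\} - \widetilde{\max}\{a\}$, with both gaps bounded by expressions in $a$ that vanish as $k_1,k_2\to\infty$. Second, the exact $\min$ and $\max$ are non-expansive in the supremum norm: $|\min\{a\}-\min\{b\}| \le \max_i |a_i-b_i|$, and likewise for $\max$. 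These two facts, applied node by node, let underapproximation and vanishing error propagate up the parse tree.

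The inductive cases are then uniform. For $\varphi=\varphi_1\wedge\varphi_2$ write $\rho^{\varphi} - \tilde{\rho}^{\varphi} = \big(\min\{\rho^{\varphi_1},\rho^{\varphi_2}\} - \min\{\tilde{\rho}^{\varphi_1},\tilde{\rho}^{\varphi_2}\}\big) + \big(\min\{\tilde{\rho}^{\varphi_1},\tilde{\rho}^{\varphi_2}\} - \widetilde{\min}\{\tilde{\rho}^{\varphi_1},\tilde{\rho}^{\varphi_2}\}\big)$ (all evaluated at the same suffix): the first bracket is nonnegative by monotonicity of $\min$ together with hypothesis (i), and is at most $\max\{U_k^{\varphi_1},U_k^{\varphi_2}\}$ by non-expansiveness; the second bracket is the operator-level error from Lm.~\ref{Lm:SmoothMinMaxOperatorErrorBoundsGilpin} with $m=2$. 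Hypothesis (i) is inherited since $\widetilde{\min}\{\tilde{\rho}^{\varphi_1},\tilde{\rho}^{\varphi_2}\} \le \min\{\tilde{\rho}^{\varphi_1},\tilde{\rho}^{\varphi_2}\} \le \min\{\rho^{\varphi_1},\rho^{\varphi_2}\}$. The cases $\varphi_1\vee\varphi_2$, $\mathbf{G}_{[t_1,t_2]}\varphi$ and $\mathbf{F}_{[t_1,t_2]}\varphi$ are identical, with $\widetilde{\max}$ replacing $\widetilde{\min}$ where appropriate and $m$ the number of arguments; the case $\varphi_1\mathbf{U}_{[t_1,t_2]}\varphi_2$ follows by composing these estimates through its two nested $\widetilde{\min}$ layers and the outer $\widetilde{\max}$ layer, each step being underapproximating and non-expansive, so that the errors simply accumulate. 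The base cases $\pi$ and $\neg\pi$ are exact, with error $0$. Collecting the per-node contributions defines $U_k^\varphi \ge 0$ (the explicit recursion is deferred to Section~\ref{Sec:ApproximationErrors}); since every operator-level term vanishes as $k_1,k_2\to\infty$ and the tree propagation is non-expansive, $U_k^\varphi \to 0$ as $k_1,k_2\to\infty$, which establishes \eqref{Eq:Lm:SRM2ApproxError}.

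I expect the delicate point to be the uniformity and the limiting behaviour of the operator-level $\widetilde{\max}$ error, rather than the bookkeeping of the induction. The bound in Lm.~\ref{Lm:SmoothMinMaxOperatorErrorBoundsGilpin} scales with the spread $a_1-a_m$ of the smooth robustness values feeding a given node, so finiteness of $U_k^\varphi$ relies on those values being bounded (which follows from the continuity and differentiability assumptions on $f,g,\mu^\pi$ over the bounded horizon $T$), and the claim $U_k^\varphi\to 0$ when several arguments tie at the maximum is cleanest if one uses the exact identity $\max\{a\}-\widetilde{\max}\{a\}=\sum_i (a_1-a_i)w_i$ with softmax weights $w_i$ that concentrate, as $k_2\to\infty$, on the argmax set where $a_1-a_i=0$, instead of the crude closed-form upper bound.
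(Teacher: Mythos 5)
Your proof is correct and follows essentially the same route as the paper's: establish that the quasi-min and soft-max operators underapproximate (giving the lower bound $0$ by propagating $\tilde{\rho}^{\varphi}\le\rho^{\varphi}$ through the recursive semantics) and that their per-node errors are bounded and vanish, then accumulate these over the parse tree to get $U_k^\varphi$. Your version is in fact more careful than the paper's terse argument — the explicit two-term decomposition at each node anticipates the machinery of Lemma~\ref{Lm:Fundamentalinequality} and Theorem~\ref{Th:STLErrorSemantics}, and your observation that the crude closed-form soft-max bound does not vanish under ties (so the limit claim needs the exact weighted-sum identity) is a genuine subtlety the paper glosses over.
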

\begin{proof}
Lemma \ref{Lm:SmoothMinMaxOperatorErrorBoundsGilpin} implies that for any set of real numbers $a$, the smooth operators in \eqref{Eq:SmoothMinMaxOperatorsGilpin} satisfy $\widetilde{\min}\{a\} \leq \min\{a\}$ and $\widetilde{\max}\{a\} \leq \max\{a\}$. Note also that SRM2 $\tilde{\rho}^{\varphi}(s(u))$ is computed recursively using the semantics in Def. \ref{Def:SmoothSTLRobustSemantics} while $\rho^{\varphi}(s(u))$ is computed recursively using the semantics in Def. \ref{Def:STLRobustSemantics}. Then, from comparing the respective semantics in Defs. \ref{Def:STLRobustSemantics} and \ref{Def:SmoothSTLRobustSemantics}, it is easy to see that $\tilde{\rho}^{\varphi}(s(u)) \leq \rho^{\varphi}(s(u))$ as   
\begin{itemize}
    \item $\tilde{\rho}^\pi((s,t)) = \rho^\pi((s,t))$
    \item $\tilde{\rho}^{\neg\pi}((s,t)) = \rho^{\neg\pi}((s,t))$
    \item $\tilde{\rho}^{\varphi_1 \land \varphi_2}((s,t)) \leq \rho^{\varphi_1 \land \varphi_2}((s,t))$
    \item $\tilde{\rho}^{\varphi_1 \lor \varphi_2}((s,t)) \leq \rho^{\varphi_1 \lor \varphi_2}((s,t))$ 
    \item $\tilde{\rho}^{\mathbf{F}_{[t_1,t_2]}\varphi}((s,t))\leq \rho^{\mathbf{F}_{[t_1,t_2]}\varphi}((s,t))$
    \item $\tilde{\rho}^{\mathbf{G}_{[t_1,t_2]}\varphi}((s,t))\leq \rho^{\mathbf{G}_{[t_1,t_2]}\varphi}((s,t))$
    \item $\tilde{\rho}^{\varphi_1\mathbf{U}_{[t_1,t_2]}\varphi_2}((s,t))\leq \rho^{\varphi_1\mathbf{U}_{[t_1,t_2]}\varphi_2}((s,t))$.
\end{itemize}
Moreover, according to Lm. \ref{Lm:SmoothMinMaxOperatorErrorBoundsGilpin}, the approximation errors associated with the operators in \eqref{Eq:SmoothMinMaxOperatorsGilpin} are bounded and also vanishes as $k_1,k_2\rightarrow \infty$. Therefore, the approximation error associated with the SRM2 should follow \eqref{Eq:Lm:SRM2ApproxError}. 
\end{proof}

The above lemma implies that the actual (non-smooth) robustness measure $\rho^{\varphi}(s(u))$ with respect to the SRM2 $\tilde{\rho}^{\varphi}(s(u))$ is placed such that 
\begin{equation*}
    \tilde{\rho}^{\varphi}(s(u)) \leq \rho^{\varphi}(s(u)) \leq \tilde{\rho}^{\varphi}(s(u)) + U_k^\varphi.
\end{equation*}
Therefore, in this scenario (SRM2), achieving $\tilde{\rho}^{\varphi}(s(u)) > 0$  guarantees the satisfaction of the specification, i.e.,
\begin{equation*}
    \tilde{\rho}^{\varphi}(s(u)) > 0 \implies \rho^{\varphi}(s(u)) > 0 \implies s(u) \vDash \varphi.
\end{equation*}
This indeed is the soundness property promised earlier. Hence, it is clear that the SRM2 proposed in \cite{Gilpin2021} is sound and also asymptotically reverse-sound and asymptotically complete.

\subsection{SRM3: Recovering the reverse-soundness}

Even though having a sound SRM (e.g., SRM2) is important in \emph{verification} type problems, in \emph{falsification} type problems (where the aim is to find a control that violates a given specification), it is critical to have a reverse-sound SRM. 
Further, since a reverse-sound SRM will, by definition, over-approximate the actual robustness measure (as opposed to a sound SRM that under-approximates the actual robustness measure), one can expect to use a sound SRM together with a reverse-sound SRM (e.g., use one as a ``boosting'' objective function for the other \cite{Welikala2019J1}) to achieve better and more decisive results. Motivated by these prospects, in the same spirit of the SRM2, here we propose a new SRM (labeled as SRM3) that recovers the lacking reverse-soundness property in SRM1 \cite{Pant2017}, while also preserving the asymptotic soundness and asymptotic completeness properties in SRM1 \cite{Pant2017}.

The proposed SRM3 is defined using $k_1$-Soft-$\min$ and $k_2$-Quasi-$\max$ operators (see Tab. \ref{Tab:SmoothOperators}):   
\begin{equation}\label{Eq:SmoothMinMaxOperatorsGilpinReversed}
    \begin{aligned}
        \widetilde{\min}\{a\} = &\frac{\sum_{i=1}^m a_i e^{-k_1a_i}}{\sum_{i=1}^m e^{-k_1a_i}} \mbox{ and}\\
        \widetilde{\max}\{a\} = &\frac{1}{k_2}\log\left(\sum_{i=1}^m e^{k_2a_i}\right),
    \end{aligned}
\end{equation}
as smooth-min and smooth-max operators, respectively. Similar to before, the above operators are smooth everywhere, and their respective gradients can be found in Tab. \ref{Tab:SmoothOperators}. The following lemma establishes the bounds on the approximation error associates with the operators in \eqref{Eq:SmoothMinMaxOperatorsGilpinReversed}.

\begin{lemma}\label{Lm:SmoothMinMaxOperatorErrorBoundsGilpinReversed}
The smooth-min and smooth-max operators in \eqref{Eq:SmoothMinMaxOperatorsGilpinReversed} satisfy the following respective approximation error bounds:
\begin{equation}\label{Eq:Lm:SmoothMinMaxOperatorErrorBoundsGilpinReversed}
    \begin{aligned}
        (\min\{a\} - \widetilde{\min}\{a\}) \in&[-(a_1-a_m)(1-\frac{1}{\sum_{i=1}^m e^{-k_1(a_i-a_m)}}),\,0]\\ \subseteq&[-\frac{a_1-a_m}{1+\frac{e^{k_1(a_{m-1}-a_m)}}{m-1}},\,0] \mbox{ and }\\
        (\max\{a\} - \widetilde{\max}\{a\}) \in&[-\frac{1}{k_2}\log(1+(m-1)e^{-k_2(a_1-a_2)}),\,0]\\ \subseteq& [-\frac{\log(m)}{k_2},\,0]
    \end{aligned}
\end{equation}
\end{lemma}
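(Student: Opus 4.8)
The plan is to first notice that the second bound in \eqref{Eq:Lm:SmoothMinMaxOperatorErrorBoundsGilpinReversed} — the one governing the $k_2$-Quasi-$\max$ operator — is \emph{verbatim} the second bound already established in Lemma~\ref{Lm:SmoothMinMaxOperatorErrorBoundsPant}, since SRM3 reuses exactly the same $k_2$-Quasi-$\max$ operator as SRM1. Hence no new work is needed there, and the whole proof reduces to establishing the first bound, i.e.\ the approximation error band of the $k_1$-Soft-$\min$ operator.

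For that, the cleanest route I would take is a reflection argument that reduces the claim to the $k_2$-Soft-$\max$ bound already proven in Lemma~\ref{Lm:SmoothMinMaxOperatorErrorBoundsGilpin}. Set $b_j \triangleq -a_{m+1-j}$ for $j=1,\ldots,m$, so that $b=\{b_1,\ldots,b_m\}$ is again in descending order, $\min\{a\}=a_m=-\max\{b\}$, and $\widetilde{\min}\{a\}=\tfrac{\sum_i a_i e^{-k_1 a_i}}{\sum_i e^{-k_1 a_i}}=-\tfrac{\sum_j b_j e^{k_1 b_j}}{\sum_j e^{k_1 b_j}}$, i.e.\ $\widetilde{\min}\{a\}$ is the negative of the $k_1$-Soft-$\max$ of $b$. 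Therefore $\min\{a\}-\widetilde{\min}\{a\}=-(\max\{b\}-\widetilde{\max}\{b\})$, where the right-hand $\widetilde{\max}$ is the $k_1$-Soft-$\max$, and I can apply the second bound of Lemma~\ref{Lm:SmoothMinMaxOperatorErrorBoundsGilpin} (with its parameter renamed $k_1$ and its set renamed $b$). It then remains only to translate the $b$-indexed quantities back: $b_1-b_m=a_1-a_m$, $b_1-b_2=a_{m-1}-a_m$, and $\sum_i e^{-k_1(b_1-b_i)}=\sum_i e^{-k_1(a_i-a_m)}$ after reindexing. Substituting these into $[0,\,(b_1-b_m)(1-\tfrac{1}{\sum_i e^{-k_1(b_1-b_i)}})]\subseteq[0,\,\tfrac{b_1-b_m}{1+e^{k_1(b_1-b_2)}/(m-1)}]$ and negating yields exactly the claimed interval.

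Alternatively, and perhaps preferable for a self-contained presentation, I would carry out the estimate directly, mirroring the proof of Lemma~\ref{Lm:SmoothMinMaxOperatorErrorBoundsGilpin}: write $\widetilde{\min}\{a\}-\min\{a\}=\tfrac{\sum_{i=1}^{m-1}(a_i-a_m)e^{-k_1 a_i}}{\sum_{i=1}^m e^{-k_1 a_i}}$, which is nonnegative termwise (this gives the upper endpoint $0$ after negating); then bound $a_i-a_m\le a_1-a_m$ in the numerator to obtain the $(a_1-a_m)(1-\tfrac{1}{\sum_i e^{-k_1(a_i-a_m)}})$ estimate; and finally use $\sum_{i=1}^{m-1}e^{-k_1 a_i}\le(m-1)e^{-k_1 a_{m-1}}$ (valid because $a$ is in descending order) to reach the coarser bound $\tfrac{a_1-a_m}{1+e^{k_1(a_{m-1}-a_m)}/(m-1)}$.

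The only step requiring care — the ``main obstacle,'' such as it is — is the last one in the coarse estimate: one cannot simply replace $\sum_{i=1}^{m-1}e^{-k_1 a_i}$ by its upper bound inside the fraction, since that sum appears in both numerator and denominator. The correct justification is that $N\mapsto\tfrac{N}{N+x}$ is increasing for the fixed $x=e^{-k_1 a_m}>0$, so enlarging $N$ to $(m-1)e^{-k_1 a_{m-1}}$ only enlarges the ratio. This same subtlety already appears in step \eqref{Eq:Lm:SmoothMinMaxOperatorErrorBoundsGilpinStep2} of Lemma~\ref{Lm:SmoothMinMaxOperatorErrorBoundsGilpin}, so under the reflection route it comes for free; otherwise it must be stated explicitly. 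Everything else is routine algebra and monotonicity of the exponential.
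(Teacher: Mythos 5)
Your proposal is correct and follows essentially the same route as the paper: the quasi-max bound is inherited verbatim from Lemma~\ref{Lm:SmoothMinMaxOperatorErrorBoundsPant}, and the soft-min bound is obtained by mirroring the soft-max argument of Lemma~\ref{Lm:SmoothMinMaxOperatorErrorBoundsGilpin} (which is exactly what the paper's proof instructs, without writing out the details). Your reflection map $b_j=-a_{m+1-j}$ and your explicit note on the monotonicity of $N\mapsto N/(N+x)$ are both sound and, if anything, make the argument more self-contained than the paper's.
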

\begin{proof}
The first result in \eqref{Eq:Lm:SmoothMinMaxOperatorErrorBoundsGilpinReversed} can be proved by following the same steps used in the proof of the second statement in \eqref{Eq:Lm:SmoothMinMaxOperatorErrorBoundsGilpin} in  Lm. \ref{Lm:SmoothMinMaxOperatorErrorBoundsGilpin}. The second result in \eqref{Eq:Lm:SmoothMinMaxOperatorErrorBoundsGilpinReversed} has already been proven in Lm. \ref{Lm:SmoothMinMaxOperatorErrorBoundsPant}. 
\end{proof}

In parallel to Lm. \ref{Lm:SRM2ApproxError}, the following lemma can be established regarding the approximation error associated with the SRM3 $\tilde{\rho}^{\varphi}(s(u))$ defined by the operators in \eqref{Eq:SmoothMinMaxOperatorsGilpinReversed}. 

\begin{lemma}\label{Lm:SRM3ApproxError}
The approximation error associated with the SRM3 $\tilde{\rho}^{\varphi}(s(u))$ is bounded such that
\begin{equation}\label{Eq:Lm:SRM3ApproxError}
    (\rho^{\varphi}(s(u)) - \tilde{\rho}^{\varphi}(s(u))) \in [L_k^\varphi,0],
\end{equation}
where $L_k^\varphi$ is some negative real number dependent on the specification $\varphi$ and the parameters $k = [k_1,k_2]$, and $L_k^\varphi \rightarrow 0$ as $k_1,k_2 \rightarrow \infty$.  
\end{lemma}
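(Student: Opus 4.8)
The plan is to follow the proof of Lemma~\ref{Lm:SRM2ApproxError} almost verbatim, but with every inequality reversed, since SRM3 uses the ``reversed'' pair of smooth operators relative to SRM2. First I would read off from Lemma~\ref{Lm:SmoothMinMaxOperatorErrorBoundsGilpinReversed} that, for every finite set of reals $a$, the $k_1$-Soft-$\min$ and $k_2$-Quasi-$\max$ operators in~\eqref{Eq:SmoothMinMaxOperatorsGilpinReversed} \emph{over}-approximate their exact counterparts, i.e.\ $\widetilde{\min}\{a\}\ge\min\{a\}$ and $\widetilde{\max}\{a\}\ge\max\{a\}$, and moreover that these per-operator errors are bounded and shrink to $0$ as $k_1,k_2\to\infty$.

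Next I would prove $\tilde\rho^{\varphi}(s(u))\ge\rho^{\varphi}(s(u))$ by structural induction on $\varphi$; this recursion is finite because $\varphi$ is a bounded-time specification given in disjunctive normal form. The base cases $\varphi=\pi$ and $\varphi=\neg\pi$ give equality directly from Defs.~\ref{Def:STLRobustSemantics} and~\ref{Def:SmoothSTLRobustSemantics}. For $\varphi=\varphi_1\land\varphi_2$, Step~1 gives $\widetilde{\min}\{\tilde\rho^{\varphi_1},\tilde\rho^{\varphi_2}\}\ge\min\{\tilde\rho^{\varphi_1},\tilde\rho^{\varphi_2}\}$, and the monotonicity of the exact $\min$ combined with the induction hypothesis gives $\min\{\tilde\rho^{\varphi_1},\tilde\rho^{\varphi_2}\}\ge\min\{\rho^{\varphi_1},\rho^{\varphi_2}\}=\rho^{\varphi_1\land\varphi_2}$; the cases $\varphi_1\lor\varphi_2$, $\mathbf{F}_{[t_1,t_2]}\varphi$ and $\mathbf{G}_{[t_1,t_2]}\varphi$ are identical in form, and $\varphi_1\mathbf{U}_{[t_1,t_2]}\varphi_2$ follows by applying the same two-step argument first to the inner $\widetilde{\min}$ operators and then to the outer $\widetilde{\max}$. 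This establishes $\rho^{\varphi}(s(u))-\tilde\rho^{\varphi}(s(u))\le 0$, the right endpoint of~\eqref{Eq:Lm:SRM3ApproxError}. For the left endpoint, I would note that each of the finitely many applications of $\widetilde{\min}$/$\widetilde{\max}$ in the recursion tree of $\varphi$ contributes a bounded negative error by Lemma~\ref{Lm:SmoothMinMaxOperatorErrorBoundsGilpinReversed}; aggregating these contributions along the tree yields a finite negative $L_k^{\varphi}$ depending only on $\varphi$ and $k=[k_1,k_2]$, and since every per-operator bound vanishes as $k_1,k_2\to\infty$, so does $L_k^{\varphi}$. The explicit construction of $L_k^{\varphi}$ I would defer to Section~\ref{Sec:ApproximationErrors}.

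The main point requiring care (rather than a genuine obstacle) is that the smooth operators themselves are not monotone --- the $k_1$-Soft-$\min$ gradient in Tab.~\ref{Tab:SmoothOperators} can be negative --- so one cannot plug the inductive inequalities $\tilde\rho^{\varphi_i}\ge\rho^{\varphi_i}$ straight into $\widetilde{\min}$ or $\widetilde{\max}$. The remedy, which is already used implicitly in the proof of Lemma~\ref{Lm:SRM2ApproxError}, is to first drop from the smooth operator to the exact one via the operator error bound of Step~1, and only afterwards invoke monotonicity of the exact $\min$/$\max$ together with the induction hypothesis. Everything else is a direct mirror of the SRM2 argument.
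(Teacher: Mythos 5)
Your proposal is correct and takes essentially the same route as the paper, which simply states that the proof mirrors that of Lemma~\ref{Lm:SRM2ApproxError} with the inequalities reversed (over- instead of under-approximation via Lemma~\ref{Lm:SmoothMinMaxOperatorErrorBoundsGilpinReversed}) and omits the details. Your explicit two-step ordering --- first pass from the smooth operator to the exact one via the per-operator error bound, then invoke monotonicity of the exact $\min$/$\max$ with the induction hypothesis, precisely because the soft-$\min$ is not monotone in its arguments --- is a point the paper's SRM2 argument leaves implicit, and it is handled correctly here.
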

\begin{proof}
The proof follows the same steps as that of Lm. \ref{Lm:SRM2ApproxError}, and is, therefore, omitted.
\end{proof}

The above lemma implies that the actual (non-smooth) robustness measure $\rho^{\varphi}(s(u))$ lies in the range:
\begin{equation*}
    \tilde{\rho}^{\varphi}(s(u)) + L_k^\varphi \leq \rho^{\varphi}(s(u)) \leq \tilde{\rho}^{\varphi}(s(u)),
\end{equation*}
Hence, in this scenario (SRM3), achieving $\tilde{\rho}^{\varphi}(s(u)) < 0$ implies a violation of the specification (i.e., $s(u) \nvDash \varphi$). i.e.,
\begin{equation*}
    \tilde{\rho}^{\varphi}(s(u)) < 0 \implies \rho^{\varphi}(s(u)) < 0 \implies s(u) \nvDash \varphi.
\end{equation*}
This indeed is the reverse-soundness property promised earlier. Therefore, it is clear that the proposed SRM3 is reverse sound and also asymptotically sound and asymptotically complete.

\subsection{SRM4: Lifting the uniform robustness}

As a complementary approach for the SRM1 \cite{Pant2017}, we now propose a new SRM (labeled SRM4). In particular, this SRM4 is defined using the $k_1$-Soft-$\min$ and $k_2$-Soft-$\max$ operators (see Tab \ref{Tab:SmoothOperators}):
\begin{equation}\label{Eq:SmoothMinMaxOperatorsPantReversed}
    \begin{aligned}
        \widetilde{\min}\{a\} = &\frac{\sum_{i=1}^m a_i e^{-k_1a_i}}{\sum_{i=1}^m e^{-k_1a_i}} \mbox{ and}\\
        \widetilde{\max}\{a\} = &\frac{\sum_{i=1}^m a_i e^{k_2a_i}}{\sum_{i=1}^m e^{k_2a_i}},
    \end{aligned}
\end{equation}
as smooth-min and smooth-max operators, respectively. 

Compared to the smooth-min and smooth-max operators used in SRM1 (given in \eqref{Eq:SmoothMinMaxOperatorsPant}), the operators in \eqref{Eq:SmoothMinMaxOperatorsPantReversed} have a special property: when $k_1,k_2 \rightarrow 0$, both operators converge to the arithmetic mean operator. According to the findings in \cite{Mehdipour2019}, optimizing a robustness measure computed based on mean-like (e.g., arithmetic mean) operators rather than min/max-like (e.g., smooth-min/max) operators can lead to more uniformly robust solutions. This is because, while the latter (min/max-based) approach optimizes the minimum margin of satisfaction of all the specification components (e.g., predicates), the former (mean-based) approach optimizes the mean margin of satisfaction of all the specification components. Hence, one can expect to use the SRM1  (with moderate $k_1,k_2$ values) together with the SRM4 (with low $k_1,k_2$ values) to reach more uniformly robust solutions (e.g., use SRM4 as a ``boosting'' objective function for the SRM1 \cite{Welikala2019J1}). Even when used alone, intuitively, the SRM4 (with moderate $k_1,k_2$ values) can be expected to provide more uniformly robust solutions.

Similar to previous SRMs, the above operators are smooth everywhere, and their respective gradients are provided in Tab \ref{Tab:SmoothOperators}. The following lemma establishes the bounds on the approximation errors associated with the operators in \eqref{Eq:SmoothMinMaxOperatorsPantReversed}.

\begin{lemma}\label{Lm:SmoothMinMaxOperatorErrorBoundsPantReversed}
The smooth-min and smooth-max operators in \eqref{Eq:SmoothMinMaxOperatorsPantReversed} satisfy the following respective approximation error bounds:
\begin{equation}\label{Eq:Lm:SmoothMinMaxOperatorErrorBoundsPantReversed}
    \begin{aligned}
        (\min\{a\} - \widetilde{\min}\{a\})
        \in&[-(a_1-a_m)(1-\frac{1}{\sum_{i=1}^m e^{-k_1(a_i-a_m)}}),\,0]\\ \subseteq&[-\frac{a_1-a_m}{1+\frac{e^{k_1(a_{m-1}-a_m)}}{m-1}},\,0] \mbox{ and }\\
        (\max\{a\} - \widetilde{\max}\{a\})
        \in& [0,\,(a_1-a_m)(1-\frac{1}{\sum_{i=1}^m e^{-k_2(a_1-a_i)}})]\\
        \subseteq&[0,\, \frac{a_1-a_m}{1+\frac{e^{k_2(a_1-a_2)}}{m-1}}].
    \end{aligned}
\end{equation}
\end{lemma}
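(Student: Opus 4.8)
The key observation is that the two error bounds in \eqref{Eq:Lm:SmoothMinMaxOperatorErrorBoundsPantReversed} have already appeared verbatim elsewhere in the excerpt. The smooth-max operator in \eqref{Eq:SmoothMinMaxOperatorsPantReversed} — the $k_2$-Soft-$\max$ — is exactly the smooth-max used in SRM2, so the second statement in \eqref{Eq:Lm:SmoothMinMaxOperatorErrorBoundsPantReversed} is literally the second statement proved in Lemma \ref{Lm:SmoothMinMaxOperatorErrorBoundsGilpin}; I would simply cite that. Likewise, the smooth-min operator in \eqref{Eq:SmoothMinMaxOperatorsPantReversed} — the $k_1$-Soft-$\min$ — is exactly the smooth-min used in SRM3, so the first statement in \eqref{Eq:Lm:SmoothMinMaxOperatorErrorBoundsPantReversed} is the first statement of Lemma \ref{Lm:SmoothMinMaxOperatorErrorBoundsGilpinReversed}, which in turn was obtained ``by following the same steps used in the proof of the second statement in \eqref{Eq:Lm:SmoothMinMaxOperatorErrorBoundsGilpin}.'' So the cleanest proof is one sentence: both bounds have already been established.

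If I wanted to make the Soft-$\min$ bound self-contained rather than appealing to a chain of references, the plan is to mirror the Soft-$\max$ computation with the roles of largest/smallest element and the sign of the exponent flipped. Writing $\widetilde{\min}\{a\} = \left(\sum_{i=1}^m a_i e^{-k_1 a_i}\right)/\left(\sum_{i=1}^m e^{-k_1 a_i}\right)$ and using $\min\{a\} = a_m$, one forms
\begin{equation*}
\min\{a\} - \widetilde{\min}\{a\} = \frac{a_m \sum_{i=1}^{m-1} e^{-k_1 a_i} - \sum_{i=1}^{m-1} a_i e^{-k_1 a_i}}{\sum_{i=1}^m e^{-k_1 a_i}} = -\frac{\sum_{i=1}^{m-1}(a_i - a_m)e^{-k_1 a_i}}{\sum_{i=1}^m e^{-k_1 a_i}}.
\end{equation*}
Bounding each $a_i - a_m \le a_1 - a_m$ in the numerator gives the stated upper form $-(a_1-a_m)\bigl(1 - 1/\sum_{i=1}^m e^{-k_1(a_i-a_m)}\bigr)$ after factoring $e^{-k_1 a_m}$ through, and the crude relaxation $\sum_{i=2}^{m-1} e^{-k_1 a_i} \le (m-2) e^{-k_1 a_{m-1}}$ together with keeping the dominant $i=m-1$ term yields the looser closed-form bound $-(a_1-a_m)/\bigl(1 + \tfrac{e^{k_1(a_{m-1}-a_m)}}{m-1}\bigr)$. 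The lower bound $0$ follows since every summand $(a_i-a_m)e^{-k_1 a_i}$ in the numerator is nonnegative, so the whole error is $\le 0$.

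There is essentially no obstacle here: the lemma is a bookkeeping consequence of work already done, and the only mild subtlety is making sure the descending-order convention ($a_1$ largest, $a_m$ smallest) is applied consistently when transporting the Soft-$\max$ argument to the Soft-$\min$ case — the largest gap $a_1 - a_m$ is the same quantity in both, but the ``second extreme'' switches from $a_2$ to $a_{m-1}$ and the exponent sign flips, which is exactly the asymmetry visible in comparing \eqref{Eq:Lm:SmoothMinMaxOperatorErrorBoundsGilpin} with \eqref{Eq:Lm:SmoothMinMaxOperatorErrorBoundsGilpinReversed}. I would therefore write the proof as: the second bound is the smooth-max bound of Lemma \ref{Lm:SmoothMinMaxOperatorErrorBoundsGilpin}, and the first bound is the smooth-min bound of Lemma \ref{Lm:SmoothMinMaxOperatorErrorBoundsGilpinReversed} (itself a sign-flipped copy of that same computation), hence both hold.
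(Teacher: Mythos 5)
Your proposal is correct and takes exactly the paper's route: the paper's own proof is the one-sentence observation that the soft-max bound was already established in Lemma \ref{Lm:SmoothMinMaxOperatorErrorBoundsGilpin} and the soft-min bound in Lemma \ref{Lm:SmoothMinMaxOperatorErrorBoundsGilpinReversed}. Your optional self-contained derivation of the soft-min bound is also sound and consistent with the computation the paper performs for the soft-max case.
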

\begin{proof}
The both results in \eqref{Eq:Lm:SmoothMinMaxOperatorErrorBoundsPantReversed} has already been proven in Lms. \ref{Lm:SmoothMinMaxOperatorErrorBoundsGilpin} and \ref{Lm:SmoothMinMaxOperatorErrorBoundsGilpinReversed}.
\end{proof}

From the above lemma, it is clear that the approximation errors associated with the smooth-min and smooth-max operators in \eqref{Eq:SmoothMinMaxOperatorsPantReversed} are bounded and also vanishes as $k_1,k_2\rightarrow\infty$. Therefore, similar to the case with the SRM1, the approximation error associated with the SRM4 can be bounded as
\begin{equation}\label{Eq:Lm:SRM4ApproxError}
    (\rho^{\varphi}(s(u)) - \tilde{\rho}^{\varphi}(s(u)) \in [L_k^\varphi,U_k^\varphi], 
\end{equation}
i.e., the actual robustness measure $\rho^{\varphi}(s(u))$ is such that
\begin{equation}
    \tilde{\rho}^{\varphi}(s(u)) + L_k^\varphi \leq \rho^{\varphi}(y,x,u) \leq \tilde{\rho}^{\varphi}(s(u)) + U_k^\varphi,
\end{equation}
where $L_k^\varphi$ and $U_k^\varphi$ are two real numbers dependent on the STL specification $\varphi$ and the parameters $k = [k_1,k_2]$. The exact details on how $L_k^\varphi$ and $U_k^\varphi$ can be computed are provided in Section \ref{Sec:ApproximationErrors}. Regardless, it should be clear that $k_1, k_2 \rightarrow \infty \implies L_k^\varphi, U_k^\varphi \rightarrow 0$. Therefore, similar to the SRM1, the SRM4 is asymptotically sound, asymptotically reverse-sound and asymptotically complete.

\section{Approximation Errors of Smooth Robustness Measures} 
\label{Sec:ApproximationErrors}

In the previous section, we discussed four candidate smooth robustness measures $\tilde{\rho}^{\varphi}(s)$ (SRM1-4) that can be used to approximate the actual non-smooth robustness measure $\rho^{\varphi}(s)$ when solving the synthesis problem \eqref{Eq:SynthesisProblem3} via a gradient-based optimization method. For notational convenience, let us define the corresponding approximation error as 
\begin{equation}
    \tilde{e}^\varphi(s) \triangleq (\rho^\varphi(s) - \tilde{\rho}^\varphi(s)). 
\end{equation} 
As shown in \eqref{Eq:Lm:SRM1ApproxError}, \eqref{Eq:Lm:SRM2ApproxError}, \eqref{Eq:Lm:SRM3ApproxError} and  \eqref{Eq:Lm:SRM4ApproxError} (obtained via Lms. \ref{Lm:SmoothMinMaxOperatorErrorBoundsPant}, \ref{Lm:SmoothMinMaxOperatorErrorBoundsGilpin}, \ref{Lm:SmoothMinMaxOperatorErrorBoundsGilpinReversed} and \ref{Lm:SmoothMinMaxOperatorErrorBoundsPantReversed}, respectively), this approximation error $\tilde{e}^\varphi(s)$ will always be bounded inside some finite interval (henceforth called an \emph{error bound}) of the form:
\begin{equation}\label{Eq:ApproximationErrorBandDefinition}
    \tilde{e}^\varphi(s) \in [L_\theta^\varphi, U_\theta^\varphi] \subset \R,
\end{equation}
irrespective of the signal $s=s(u)$. Here, $L_\theta^\varphi$ and $U_\theta^\varphi$ are two real numbers where $\theta$ can be thought of as a collection of all the used smooth-min and smooth-max operator parameters (e.g., $\theta = [k_1,k_2]$). Intuitively, this error bound $[L_\theta^\varphi, U_\theta^\varphi]$ will depend on: 1) the specification $\varphi$, 
2) the SRM (i.e., the used smooth-min and smooth-max operator configuration: \eqref{Eq:SmoothMinMaxOperatorsPant}, \eqref{Eq:SmoothMinMaxOperatorsGilpin}, \eqref{Eq:SmoothMinMaxOperatorsGilpinReversed} or \eqref{Eq:SmoothMinMaxOperatorsPantReversed}), and 
3) the smooth-min and smooth-max operator parameters $\theta$.

In this section, we develop a set of semantics (parallel to the ones in Def. \ref{Def:SmoothSTLRobustSemantics}) to explicitly derive the error bound $[L_\theta^\varphi, U_\theta^\varphi]$ in \eqref{Eq:ApproximationErrorBandDefinition} for a given specification and an SRM - in terms of the operator parameters $\theta$. Having this knowledge is critical as it enables one to sensibly select: 1) the SRM to optimize (in \eqref{Eq:SynthesisProblem3}), and 2) the operator parameters $\theta$. For example, one can select an SRM and its operator parameters $\theta$ that minimizes the width of the error bound, i.e.,  $U_\theta^\varphi-L_\theta^\varphi$.

\subsection{Some Preliminaries}
In the following analysis, we use the notation 
\begin{equation}\label{Eq:ApproxErrorBandsOfSmoothOperators}
    \begin{aligned}
        (\min\{a\} - \widetilde{\min}\{a\} ) \in& [L_{k_1,m}^{\min},\ U_{k_1,m}^{\min}] \mbox{ and }\\
        (\max\{a\} - \widetilde{\max}\{a\} ) \in& [L_{k_2,m}^{\max},\ U_{k_2,m}^{\max}],\\
    \end{aligned}
\end{equation}
to commonly represent the approximation error bounds associated with different possible respective smooth-min and smooth-max operators proven in Lms. \ref{Lm:SmoothMinMaxOperatorErrorBoundsPant}, \ref{Lm:SmoothMinMaxOperatorErrorBoundsGilpin}, \ref{Lm:SmoothMinMaxOperatorErrorBoundsGilpinReversed}, \ref{Lm:SmoothMinMaxOperatorErrorBoundsPantReversed} (also outlined in Tab. \ref{Tab:SmoothOperators}). Note also that, in \eqref{Eq:ApproxErrorBandsOfSmoothOperators}, $m$ is the cardinality of the set $\bar{a}$ and $k_1,k_2\in\R_{>0}$ are operator parameters that a user needs to select. For example, if the SRM1 is to be used,
\begin{align*}
[L_{k_1,m}^{\min},\ U_{k_1,m}^{\min}] \equiv& [0,\ \frac{\log(m)}{k_1}], \mbox{ and }\\
[L_{k_2,m}^{\max},\ U_{k_2,m}^{\max}] \equiv& [-\frac{\log(m)}{k_2},\ 0].\\
\end{align*}

We also require the following minor theoretical result.
\begin{lemma}\label{Lm:Fundamentalinequality}
If $a_i,\tilde{a}_i,L_i,U_i \in\R$ are such that $(a_i - \tilde{a}_i)\in[L_i, U_i]$ for all $i\in [1,m]\ (\triangleq \{1,2,\ldots,m\})$, then 
\begin{equation}\label{Eq:Lm:Fundamentalinequality}
    \begin{aligned}
    \left(\underset{i\in [1,m]}{\min} \{a_i\} - \underset{i\in [1,m]}{\min} \{\tilde{a}_i\}\right) 
\in \left[\underset{i \in [1,m]}{\min} \{L_i\},\ \ \underset{i \in [1,m]}{\max} \{U_i\}\right],\\
\left(\underset{i\in [1,m]}{\max} \{a_i\} - \underset{i\in [1,m]}{\max} \{\tilde{a}_i\}\right) 
\in \left[\underset{i \in [1,m]}{\min} \{L_i\},\ \ \underset{i \in [1,m]}{\max} \{U_i\}\right]. 
    \end{aligned}
\end{equation}
\end{lemma}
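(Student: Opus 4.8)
The plan is to prove the two statements of Lemma \ref{Lm:Fundamentalinequality} separately (though symmetrically), and to prove each by establishing the upper and lower bounds independently. I would begin with the \emph{upper bound} for the $\min$ statement. Let $j \in [1,m]$ be an index achieving $\min_i \{\tilde a_i\} = \tilde a_j$. Then $\min_i \{a_i\} \le a_j$, and since $a_j - \tilde a_j \le U_j \le \max_i \{U_i\}$, we get
\begin{equation*}
\min_i \{a_i\} - \min_i \{\tilde a_i\} \le a_j - \tilde a_j \le U_j \le \max_i \{U_i\}.
\end{equation*}
For the \emph{lower bound}, let $\ell \in [1,m]$ achieve $\min_i \{a_i\} = a_\ell$. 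Then $\min_i \{\tilde a_i\} \le \tilde a_\ell$, so
\begin{equation*}
\min_i \{a_i\} - \min_i \{\tilde a_i\} \ge a_\ell - \tilde a_\ell \ge L_\ell \ge \min_i \{L_i\}.
\end{equation*}
Together these give the first inclusion in \eqref{Eq:Lm:Fundamentalinequality}.

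The $\max$ statement is entirely analogous with the roles swapped. For its upper bound, pick $\ell$ achieving $\max_i\{a_i\} = a_\ell$; then $\max_i\{\tilde a_i\} \ge \tilde a_\ell$, so $\max_i\{a_i\} - \max_i\{\tilde a_i\} \le a_\ell - \tilde a_\ell \le U_\ell \le \max_i\{U_i\}$. For its lower bound, pick $j$ achieving $\max_i\{\tilde a_i\} = \tilde a_j$; then $\max_i\{a_i\} \ge a_j$, so $\max_i\{a_i\} - \max_i\{\tilde a_i\} \ge a_j - \tilde a_j \ge L_j \ge \min_i\{L_i\}$. This yields the second inclusion.

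There is no genuine obstacle here: the proof is a standard ``pick the argmin/argmax witness and bound componentwise'' argument, and the only subtlety worth a sentence is making clear that the same index need not work for both $a$ and $\tilde a$, which is exactly why the bound is stated with $\min_i\{L_i\}$ and $\max_i\{U_i\}$ rather than something index-matched. I would state the four witness choices explicitly (as above) and note in one line that the $\max$ case follows ``by an identical argument, exchanging the roles of the upper and lower bounds and of $a$ and $\tilde a$,'' rather than writing it all out. The monotonicity of $\min$ and $\max$ in each argument — i.e. $\min_i\{a_i\} \le a_j$ for every $j$, and $\max_i\{a_i\} \ge a_j$ for every $j$ — is the only fact invoked, and it needs no justification.
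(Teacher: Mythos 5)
Your proof is correct and is essentially the same argument as the paper's: both pick the argmin (resp.\ argmax) witnesses of $a$ and $\tilde a$ separately, bound the difference by a single componentwise term via monotonicity of $\min$/$\max$, and then pass to $\min_i\{L_i\}$ and $\max_i\{U_i\}$. The paper likewise writes out only the $\min$ case and dismisses the $\max$ case as following by the same steps.
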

\begin{proof}
To prove the first result, take $a_j = \min_{i\in [1,m]} \{a_i\}$ and $\tilde{a}_k = \min_{i\in [1,m]} \{\tilde{a}_i\}$ where $j,k\in [1,m]$. According to this definition, $a_j \leq a_k$ and $\tilde{a}_k \leq \tilde{a}_j$. Also, according to the given information, $(a_j-\tilde{a}_j) \geq L_j$ and $(a_k - \tilde{a}_k) \leq U_k$. We now can use these relationships to impose bounds on $(a_j-\tilde{a}_k)$ as: 
\begin{align*}
     (a_j-\tilde{a}_k) 
     &\geq (a_j-\tilde{a}_j) \geq L_j \geq \min_{i\in[1,m]}\{L_i\}, \mbox{ and }\\
     (a_j-\tilde{a}_k) 
     &\leq  (a_k - \tilde{a}_k) \leq U_k \leq \max_{i\in[1,m]}\{U_i\}.
\end{align*}
This proves the first result in \eqref{Eq:Lm:Fundamentalinequality}. Following the same steps, the second result in \eqref{Eq:Lm:Fundamentalinequality} can be established.
\end{proof}


Finally, we improve the generality of the proposing set of semantics (that computes the error bound $[L_\theta^\varphi, U_\theta^\varphi]$ in \eqref{Eq:ApproximationErrorBandDefinition}) by introducing a small modification to the smooth STL robust semantics given in Def. \ref{Def:SmoothSTLRobustSemantics}. In particular, we replace the first smooth STL robust semantic: $\tilde{\rho}^\pi(s) = \mu^\pi(s_t)$ by a noise affected version of it:
\begin{equation}\label{Eq:ModifiedSmoothSTLRobustSemantic}
    \tilde{\rho}^\pi((s,t)) = \mu^\pi(s_t) + w^\pi(s_t),
\end{equation}
where the noise term $w^\pi(s_t)$ is such that $w^\pi(s_t) \in [L^\pi(s_t),U^\pi(s_t)]$ with known functions $L^\pi:\R^p \rightarrow \R$ and $U^\pi:\R^p \rightarrow \R$. For example, $L^\pi(s_t) = -U^\pi(s_t) = -\epsilon$ where $\epsilon\in\R_{>0}$ is a known constant, is a possibility. Intuitively, this modification causes the error bound $[L_\theta^\varphi, U_\theta^\varphi]$ in \eqref{Eq:ApproximationErrorBandDefinition} to become dependent on the signal $s$, i.e., now,
\begin{equation}\label{Eq:ApproximationErrorBandDefinition2}
    \tilde{e}^\varphi(s) \triangleq (\rho^\varphi(s) - \tilde{\rho}^\varphi(s)) \in [L_\theta^\varphi(s), U_\theta^\varphi(s)].
\end{equation}
Therefore, this modification complicates the task at hand: computing the error bound $[L_\theta^\varphi(s), U_\theta^\varphi(s)]$. However, as we will see in the sequel, it allows us to see how the uncertainties in the mission space can affect this error bound $[L_\theta^\varphi(s), U_\theta^\varphi(s)]$.

\begin{remark}\label{Rm:NoMissionSpaceUncertainities}
If there are no uncertainties in the mission space, one can simply set $L^\pi(s_t) = U^\pi(s_t) = 0$ in \eqref{Eq:ModifiedSmoothSTLRobustSemantic} for all $s_t$ and for all predicates $\pi$. Moreover, if such uncertainties are independent of the agent trajectory $s$, one can simply set $L^\pi(s_t) = L^\pi$ and $U^\pi(s_t) = U^\pi$ where $L^\pi$ and $U^\pi$ are known constants. In both of these occasions, the resulting error bound $[L_\theta^\varphi(s), U_\theta^\varphi(s)]$ will be independent of the signal $s$. Evaluating such a signal independent error bound is a reasonable thing to do, especially in an off-line stage where the signal $s$ is unknown and the focus is on sensibly selecting an SRM along with its operator parameters $\theta$.
\end{remark}

\subsection{STL error semantics}

Consider the set of semantics (we named the \emph{STL error semantics}) defined in the following definition.  

\begin{definition}\label{Def:STLErrorSemantics}
(STL Error Semantics)
\begin{itemize}
\item  
$\tilde{e}^\pi((s,t)) \in [-U^\pi(s_t),\ -L^\pi(s_t)]$
\item 
$\tilde{e}^{\neg \pi}((s,t)) \in [L^\pi(s_t),\ U^\pi(s_t)]$
\item 
$\tilde{e}^{\varphi_1 \land \varphi_2}((s,t)) \in 
[L_{k_1,2}^{\min}+\min\{L_{\theta_1}^{\varphi_1}((s,t)),L_{\theta_2}^{\varphi_2}((s,t))\},\ $
\item[] 
\hfill $U_{k_1,2}^{\min}+\max\{U_{\theta_1}^{\varphi_1}((s,t)),U_{\theta_2}^{\varphi_2}((s,t))\}]$
\item 
$\tilde{e}^{\varphi_1 \lor \varphi_2}((s,t)) \in [ 
L_{k_2,2}^{\max}+\min\{L_{\theta_1}^{\varphi_1}((s,t)),L_{\theta_2}^{\varphi_2}((s,t))\},\ $
\item[] 
\hfill $U_{k_2,2}^{\max}+\max\{U_{\theta_1}^{\varphi_1}((s,t)),U_{\theta_2}^{\varphi_2}((s,t))\}]$
\item
$\tilde{e}^{\mathbf{F}_{[t_1,t_2]}\varphi}((s,t)) \in [L_{k_2,t_2-t_1+1}^{\max} + \min_{\tau\in[t+t_1,t+t_2]}\{$ 
\item[]
\hfill $L_\theta^{\varphi}((s,\tau))\},\ U_{k_2,t_2-t_1+1}^{\max}+\max_{\tau\in[t+t_1,t+t_2]}\{U_\theta^{\varphi}((s,\tau))\}]$
\item 
$\tilde{e}^{\mathbf{G}_{[t_1,t_2]}\varphi}((s,t)) 
\in [L_{k_1,t_2-t_1+1}^{\min}+\min_{\tau\in[t+t_1,t+t_2]}\{$
\item[] 
\hfill $L_\theta^{\varphi}((s,\tau))\},\ U_{k_1,t_2-t_1+1}^{\min}+\max_{\tau\in[t+t_1,t+t_2]}\{U_\theta^{\varphi}((s,\tau))\}]$
\item 
$\tilde{e}^{\varphi_1\mathbf{U}_{[t_1,t_2]}\varphi_2}((s,t)) 
\in [L_{k_2,t_2-t_1+1}^{\max} + \min_{\tau\in[t+t_1,t+t_2]}\{$
\item[] 
$L_{\theta_4}^{\varphi_4}((s,\tau))\},U_{k_2,t_2-t_1+1}^{\max} + \max_{\tau\in[t+t_1,t+t_2]}\{U_{\theta_4}^{\varphi_4}((s,\tau))\}]$
\item[] where
\item[]
$[L_{\theta_4}^{\varphi_4}((s,\tau)), U_{\theta_4}^{\varphi_4}((s,\tau))] \equiv\ 
[L_{k_1,2}^{\min} + \min\{L_{\theta_1}^{\varphi_1}((s,\tau)),$
\item[]
\hfill $L_{\theta_3}^{\varphi_3}((s,\tau))\}, U_{k_1,2}^{\min} + \max\{U_{\theta_1}^{\varphi_1}((s,\tau)),U_{\theta_3}^{\varphi_3}((s,\tau))\}],$
\item[]
$[L_{\theta_3}^{\varphi_3}((s,\tau)),U_{\theta_3}^{\varphi_3}((s,\tau))]\equiv\ 
[L_{k_1,\tau-t-t_1+1} + \min_{\delta\in[t+t_1,\tau]}$
\item[]
\hfill $\{L_{\theta_2}^{\varphi_2}((s,\delta))\},
U_{k_1,\tau-t-t_1+1} + \max_{\delta\in[t+t_1,\tau]}\{U_{\theta_2}^{\varphi_2}((s,\delta))
\}]$.
\end{itemize}
\end{definition}

\begin{theorem}\label{Th:STLErrorSemantics}
For a known signal suffix $(s,t)$, a specification $\varphi$ and a smooth robustness measure (SRM1-4), the \emph{STL error semantics} in Def. \ref{Def:SmoothSTLRobustSemantics} can be used to determine the approximation error bound 
$$[L_\theta^\varphi((s,t)), U_\theta^\varphi((s,t))] \ni \tilde{e}^\varphi((s,t)) \triangleq (\rho^\varphi((s,t))-\tilde{\rho}^\varphi((s,t))),$$ 
(the same as \eqref{Eq:ApproximationErrorBandDefinition2}) in terms of the used collection of smooth-min and smooth-max operator parameters $\theta$.
\end{theorem}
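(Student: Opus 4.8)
The plan is to prove the theorem by structural induction on the STL specification $\varphi$ (written, without loss of generality, in disjunctive normal form, cf.~\eqref{Eq:STLSyntax}), following the recursion that defines the error bounds in Def.~\ref{Def:STLErrorSemantics} and mirroring the parallel recursions that define $\rho^\varphi((s,t))$ and $\tilde{\rho}^\varphi((s,t))$ in Defs.~\ref{Def:STLRobustSemantics} and \ref{Def:SmoothSTLRobustSemantics}. The two tools invoked at every inductive step are Lemma~\ref{Lm:Fundamentalinequality}, which propagates per-element error intervals through $\min$ and $\max$, and the operator-level error bounds of Lemmas~\ref{Lm:SmoothMinMaxOperatorErrorBoundsPant}--\ref{Lm:SmoothMinMaxOperatorErrorBoundsPantReversed}, abbreviated through the notation $[L_{k_1,m}^{\min},U_{k_1,m}^{\min}]$ and $[L_{k_2,m}^{\max},U_{k_2,m}^{\max}]$ introduced in \eqref{Eq:ApproxErrorBandsOfSmoothOperators}.

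For the base case, the modified predicate semantics \eqref{Eq:ModifiedSmoothSTLRobustSemantic} give $\tilde{e}^\pi((s,t)) = \rho^\pi((s,t)) - \tilde{\rho}^\pi((s,t)) = -w^\pi(s_t) \in [-U^\pi(s_t),-L^\pi(s_t)]$, and for a negated predicate $\rho^{\neg\pi} = -\rho^\pi$, $\tilde{\rho}^{\neg\pi} = -\tilde{\rho}^\pi$, so $\tilde{e}^{\neg\pi}((s,t)) = w^\pi(s_t) \in [L^\pi(s_t),U^\pi(s_t)]$; these are exactly the first two items of Def.~\ref{Def:STLErrorSemantics}. For the inductive step on $\varphi_1\land\varphi_2$, I would insert and subtract $\min\{\tilde{\rho}^{\varphi_1}((s,t)),\tilde{\rho}^{\varphi_2}((s,t))\}$ to write
\[
\tilde{e}^{\varphi_1\land\varphi_2}((s,t)) = \Big(\min_i\{\rho^{\varphi_i}((s,t))\} - \min_i\{\tilde{\rho}^{\varphi_i}((s,t))\}\Big) + \Big(\min_i\{\tilde{\rho}^{\varphi_i}((s,t))\} - \widetilde{\min}\{\tilde{\rho}^{\varphi_1}((s,t)),\tilde{\rho}^{\varphi_2}((s,t))\}\Big).
\]
By the inductive hypothesis $\rho^{\varphi_i}((s,t)) - \tilde{\rho}^{\varphi_i}((s,t)) \in [L_{\theta_i}^{\varphi_i}((s,t)),U_{\theta_i}^{\varphi_i}((s,t))]$, so Lemma~\ref{Lm:Fundamentalinequality} bounds the first bracket by $[\min\{L_{\theta_1}^{\varphi_1},L_{\theta_2}^{\varphi_2}\},\max\{U_{\theta_1}^{\varphi_1},U_{\theta_2}^{\varphi_2}\}]$, while the second bracket is the approximation error of the chosen smooth-min applied to a set of $m=2$ reals and hence lies in $[L_{k_1,2}^{\min},U_{k_1,2}^{\min}]$; summing the two intervals reproduces the bound posited in Def.~\ref{Def:STLErrorSemantics}. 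The cases $\varphi_1\lor\varphi_2$, $\mathbf{F}_{[t_1,t_2]}\varphi$ and $\mathbf{G}_{[t_1,t_2]}\varphi$ are identical after replacing $\min/\widetilde{\min}$ by $\max/\widetilde{\max}$ where appropriate, taking the index set to be the window $\{t+t_1,\dots,t+t_2\}$ of cardinality $t_2-t_1+1$ for the temporal operators, and drawing the inner intervals $[L_\theta^\varphi((s,\tau)),U_\theta^\varphi((s,\tau))]$ from the inductive hypothesis.

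The until operator is where the argument is most delicate, and I expect it to be the main obstacle, because $\rho^{\varphi_1\mathbf{U}_{[t_1,t_2]}\varphi_2}((s,t))$ nests three aggregations with coupled index sets: an innermost $\min$ over $\delta\in[t+t_1,\tau]$ (a window whose length depends on the outer variable $\tau$), a middle $\min$ of $\rho^{\varphi_1}((s,\tau))$ with that quantity, and an outermost $\max$ over $\tau\in[t+t_1,t+t_2]$. I would peel these from the inside out: (i) for each fixed $\tau$, apply the $\mathbf{G}$-type argument to the innermost $\min/\widetilde{\min}$ over $\delta$ (using the inductive hypothesis on $\varphi_2$ and the operator bound at cardinality $\tau-t-t_1+1$) to obtain the interval $[L_{\theta_3}^{\varphi_3}((s,\tau)),U_{\theta_3}^{\varphi_3}((s,\tau))]$ of Def.~\ref{Def:STLErrorSemantics}; (ii) apply the $\land$-type argument at cardinality $2$ to the $\min/\widetilde{\min}$ of $\varphi_1$ and $\varphi_3$, combining the inductive hypothesis on $\varphi_1$ with the bound just obtained, to get $[L_{\theta_4}^{\varphi_4}((s,\tau)),U_{\theta_4}^{\varphi_4}((s,\tau))]$; (iii) apply the $\mathbf{F}$-type argument to the outer $\max/\widetilde{\max}$ over $\tau$ at cardinality $t_2-t_1+1$, i.e.\ take $\min_\tau$ and $\max_\tau$ of the endpoints $L_{\theta_4}^{\varphi_4}((s,\tau))$, $U_{\theta_4}^{\varphi_4}((s,\tau))$ and add $[L_{k_2,t_2-t_1+1}^{\max},U_{k_2,t_2-t_1+1}^{\max}]$, which yields precisely the interval assigned to $\mathbf{U}$ in Def.~\ref{Def:STLErrorSemantics}. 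Since every constructor of the STL syntax is thereby covered, the induction closes.

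One technical point I would make explicit along the way: for the soft-min and soft-max operators (SRM3 and SRM4) the bounds in Lemmas~\ref{Lm:SmoothMinMaxOperatorErrorBoundsGilpinReversed} and \ref{Lm:SmoothMinMaxOperatorErrorBoundsPantReversed} depend on the actual argument values, which at each inductive step are the \emph{smooth} robustness values $\tilde{\rho}^{\varphi_i}$; since the signal suffix $(s,t)$ is given, these are computable, so the symbols $[L_{k_1,m}^{\min},U_{k_1,m}^{\min}]$, $[L_{k_2,m}^{\max},U_{k_2,m}^{\max}]$ are well-defined, and the ``propagation error $+$ operator error'' split used above is legitimate precisely because $\widetilde{\min}$ and $\widetilde{\max}$ act on the perturbed values $\tilde{\rho}^{\varphi_i}$ rather than on the exact $\rho^{\varphi_i}$. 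This also clarifies why the resulting bound $[L_\theta^\varphi((s,t)),U_\theta^\varphi((s,t))]$ may genuinely depend on $s$ through both the injected noise and (for SRM3/SRM4) the argument values, consistent with \eqref{Eq:ApproximationErrorBandDefinition2} and Remark~\ref{Rm:NoMissionSpaceUncertainities}.
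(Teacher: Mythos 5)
Your proof is correct and follows essentially the same route as the paper's: establish the predicate cases directly from \eqref{Eq:ModifiedSmoothSTLRobustSemantic}, then for each composite operator split the error into a propagation term (bounded via Lemma~\ref{Lm:Fundamentalinequality} and the inductive hypothesis) plus an operator-approximation term (bounded via \eqref{Eq:ApproxErrorBandsOfSmoothOperators}), and add the two intervals. If anything, you go further than the paper, which omits the nested until case for space and does not flag the signal-dependence of the soft-min/soft-max operator bounds that you correctly point out.
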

\begin{proof}
Here, we need to prove each semantic in Def. \ref{Def:STLErrorSemantics} using the respective semantics in Defs. \ref{Def:STLRobustSemantics} and \ref{Def:SmoothSTLRobustSemantics}.  

\paragraph*{Semantic 1} According to Defs. \ref{Def:STLRobustSemantics} and \ref{Def:SmoothSTLRobustSemantics}, $\rho^\pi((s,t)) = \mu^\pi(s_t)$ and $\tilde{\rho}^\pi((s,t)) = \mu^\pi(s_t) + w^\pi(s_t)$ (recall \eqref{Eq:ModifiedSmoothSTLRobustSemantic}), respectively. Therefore, the corresponding approximation error is 
$
\tilde{e}^\pi((s,t)) \triangleq (\rho^\pi((s,t)) - \tilde{\rho}^\pi((s,t))) = -w^\pi(s_t).
$
Since $w^\pi(s_t) \in [L^\pi(s_t),U^\pi(s_t)]$ (from \eqref{Eq:ModifiedSmoothSTLRobustSemantic}), it is easy to see that 
$\tilde{e}^\pi((s,t))\in [-U^\pi(s_t),-L^\pi(s_t)]$ (i.e., the Semantic 1).

\paragraph*{Semantic 2} Can be proved using the same steps as above.

\paragraph*{Semantic 3} \hfill According to the corresponding semantics in Defs. \ref{Def:STLRobustSemantics} and \ref{Def:SmoothSTLRobustSemantics}, 
$\rho^{\varphi_1 \land \varphi_2}((s,t)) = \min \{\rho^{\varphi_1}((s,t)),\rho^{\varphi_2}((s,t))\}$ and 
$\tilde{\rho}^{\varphi_1 \land \varphi_2}((s,t)) = \widetilde{\min} \{\tilde{\rho}^{\varphi_1}((s,t)),\tilde{\rho}^{\varphi_2}((s,t))\}$, respectively. Therefore, the corresponding approximation error is  
$\tilde{e}^{\varphi_1 \land \varphi_2}((s,t)) 
\triangleq (\rho^{\varphi_1 \land \varphi_2}((s,t)) - \tilde{\rho}^{\varphi_1 \land \varphi_2}((s,t))) =  
\min \{\rho^{\varphi_1}((s,t)),\, \rho^{\varphi_2}((s,t))\} - \widetilde{\min} \{\tilde{\rho}^{\varphi_1}((s,t)),\, \tilde{\rho}^{\varphi_2}((s,t))\}
$.

To bound this error term $\tilde{e}^{\varphi_1 \land \varphi_2}((s,t))$, what we have at our disposal are the error bounds:
$
[L_{\theta_i}^{\varphi_i}((s,t)), U_{\theta_i}^{\varphi_i}((s,t))] \ni \tilde{e}^{\varphi_i}((s,t)) \triangleq  (\rho^{\varphi_i}((s,t)) - \tilde{\rho}^{\varphi_i}((s,t)))
$,
for $i=1,2.$ Applying these two error bounds in Lm \ref{Lm:Fundamentalinequality}, we get
\begin{equation}\label{Eq:ApproxErrorBandSemantic3Step1}
\begin{aligned}
    \min \{\rho^{\varphi_1}((s,t)),\, \rho^{\varphi_2}((s,t))\} - \min \{\tilde{\rho}^{\varphi_1}((s,t)),\, \tilde{\rho}^{\varphi_2}((s,t))\} \\
    \in [\min \{L_{\theta_1}^{\varphi_1}((s,t)),L_{\theta_2}^{\varphi_2}((s,t))\},\\ \max\{U_{\theta_1}^{\varphi_1}((s,t)),U_{\theta_2}^{\varphi_2}((s,t))\}].
\end{aligned}
\end{equation}

Further, using \eqref{Eq:ApproxErrorBandsOfSmoothOperators}, we can write 
\begin{equation}\label{Eq:ApproxErrorBandSemantic3Step2}
\begin{aligned}
    \min\{\tilde{\rho}^{\varphi_1}((s,t)), \tilde{\rho}^{\varphi_2}((s,t))\}
    - \widetilde{\min} \{\tilde{\rho}^{\varphi_1}((s,t)),\, \tilde{\rho}^{\varphi_2}((s,t))\} \\
    \in [L_{k_1,2}^{\min},U_{k_1,2}^{\min}].
\end{aligned}
\end{equation}

Finally, by adding \eqref{Eq:ApproxErrorBandSemantic3Step1} and \eqref{Eq:ApproxErrorBandSemantic3Step2} we can obtain a bound for the approximation error term $\tilde{e}^{\varphi_1 \land \varphi_2}((s,t))$ as  
\begin{equation*}
\begin{aligned}
\tilde{e}^{\varphi_1 \land \varphi_2}((s,t)) \in&\ 
[L_{\theta}^{\varphi_1 \land \varphi_2}((s,t)),\ U_{\theta}^{\varphi_1 \land \varphi_2}((s,t))]\\
\equiv&\ \lbrack L_{k_1,2}^{\min}+\min\{L_{\theta_1}^{\varphi_1}((s,t)),L_{\theta_2}^{\varphi_2}((s,t))\},\\ 
&\ \ U_{k_1,2}^{\min}+\max\{U_{\theta_1}^{\varphi_1}((s,t)),U_{\theta_2}^{\varphi_2}((s,t))\}
],
\end{aligned}
\end{equation*}
where $\theta = \theta_1 \cup \theta_2 \cup \{k_1\}$ (which makes it clear how the collection of operator parameters $\theta$ grows with the depth of the specification). This proves the Semantic 3.

\paragraph*{Semantic 4} Can be proved using the same steps as above.

\paragraph*{Semantic 5} Similar to before, by subtracting the corresponding semantics in Defs. \ref{Def:STLRobustSemantics}, \ref{Def:SmoothSTLRobustSemantics}, we get the error term
\begin{equation}\label{Eq:ApproxErrorBandSemantic5Step1}
\begin{aligned}
    \tilde{e}^{\mathbf{F}_{[t_1,t_2]}\varphi}((s,t)) \triangleq
    \rho^{\mathbf{F}_{[t_1,t_2]}\varphi}((s,t)) - \tilde{\rho}^{\mathbf{F}_{[t_1,t_2]}\varphi}((s,t))\\
    = \max_{\tau\in[t+t_1,t+t_2]}\{\rho^\varphi((s,\tau))\} - \underset{\tau\in[t+t_1,t+t_2]}{\widetilde{\max}}\{\tilde{\rho}^\varphi((s,\tau))\}.
\end{aligned}
\end{equation}

To bound this error term $\tilde{e}^{\mathbf{F}_{[t_1,t_2]}\varphi}((s,t))$, what we have at our disposal are the bounds: 
$
[L_\theta^{\varphi}((s,\tau)),  U_\theta^{\varphi}((s,\tau))] \ni \tilde{e}^{\varphi}((s,\tau)) \triangleq (\rho^{\varphi}((s,\tau)) - \tilde{\rho}^{\varphi}((s,\tau))) $, for all $\tau\in[t+t_1,t+t_2]$. Applying these error bounds in Lm \ref{Lm:Fundamentalinequality}, we get
\begin{equation}\label{Eq:ApproxErrorBandSemantic5Step1}
\begin{aligned}
\max_{\tau\in[t+t_1,t+t_2]}\{\rho^\varphi((s,\tau))\} - \max_{\tau\in[t+t_1,t+t_2]}\{\tilde{\rho}^\varphi((s,\tau))\}\\
\in [ \min_{\tau\in[t+t_1,t+t_2]}\{L_\theta^{\varphi}((s,\tau))\},\ \max_{\tau\in[t+t_1,t+t_2]}\{U_\theta^{\varphi}((s,\tau))\}].
\end{aligned}
\end{equation}

Further, using \eqref{Eq:ApproxErrorBandsOfSmoothOperators}, we can write 
\begin{equation}\label{Eq:ApproxErrorBandSemantic5Step2}
\begin{aligned}
    \max_{\tau\in[t+t_1,t+t_2]}\{\tilde{\rho}^\varphi((s,\tau))\} - \underset{\tau\in[t+t_1,t+t_2]}{\widetilde{\max}}\{\tilde{\rho}^\varphi((s,\tau))\} \\ 
    \in [L_{k_2,t_2-t_1+1}^{\max},U_{k_2,t_2-t_1+1}^{\max}]
\end{aligned}
\end{equation}

Finally, by adding \eqref{Eq:ApproxErrorBandSemantic5Step1} and \eqref{Eq:ApproxErrorBandSemantic5Step2} we can obtain a bound for the approximation error term $\tilde{e}^{\mathbf{F}_{[t_1,t_2]}\varphi}((s,t))$ as    
\begin{equation}
\begin{aligned}
\tilde{e}^{\mathbf{F}_{[t_1,t_2]}\varphi}((s,t)) 
&\equiv
[L_{\theta'}^{\mathbf{F}_{[t_1,t_2]}\varphi}((s,t)),\ U_{\theta'}^{\mathbf{F}_{[t_1,t_2]}\varphi}((s,t))],\\
&\in [ 
    L_{k_2,t_2-t_1+1}^{\max}+\min_{\tau\in[t+t_1,t+t_2]}\{L_\theta^{\varphi}((s,\tau))\},\\ 
&\ \ \ \ \ 
U_{k_2,t_2-t_1+1}^{\max}+\max_{\tau\in[t+t_1,t+t_2]}\{U_\theta^{\varphi}((s,\tau))\} ]
\end{aligned}
\end{equation}
where $\theta' = \theta \cup \{k_2\}$. This proves the Semantic 5. 

\paragraph*{Semantic 6} Can be proved by combining the techniques used to prove Semantics 3, 5 and 6. Due to space constraints, details are omitted here. This completes the proof of Th. \ref{Th:STLErrorSemantics}.
\end{proof}

\begin{remark}
Note that we are not constrained to use the same $k_1, k_2$ parameter values across all the smooth-min, smooth-max operators required when defining an SRM for a specification. In other words, at different levels of the specification, different $k_1, k_2$ values can be used. Having this flexibility is especially useful when the components of the specification are not properly normalized. 
One such example, identified using the proposed STL error semantics, is discussed in Section \ref{Sec:SimulationResults}. 
\end{remark}

To conclude this section, we propose the following set of semantics (we named the \emph{Supplementary STL error semantics}) applicable for a situation: 1) where there are no mission space uncertainties (i.e., when $L^{\pi}(s_t) = U^\pi(s_t) = 0,  \forall s_t, \forall \pi$ in \eqref{Eq:ModifiedSmoothSTLRobustSemantic}), or, 2) where mission space uncertainties are independent of the agent trajectory $s$ (i.e., when $L^\pi(s_t) = L^\pi$ and $U^\pi(s_t) = U^\pi$ where $L^\pi$ and $U^\pi$ are known constants). As pointed out in Remark  \ref{Rm:NoMissionSpaceUncertainities}, in such scenarios, the resulting error bounds will be independent of the signal $s$. This property is more evident in the semantics defined below (compared to the semantics defined in Def. \ref{Def:STLErrorSemantics}). 

\begin{definition}\label{Def:SuppSmoothSTLRobustSemantics}
(Supplementary STL Error Semantics)
\begin{itemize}
\item  
$\tilde{e}^\pi \in [-U^\pi,\ -L^\pi]$
\item 
$\tilde{e}^{\neg \pi} \in [L^\pi,\ U^\pi]$
\item 
$\tilde{e}^{\varphi_1 \land \varphi_2} \in 
[L_{k_1,2}^{\min}+\min\{L_{\theta_1}^{\varphi_1},L_{\theta_2}^{\varphi_2}\},\ U_{k_1,2}^{\min}$
\item[] 
\hfill $+\max\{U_{\theta_1}^{\varphi_1},U_{\theta_2}^{\varphi_2}\}]$
\item 
$\tilde{e}^{\varphi_1 \lor \varphi_2} \in [ 
L_{k_2,2}^{\max}+\min\{L_{\theta_1}^{\varphi_1},L_{\theta_2}^{\varphi_2}\},\ U_{k_2,2}^{\max}$
\item[] 
\hfill $+\max\{U_{\theta_1}^{\varphi_1},U_{\theta_2}^{\varphi_2}\}]$
\item
$\tilde{e}^{\mathbf{F}_{[t_1,t_2]}\varphi} \in [L_{k_2,t_2-t_1+1}^{\max} + L_\theta^{\varphi},\ U_{k_2,t_2-t_1+1}^{\max}+U_\theta^{\varphi}]$ 
\item 
$\tilde{e}^{\mathbf{G}_{[t_1,t_2]}\varphi} 
\in [L_{k_1,t_2-t_1+1}^{\min}+L_\theta^{\varphi},\ 
U_{k_1,t_2-t_1+1}^{\min}+U_\theta^{\varphi}]$
\item 
$\tilde{e}^{\varphi_1\mathbf{U}_{[t_1,t_2]}\varphi_2} 
\in [L_{k_2,t_2-t_1+1}^{\max} + L_{\theta_4}^{\varphi_4},\ 
U_{k_2,t_2-t_1+1}^{\max} + U_{\theta_4}^{\varphi_4}]
$
\item[] where
\item[]
$[L_{\theta_4}^{\varphi_4}, U_{\theta_4}^{\varphi_4}] \equiv\ 
[L_{k_1,2}^{\min} + \min\{L_{\theta_1}^{\varphi_1},
L_{\theta_3}^{\varphi_3}\},\ U_{k_1,2}^{\min}$
\item[]
\hfill $+ \max\{U_{\theta_1}^{\varphi_1},U_{\theta_3}^{\varphi_3}\}],$
\item[]
$[L_{\theta_3}^{\varphi_3},U_{\theta_3}^{\varphi_3}]\equiv\ 
[L_{k_1,\tau-t-t_1+1} + L_{\theta_2}^{\varphi_2},\ 
U_{k_1,\tau-t-t_1+1} + U_{\theta_2}^{\varphi_2}]$.
\end{itemize}
\end{definition}

\begin{corollary}
For a given specification $\varphi$ and a smooth robustness measure (SRM1-4), the \emph{STL error semantics} in Def. \ref{Def:SuppSmoothSTLRobustSemantics} can be used to determine the approximation error bound 
$$[L_\theta^\varphi, U_\theta^\varphi] \ni \tilde{e}^\varphi((s,t)) \triangleq (\rho^\varphi((s,t))-\tilde{\rho}^\varphi((s,t))),$$ 
(the same as \eqref{Eq:ApproximationErrorBandDefinition}) in terms of the used collection of smooth-min and smooth-max operator parameters $\theta$.
\end{corollary}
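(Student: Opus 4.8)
The plan is to obtain the corollary as a direct specialization of Theorem~\ref{Th:STLErrorSemantics}. That theorem already guarantees that, for an arbitrary signal suffix $(s,t)$ and any of SRM1--4, the recursion in Def.~\ref{Def:STLErrorSemantics} outputs a valid error band $[L_\theta^\varphi((s,t)),U_\theta^\varphi((s,t))]\ni\tilde{e}^\varphi((s,t))$. Hence the only thing left to prove is that, under the hypotheses of Remark~\ref{Rm:NoMissionSpaceUncertainities}, i.e., $L^\pi(s_t)=L^\pi$ and $U^\pi(s_t)=U^\pi$ for known constants $L^\pi,U^\pi$ (the case $L^\pi=U^\pi=0$ being a subcase), every band generated by that recursion is independent of the signal and of the base index $t$, and in fact reduces term-by-term to the band prescribed by Def.~\ref{Def:SuppSmoothSTLRobustSemantics}. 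I would therefore run a structural induction on $\varphi$ that mirrors, clause by clause, the proof of Theorem~\ref{Th:STLErrorSemantics}, simply tracking the simplifications.

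For the base cases, the modified predicate semantic \eqref{Eq:ModifiedSmoothSTLRobustSemantic} gives $\tilde{e}^\pi((s,t))=-w^\pi(s_t)\in[-U^\pi,-L^\pi]$ and $\tilde{e}^{\neg\pi}((s,t))\in[L^\pi,U^\pi]$, which carry no $s$- or $t$-dependence --- exactly the first two bullets of Def.~\ref{Def:SuppSmoothSTLRobustSemantics}. For the inductive step I would assume the child specifications $\varphi_1,\varphi_2$ (or $\varphi$) already possess constant error bands $[L_{\theta_i}^{\varphi_i},U_{\theta_i}^{\varphi_i}]$ and feed them into the corresponding clause of Def.~\ref{Def:STLErrorSemantics} exactly as in Theorem~\ref{Th:STLErrorSemantics}, via Lemma~\ref{Lm:Fundamentalinequality} together with the smooth-operator bands \eqref{Eq:ApproxErrorBandsOfSmoothOperators}. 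For the $\land$ and $\lor$ clauses the resulting expressions are already $\tau$-free, so they collapse verbatim to the third and fourth bullets. For the $\mathbf{F}$ and $\mathbf{G}$ clauses, once $L_\theta^\varphi((s,\tau))=L_\theta^\varphi$ is constant in $\tau$, the quantifications $\min_{\tau\in[t+t_1,t+t_2]}\{L_\theta^\varphi((s,\tau))\}$ and $\max_{\tau\in[t+t_1,t+t_2]}\{U_\theta^\varphi((s,\tau))\}$ trivialize to $L_\theta^\varphi$ and $U_\theta^\varphi$, and the window cardinality $t_2-t_1+1$ is fixed by the specification rather than by $t$, so the band $[L_{k_2,t_2-t_1+1}^{\max},U_{k_2,t_2-t_1+1}^{\max}]$ is itself a fixed interval; adding the two fixed intervals reproduces the fifth and sixth bullets.

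I expect the until clause $\varphi_1\mathbf{U}_{[t_1,t_2]}\varphi_2$ to be the only delicate point. There the inner band $[L_{\theta_3}^{\varphi_3},U_{\theta_3}^{\varphi_3}]$ of Def.~\ref{Def:STLErrorSemantics} still contains the smooth-min band indexed by the cardinality $\tau-t-t_1+1$, which varies with the quantification variable $\tau$ even after $\min_{\delta\in[t+t_1,\tau]}\{L_{\theta_2}^{\varphi_2}((s,\delta))\}$ has collapsed to the constant $L_{\theta_2}^{\varphi_2}$; so, unlike the other clauses, the outer $\min_\tau/\max_\tau$ does not vanish from signal-independence alone. I would discharge this by invoking monotonicity of the operator error bands in their cardinality argument (visible from Tab.~\ref{Tab:SmoothOperators}) to replace this index by its extreme value over $\tau\in[t+t_1,t+t_2]$, after which the band becomes a fixed interval, the outer quantification collapses, and one more constant-interval composition with the $k_1$-smooth-min band for $\varphi_1\land\varphi_3$ and the $k_2$-smooth-max band for the disjunction over $\tau$ yields the last bullet of Def.~\ref{Def:SuppSmoothSTLRobustSemantics}. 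Overall the substance lies entirely in Theorem~\ref{Th:STLErrorSemantics} and Lemma~\ref{Lm:Fundamentalinequality}; the corollary is the bookkeeping observation that constant predicate-level bands propagate through the recursion to constant bands at every level, the until clause being the lone place where monotonicity in the cardinality index must additionally be used.
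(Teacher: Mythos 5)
Your proof follows essentially the same route as the paper's: specialize Theorem \ref{Th:STLErrorSemantics}, observe that constant predicate-level bands make every clause of Def. \ref{Def:STLErrorSemantics} collapse to a signal-independent interval, and propagate this by structural induction over the specification. Your extra observation about the until clause --- that the cardinality index $\tau-t-t_1+1$ appearing in $[L_{\theta_3}^{\varphi_3},U_{\theta_3}^{\varphi_3}]$ still varies with $\tau$ even after the inner quantifications collapse, and must be absorbed by monotonicity of the operator error bands in their cardinality argument --- is a point the paper's proof (and indeed its statement of Def. \ref{Def:SuppSmoothSTLRobustSemantics}, which retains that index) glosses over, so it is a welcome refinement rather than a divergence.
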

\begin{proof}
The first two semantics in Def. \ref{Def:SuppSmoothSTLRobustSemantics} can be obtained by applying the relationships $L^{\pi}(s_t) = L^\pi$ and  $U^\pi(s_t) = U^\pi$ in respective semantics in Def. \ref{Def:STLErrorSemantics}. As a result of these first two semantics being independent of the signal suffix $(s,t)$, all the remaining semantics in Def. \ref{Def:SuppSmoothSTLRobustSemantics} will also be independent of the signal suffice $(s,t)$. These remaining semantics in Def. \ref{Def:SuppSmoothSTLRobustSemantics} can be proved by following the same steps used to establish their respective semantics in Def. \ref{Def:STLErrorSemantics}. 
\end{proof}

\section{The Gradients of Smooth Robustness Measures}
\label{Sec:Gradients}

Recall that, in Section \ref{Sec:ProblemFormulation}, we formulated the interested synthesis problem \eqref{Eq:SynthesisProblem} as an unconstrained optimization problem \eqref{Eq:SynthesisProblem3}. Then, since the involved robustness measure objective $\rho^\varphi(s(u))$ in \eqref{Eq:SynthesisProblem3} is non-smooth, in Section \ref{Sec:SmoothRobustnessMeasures}, we proposed to replace it with a smooth approximation (an SRM) $\tilde{\rho}^\varphi(s(u))$ inspired by the recent literature \cite{Pant2017,Haghighi2019,Mehdipour2019,Gilpin2021}. The main motivation behind this replacement is to enable the use of efficient gradient-based optimization techniques to solve \eqref{Eq:SynthesisProblem3}. Hence, the importance of knowing the gradient of the used SRM $\tilde{\rho}^\varphi(s(u))$ (with respect to $u$) is clear. 
However, to the best of the authors' knowledge, experimental results reported in existing literature \cite{Pant2017,Haghighi2019,Mehdipour2019,Gilpin2021} relies on numerically evaluated gradients rather than explicitly derived gradients (of SRMs). Intuitively, it is worth investigating the accuracy and the efficiency of such numerically evaluated gradients compared to explicitly derived gradients. 
Motivated by this need, in this section, we propose an efficient systematic approach to explicitly derive the gradient of the used SRM $\tilde{\rho}^\varphi(s(u))$, i.e., $\frac{\partial \tilde{\rho}^\varphi(s(u))}{\partial u}$.


First, notice that we can use the chain rule to write 
\begin{equation}\label{Eq:MainChainRule}
    \frac{\partial \tilde{\rho}^\varphi(s(u))}{\partial u} = \frac{\partial \tilde{\rho}^\varphi(s(u))}{\partial s} \frac{\partial s(u)}{\partial u}. 
\end{equation}
At this point, we remind that: 1) the SRM $\tilde{\rho}^\varphi(s(u))\in \R$, 2) the composite signal $s(u) \triangleq \{s_t(u)\}_{t\in[0,T]}$ where $s_t(u) \triangleq [y_t^\top(u),x_t^\top(u),u_t^\top]^\top \in \R^{q}$ with $q \triangleq p+n+m$, and 3) the control $u = \{u_t\}_{t\in[0,T]}$ where $u_t\in\R^m$ (all vectors are column vectors). For notational simplicity (also without any ambiguity), let us denote the three gradient terms in \eqref{Eq:MainChainRule} as 
$$
\frac{\partial \tilde{\rho}^\varphi}{\partial u} \triangleq \frac{\partial \tilde{\rho}^\varphi(s(u))}{\partial u},\ \ 
\frac{\partial \tilde{\rho}^\varphi}{\partial s} \triangleq \frac{\partial \tilde{\rho}^\varphi(s(u))}{\partial s}\ \mbox{ and }\  
\frac{\partial s}{\partial u} \triangleq 
\frac{\partial s(u)}{\partial u}, 
$$
where $\tilde{\rho}^\varphi \in \R$, $s \in \R^{(T+1)q}$ and $u\in\R^{(T+1)q}$ (consistent with the previously mentioned vector dimensions). Note that, the composition of these three gradient terms are as follows:
\begin{equation}
   \frac{\partial \tilde{\rho}^\varphi}{\partial u} 
   =\left[
   \frac{\partial \tilde{\rho}^\varphi}{\partial u_0}, 
   \frac{\partial \tilde{\rho}^\varphi}{\partial u_1},
   \ldots,
   \frac{\partial \tilde{\rho}^\varphi}{\partial u_T} 
   \right] 
   \in \R^{1 \times (T+1)m}, 
\end{equation}
where each $\frac{\partial \tilde{\rho}^\varphi}{\partial u_t} \in \R^{1 \times m}$ for all $t\in [0,T]$,
\begin{equation}
   \frac{\partial \tilde{\rho}^\varphi}{\partial s} 
   =\left[
   \frac{\partial \tilde{\rho}^\varphi}{\partial s_0}, 
   \frac{\partial \tilde{\rho}^\varphi}{\partial s_1},
   \ldots,
   \frac{\partial \tilde{\rho}^\varphi}{\partial s_T} 
   \right] 
   \in \R^{1 \times (T+1)q}, 
\end{equation}
where each $\frac{\partial \tilde{\rho}^\varphi}{\partial s_t} \in \R^{1 \times q}$ for all $t\in [0,T]$ and
\begin{equation}\label{Eq:GradsvsuBlockForm}
    \frac{\partial s}{\partial u} = 
    \begin{bmatrix}
        \frac{\partial s_0}{\partial u_0} & 
        \frac{\partial s_0}{\partial u_1} & 
        \cdots & 
        \frac{\partial s_0}{\partial u_T} \\
        \frac{\partial s_1}{\partial u_0} & 
        \frac{\partial s_1}{\partial u_1} & 
        \cdots & 
        \frac{\partial s_1}{\partial u_T} \\
        \vdots & 
        \vdots & 
        \ddots &
        \vdots \\
        \frac{\partial s_T}{\partial u_0} & 
        \frac{\partial s_T}{\partial u_1} & 
        \cdots & 
        \frac{\partial s_T}{\partial u_T}
    \end{bmatrix}
    \in \R^{(T+1)q \times (T+1)m},
\end{equation}
where each $\frac{\partial s_t}{\partial u_\tau} \in \R^{q \times m}$ for all $t,\tau \in [0,T]$. 

To obtain the interested gradient term $\frac{\partial \tilde{\rho}^\varphi}{\partial u}$, we need to compute the gradient terms $\frac{\partial \tilde{\rho}^\varphi}{\partial s}$ and $\frac{\partial s}{\partial u}$. For this purpose, in the following two subsections, we respectively propose a novel set of semantics and exploit the agent dynamics \eqref{Eq:AgentDynamics}.   

\subsection{The gradient term: $\frac{\partial \tilde{\rho}^\varphi(s)}{\partial s}$}


In the following analysis, we use the notation $0_k$ to represent the zero vector in $\R^k$. Note also that, as shown in Tab. \ref{Tab:SmoothOperators}, we already have derived the gradients of smooth-min and smooth-max operators (i.e., $\frac{\partial \widetilde{\min}\{a\}}{\partial a_i}$, $\frac{\partial \widetilde{\max}\{a\}}{\partial a_i} \in \R$, for all $i$) used in different SRMs: SRM1-4.  
Now, consider the set of semantics (we named the \emph{STL gradient semantics})  defined below. 

\begin{definition}\label{Def:STLGradientSemanctics}
(STL gradient semantics)
\begin{itemize}\setlength\itemsep{0.5em}
\item
$\frac{\partial \tilde{\rho}^\pi((s,t))}{\partial s} = 
\left[0_{tq}^\top,\, \frac{\partial \mu^\pi(s_t)}{\partial s_t},\,0_{(T-t)q}^\top\right] \in \R^{1\times (T+1)q}$ 
\item
$\frac{\partial \tilde{\rho}^{\neg\pi}((s,t))}{\partial s} = -\frac{\partial \tilde{\rho}^{\pi}((s,t))}{\partial s}$
\item
$\frac{\partial \tilde{\rho}^{\varphi_1 \land \varphi_2}((s,t))}{\partial s} 
= \frac{\partial \widetilde{\min} \{\rho_1,\rho_2\}}{\partial \rho_1} 
\frac{\partial \rho_1}{\partial s} 
+ 
\frac{\partial \widetilde{\min} \{\rho_1,\, \rho_2\}}{\partial \rho_2} 
\frac{\partial \rho_2}{\partial s}$
\item[] where $\rho_1 \triangleq \tilde{\rho}^{\varphi_1}((s,t))$ and $\rho_2 \triangleq  \tilde{\rho}^{\varphi_2}((s,t))$
\item 
$\frac{\partial \tilde{\rho}^{\varphi_1 \lor  \varphi_2}((s,t))}{\partial s} 
= \frac{\partial \widetilde{\max} \{\rho_1,\rho_2\}}{\partial \rho_1} 
\frac{\partial \rho_1}{\partial s} 
+ 
\frac{\partial \widetilde{\max} \{\rho_1,\, \rho_2\}}{\partial \rho_2} 
\frac{\partial \rho_2}{\partial s}$
\item[] where $\rho_1 \triangleq \tilde{\rho}^{\varphi_1}((s,t))$ and $\rho_2 \triangleq  \tilde{\rho}^{\varphi_2}((s,t))$
\item 
$\frac{\partial \tilde{\rho}^{\mathbf{F}_{[t_1,t_2]}\varphi}((s,t))}{\partial s} = \sum_{\tau\in[t+t_1,t+t_2]} \frac{\partial \widetilde{\max}_{\tau\in[t+t_1,t+t_2]}\{\rho_\tau\}}{\partial \rho_\tau} \frac{\partial \rho_\tau}{\partial s}$
\item[] where $\rho_\tau \triangleq \tilde{\rho}^\varphi((s,\tau))$ for any $\tau\in[t+t_1,t+t_2]$
\item 
$\frac{\partial \tilde{\rho}^{\mathbf{G}_{[t_1,t_2]}\varphi}((s,t))}{\partial s} = \sum_{\tau\in[t+t_1,t+t_2]} \frac{\partial \widetilde{\min}_{\tau\in[t+t_1,t+t_2]}\{\rho_\tau\}}{\partial \rho_\tau} \frac{\partial \rho_\tau}{\partial s}$
\item[] where $\rho_\tau \triangleq \tilde{\rho}^\varphi((s,\tau))$ for any $\tau\in[t+t_1,t+t_2]$
\item 
$\frac{\partial \tilde{\rho}^{\varphi_1\mathbf{U}_{[t_1,t_2]}\varphi_2}((s,t))}{\partial s} = 
\sum_{\tau \in [t+t_1,t+t_2]} 
\frac{\partial \widetilde{\max}_{\tau\in[t+t_1,t+t_2]}\{\rho_\tau\}}{\partial \rho_\tau} 
\frac{\partial \rho_\tau}{\partial s}
$,
\item[]
where 
$
\frac{\partial \rho_\tau}{\partial s} = 
    \frac{\partial \widetilde{\min}\{\rho_{1,\tau}, \rho_{2,\tau}\}}{\rho_{1,\tau}}
    \frac{\partial \rho_{1,\tau}}{\partial s} + 
    \frac{\partial \widetilde{\min}\{\rho_{1,\tau}, \rho_{2,\tau}\}}{\rho_{2,\tau}}
    \frac{\partial \rho_{2,\tau}}{\partial s} 
$ 
\item[] and
$
\frac{\partial \rho_{2,\tau}}{\partial s} = 
    \sum_{\delta \in [t+t_1,\tau]} \frac{\partial \widetilde{\min}_{\delta\in[t+t_1,\tau]}\{\rho_{2,\delta}\}}{\partial \rho_{2,\delta}} 
    \frac{\partial \rho_{2,\delta}}{\partial s},
$ 
\item[] with 
$\rho_\tau \triangleq \widetilde{\min}\{\rho_{1,\tau}, \rho_{2,\tau}\}$, 
$\rho_{1,\tau} \triangleq \tilde{\rho}^{\varphi_1}((s,\tau))$, 
\item[]
$\rho_{2,\tau} \triangleq \widetilde{\min}_{\delta\in[t+t_1,\tau]}\{\rho_{2,\delta}\}$ and
$\rho_{2,\delta} \triangleq \tilde{\rho}^{\varphi_2}((s,\delta))$.
\end{itemize}
\end{definition}

\begin{theorem}\label{Th:STLGradientSemantics}
For a known signal suffix $(s,t)$, a specification $\varphi$
and a smooth robustness measure (SRM1-4), the \emph{STL gradient semantics} in Def. \ref{Def:STLGradientSemanctics} can be used to determine the gradient 
$
\frac{\partial \tilde{\rho}^\varphi((s,t))}{\partial s}
$ (in the form: 
$
\frac{\partial \tilde{\rho}^\varphi((s,t))}{\partial s} = 
[0_{tq}^\top, 
\frac{\partial \tilde{\rho}^\varphi((s,t))}{\partial s_t},
\ldots, 
\frac{\partial \tilde{\rho}^\varphi((s,t))}{\partial s_T}
] \in \R^{1\times (T+1)q}
$).
\end{theorem}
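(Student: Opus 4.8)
The plan is to prove the theorem by structural induction on the STL specification $\varphi$, exactly mirroring the recursive structure of the smooth STL robust semantics in Def.\ \ref{Def:SmoothSTLRobustSemantics}. Since $\tilde{\rho}^\varphi((s,t))$ is a composition of the differentiable predicate functions $\mu^\pi(\cdot)$ with the smooth-min/smooth-max operators (all of which are smooth everywhere, as noted for SRM1--4 and tabulated in Tab.\ \ref{Tab:SmoothOperators}), $\tilde{\rho}^\varphi((s,t))$ is itself a differentiable function of $s$, so the gradient $\frac{\partial \tilde{\rho}^\varphi((s,t))}{\partial s}$ exists; it remains to show it equals the expression generated by Def.\ \ref{Def:STLGradientSemanctics}. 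In each inductive step I would simply differentiate the corresponding line of Def.\ \ref{Def:SmoothSTLRobustSemantics} using the multivariable chain rule, treating the already-computed gradients $\frac{\partial \widetilde{\min}\{a\}}{\partial a_i}$ and $\frac{\partial \widetilde{\max}\{a\}}{\partial a_i}$ (from Tab.\ \ref{Tab:SmoothOperators}) as known scalar quantities.

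The base case is the predicate: by Def.\ \ref{Def:SmoothSTLRobustSemantics}, $\tilde{\rho}^\pi((s,t)) = \mu^\pi(s_t)$ depends on $s$ only through the block $s_t$, so its gradient with respect to the full stacked vector $s \in \R^{(T+1)q}$ is $\frac{\partial \mu^\pi(s_t)}{\partial s_t}$ in the $t$-th block and zero elsewhere, which is precisely the stated form $[0_{tq}^\top,\frac{\partial \mu^\pi(s_t)}{\partial s_t},0_{(T-t)q}^\top]$; negation is immediate from $\tilde{\rho}^{\neg\pi}=-\tilde{\rho}^\pi$. For the inductive steps on $\land$ and $\lor$, I would write $\tilde{\rho}^{\varphi_1\land\varphi_2}((s,t)) = \widetilde{\min}\{\rho_1,\rho_2\}$ with $\rho_i=\tilde{\rho}^{\varphi_i}((s,t))$, and apply the chain rule $\frac{\partial}{\partial s}\widetilde{\min}\{\rho_1,\rho_2\} = \sum_{i=1}^2 \frac{\partial \widetilde{\min}\{\rho_1,\rho_2\}}{\partial \rho_i}\frac{\partial \rho_i}{\partial s}$, where each $\frac{\partial\rho_i}{\partial s}$ is supplied by the induction hypothesis; the $\lor$ case is identical with $\widetilde{\max}$. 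For $\mathbf{F}$ and $\mathbf{G}$, the argument is the same except the smooth operator now acts on the finite family $\{\rho_\tau\}_{\tau\in[t+t_1,t+t_2]}$, giving the sum over $\tau$ in the stated semantics. The $\mathbf{U}$ case is the composition of these patterns: an outer $\widetilde{\max}$ over $\tau$, an inner $\widetilde{\min}$ of two arguments, one of which is itself a $\widetilde{\min}$ over $\delta$; I would peel these off one chain-rule layer at a time, matching the nested expression in Def.\ \ref{Def:STLGradientSemanctics} term by term, and invoke the induction hypothesis on $\tilde{\rho}^{\varphi_1}$ and $\tilde{\rho}^{\varphi_2}$ at the leaves.

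I would also check the claimed sparsity structure $\frac{\partial \tilde{\rho}^\varphi((s,t))}{\partial s} = [0_{tq}^\top,\frac{\partial \tilde{\rho}^\varphi((s,t))}{\partial s_t},\ldots,\frac{\partial \tilde{\rho}^\varphi((s,t))}{\partial s_T}]$, i.e.\ that the first $tq$ entries vanish: this follows because every leaf predicate appearing in $\tilde{\rho}^\varphi((s,t))$ is evaluated at some time index $\tau \geq t$ (the temporal operators only push the evaluation time forward, by $t_1 \geq 0$), so no component $s_0,\ldots,s_{t-1}$ ever enters, and the zero blocks from the base case propagate through the (linear in the gradients) chain-rule combinations. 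The main obstacle is bookkeeping rather than mathematical depth: one must be careful in the $\mathbf{U}$ case to keep the three dummy indices ($\tau$ for the outer max, $\delta$ for the inner min, and the implicit recursion time) distinct and to ensure the inner $\widetilde{\min}_{\delta\in[t+t_1,\tau]}$ has its index range correctly truncated at the current $\tau$, matching $\rho_{2,\tau}\triangleq\widetilde{\min}_{\delta\in[t+t_1,\tau]}\{\rho_{2,\delta}\}$; getting this nesting right is where an error would most likely creep in, but it is purely a routine application of the chain rule once the notation is pinned down.
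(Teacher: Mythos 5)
Your proposal is correct and follows essentially the same route as the paper: differentiate each line of Def.~\ref{Def:SmoothSTLRobustSemantics} via the chain rule, use the scalar smooth-operator gradients from Tab.~\ref{Tab:SmoothOperators} as scaling factors on the recursively supplied constituent gradients rooted in the predicate base case, and observe that the leading $tq$ entries vanish because $\tilde{\rho}^\varphi((s,t))$ depends only on $\{s_t,\ldots,s_T\}$. Your explicit framing as structural induction and your attention to the index bookkeeping in the $\mathbf{U}$ case only make the argument slightly more organized than the paper's presentation.
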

\begin{proof}
In Semantics 3-7 of Def. \ref{Def:STLGradientSemanctics}, note that all the involved gradient terms of smooth operator values are scalars (e.g., $\frac{\partial \widetilde{\min} \{\rho_1,\rho_2\}}{\partial \rho_1},\, \frac{\partial \widetilde{\min} \{\rho_1,\rho_2\}}{\partial \rho_2} \in \R$ in Semantic 3, see also Tab. \ref{Tab:SmoothOperators}). Hence, in those semantics, such scalar gradients act only as scaling factors on the respective vector-valued \emph{constituent gradients} (e.g., on $\frac{\partial \rho_1}{\partial s},\, \frac{\partial \rho_2}{\partial s} \in \R^{1\times (T+1)q}$ in Semantic 3). 
Therefore, the dimension of the resulting gradient upon using a Semantic 3-7 will be the same as that of the corresponding constituent gradients used. 
Now, since each such constituent gradient has its root in Semantic 1 - that always assigns a vector in $\R^{1\times(T+1)q}$ as the gradient value - it is clear that the semantics in Def. \ref{Def:STLGradientSemanctics} will always produce a vector in $\R^{1\times (T+1)q}$ as the gradient $\frac{\partial \tilde{\rho}^\varphi((s,t))}{\partial s}$.

In fact, each of the Semantics 3-7 in Def. \ref{Def:STLGradientSemanctics} can be established easily by differentiating the respective semantic in Def.  \ref{Def:SmoothSTLRobustSemantics} (with respect to $s$) using the chain rule. For example, the Semantic 3 in Def. \ref{Def:STLGradientSemanctics} is proved by differentiating the Semantic 3 in Def. \ref{Def:SmoothSTLRobustSemantics} as
\begin{align*}
\frac{\partial \tilde{\rho}^{\varphi_1 \land \varphi_2}((s,t))}{\partial s} 
=& 
\frac{\partial \widetilde{\min}\{\tilde{\rho}^{\varphi_1}((s,t)), \tilde{\rho}^{\varphi_2}((s,t))\} }{\partial s} \\
=& \frac{\partial \widetilde{\min} \{\rho_1,\rho_2\}}{\partial \rho_1} 
\frac{\partial \rho_1}{\partial s} + 
\frac{\partial \widetilde{\min} \{\rho_1,\, \rho_2\}}{\partial \rho_2} 
\frac{\partial \rho_2}{\partial s}
\end{align*}
where $\rho_1 \triangleq \tilde{\rho}^{\varphi_1}((s,t))$ and $\rho_2 \triangleq \tilde{\rho}^{\varphi_2}((s,t))$. 
Similarly, even without applying the chain rule, the Semantics 1-2 in Def. \ref{Def:STLGradientSemanctics} can be proved by simply differentiating the respective semantics in Def.  \ref{Def:SmoothSTLRobustSemantics}. Therefore, in parallel to the way that the semantics in Def. \ref{Def:SmoothSTLRobustSemantics} computes the SRM $\tilde{\rho}^\varphi((s,t))$, the semantics in Def. \ref{Def:STLGradientSemanctics} computes the gradient of the SRM: $\frac{\partial \tilde{\rho}^\varphi((s,t))}{\partial s}$. Finally, note that, by definition, $\tilde{\rho}^\varphi((s,t)) = \tilde{\rho}^\varphi(\{s_t,s_{t+1},\ldots,s_T\})$ and $s=(s,0)=\{s_0,s_1,\ldots,s_T\}$. Hence, $\tilde{\rho}^\varphi((s,t))$ is independent of the signal values $\{s_0,s_1,\ldots,s_{t-1}\}$, and thus, the first $tq$ terms in $\frac{\partial \tilde{\rho}^\varphi((s,t))}{\partial s}$ will always be zeros. This completes the proof. 
\end{proof}

Finally, we point out that $\frac{\partial \tilde{\rho}^\varphi(s)}{\partial s} = \frac{\partial \tilde{\rho}^\varphi((s,0))}{\partial s}$ as $s = (s,0)$. Hence, we now can use the STL gradient semantics proposed in Def. \ref{Def:STLGradientSemanctics} (proved in Th. \ref{Th:STLErrorSemantics}) to compute the gradient term $\frac{\partial \tilde{\rho}^\varphi(s)}{\partial s}$ required in \eqref{Eq:MainChainRule}. In the next section, we will derive the remaining gradient term required in \eqref{Eq:MainChainRule}, i.e., $\frac{\partial s(u)}{\partial u}$.

\subsection{The gradient term: $\frac{\partial s(u)}{\partial u}$}

In the following analysis, with regard to the agent dynamics model \eqref{Eq:AgentDynamics}, we use the notation $\frac{\partial f}{\partial x}\big\vert_t$ to represent the  $\frac{\partial f(x,u)}{\partial x}$ evaluated at $(x_t,u_t)$ where $t\in[0,T]$. The same convention applies for the notations $\frac{\partial f}{\partial u}\big\vert_t$, 
$\frac{\partial g}{\partial x}\big\vert_t$ and   
$\frac{\partial g}{\partial u}\big\vert_t$. Since the functions $f(x,u)$ and $g(x,u)$ in \eqref{Eq:AgentDynamics} are assumed as a given and differentiable, their partial derivatives, i.e., $\frac{\partial f(x,u)}{\partial x}$, $\frac{\partial f(x,u)}{\partial u}$, $\frac{\partial g(x,u)}{\partial x}$ and $\frac{\partial g(x,u)}{\partial u}$ are also considered as a given. 
Therefore, for a given control signal $u = \{u_t\}_{t\in[0,T]}$ (which also determines the generated state signal $x= \{x_t\}_{t\in[0,T]}$ through \eqref{Eq:AgentDynamics}),  terms like $\frac{\partial f}{\partial x}\big\vert_t$, 
$\frac{\partial f}{\partial u}\big\vert_t$, 
$\frac{\partial g}{\partial x}\big\vert_t$ and   
$\frac{\partial g}{\partial u}\big\vert_t$ can be evaluated efficiently for any $t\in[0,T]$. 

In order to determine the the required gradient term: $\frac{\partial s}{\partial u} = \frac{\partial s(u)}{\partial u}$, we first need to establish the following two lemmas.

\begin{lemma}\label{Lm:Gradyvsu}
Under the agent dynamics model \eqref{Eq:AgentDynamics}, the control input signal $u = \{u_t\}_{t\in[0,T]}$ and the output signal $y=\{y_t\}_{t\in[0,T]}$ are related such that, for all  $t,\tau \in [0,T]$,
\begin{equation}\label{Eq:Lm:Gradyvsu}
    \frac{\partial y_t}{\partial u_\tau} = 
    \begin{cases}
    \frac{\partial g}{\partial x}\big\vert_t  
    \prod_{i=1}^{t-\tau-1} \left[\frac{\partial f}{\partial x}\Big\vert_{t-i} \right]
    \frac{\partial f}{\partial u}\Big\vert_{\tau}, \ \ &\mbox{if } \tau < t,\\
    \frac{\partial g}{\partial u}\big\vert_t, \ \ &\mbox{if } \tau = t,\\
    0, \ \ &\mbox{if } \tau > t.\\ 
    \end{cases}
\end{equation}
\end{lemma}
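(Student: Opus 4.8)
The plan is to prove \eqref{Eq:Lm:Gradyvsu} by induction on $t$, exploiting the recursive structure of the agent dynamics \eqref{Eq:AgentDynamics}. The key observation is that $y_t = g(x_t, u_t)$ depends on $u_\tau$ only through the state $x_t$ (when $\tau < t$) and directly through the second argument (when $\tau = t$), and not at all when $\tau > t$ since the dynamics are causal. So I would first handle the trivial cases: for $\tau > t$, the state $x_t$ is a function of $x_0$ and $u_0, \ldots, u_{t-1}$ only, hence $\partial y_t / \partial u_\tau = 0$; for $\tau = t$, the chain rule gives $\partial y_t/\partial u_t = \frac{\partial g}{\partial x}\big\vert_t \frac{\partial x_t}{\partial u_t} + \frac{\partial g}{\partial u}\big\vert_t = \frac{\partial g}{\partial u}\big\vert_t$ because $x_t$ does not depend on $u_t$.

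The substantive case is $\tau < t$. Here I would first establish an auxiliary claim about the state sensitivity, namely that for $\tau < t$,
\begin{equation*}
\frac{\partial x_t}{\partial u_\tau} = \left(\prod_{i=1}^{t-\tau-1}\frac{\partial f}{\partial x}\Big\vert_{t-i}\right)\frac{\partial f}{\partial u}\Big\vert_\tau,
\end{equation*}
with the convention that the empty product (when $t = \tau+1$) equals the identity matrix, so that $\partial x_{\tau+1}/\partial u_\tau = \frac{\partial f}{\partial u}\big\vert_\tau$. This is proved by induction on $t$: the base case $t = \tau+1$ follows directly from $x_{\tau+1} = f(x_\tau, u_\tau)$ and the fact that $x_\tau$ is independent of $u_\tau$. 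For the inductive step, write $x_{t+1} = f(x_t, u_t)$, apply the chain rule to get $\partial x_{t+1}/\partial u_\tau = \frac{\partial f}{\partial x}\big\vert_t \frac{\partial x_t}{\partial u_\tau}$ (the direct term vanishes since $\tau < t$ means $u_t$ is not $u_\tau$), and substitute the inductive hypothesis; the factor $\frac{\partial f}{\partial x}\big\vert_t$ prepends correctly to the product since $t = (t+1) - 1$.

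Once the state-sensitivity formula is in hand, the output case $\tau < t$ follows by one more application of the chain rule: $\partial y_t/\partial u_\tau = \frac{\partial g}{\partial x}\big\vert_t \frac{\partial x_t}{\partial u_\tau}$, and substituting the auxiliary claim yields exactly the first branch of \eqref{Eq:Lm:Gradyvsu}. The main obstacle — really the only place care is needed — is bookkeeping the product index ranges and the empty-product convention so that the telescoping chain $\frac{\partial f}{\partial x}\big\vert_{t-1}\frac{\partial f}{\partial x}\big\vert_{t-2}\cdots\frac{\partial f}{\partial x}\big\vert_{\tau+1}$ matches the stated form $\prod_{i=1}^{t-\tau-1}\frac{\partial f}{\partial x}\big\vert_{t-i}$, and so that matrix multiplication order is respected (these are matrices in $\R^{n\times n}$ and $\R^{n\times m}$, so commutativity cannot be assumed). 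I would state the empty-product convention explicitly at the start of the proof to avoid ambiguity in the $t = \tau+1$ edge case.
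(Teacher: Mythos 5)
Your proof is correct and follows essentially the same route as the paper: repeated application of the chain rule to the unrolled recursion $y_t = g(f(\cdots f(x_\tau,u_\tau)\cdots),u_t)$. The only difference is presentational --- the paper exhibits the pattern for $\frac{\partial y_t}{\partial u_{t-1}}, \frac{\partial y_t}{\partial u_{t-2}},\ldots$ and asserts the general formula, whereas you run a genuine induction through an auxiliary state-sensitivity claim that is in fact exactly the paper's subsequent Lemma~\ref{Lm:Gradxvsu}, so your argument establishes both lemmas at once and is, if anything, the more rigorous of the two.
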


\begin{proof}
According to the agent dynamics model \eqref{Eq:AgentDynamics}, 
\begin{equation}\label{Eq:AgentOutputRecursive}
\begin{aligned}
y_t 
=& g(x_t,u_t) \\
=& g(f(x_{t-1},u_{t-1}),u_t)\\
=& g(f(f(x_{t-2},u_{t-2}),u_{t-1}),u_t)\\
\vdots\,& 
\end{aligned}
\end{equation}
Therefore, $\frac{\partial y_t}{\partial u_\tau} = 0$ for any $\tau>t$ and $\frac{\partial y_t}{\partial u_t} = \frac{\partial g}{\partial u}\big\vert_t$. Similarly, using \eqref{Eq:AgentOutputRecursive} together with the product rule, we can write
\begin{align*}
  \frac{\partial y_t}{\partial u_{t-1}} =& \frac{\partial g}{\partial x}\Big\vert_t \frac{\partial f}{\partial u}\Big\vert_{t-1}\\ 
\frac{\partial y_t}{\partial u_{t-2}} =& \frac{\partial g}{\partial x}\Big\vert_t \frac{\partial f}{\partial x}\Big\vert_{t-1} \frac{\partial f}{\partial u}\Big\vert_{t-2}  \\
\frac{\partial y_t}{\partial u_{t-3}} =& \frac{\partial g}{\partial x}\Big\vert_t \frac{\partial f}{\partial x}\Big\vert_{t-1} \frac{\partial f}{\partial x}\Big\vert_{t-2} \frac{\partial f}{\partial u}\Big\vert_{t-3}  \\
\vdots\ &
\end{align*}
Therefore, it is clear that \eqref{Eq:Lm:Gradyvsu} provides an accurate and concise representation of these results. 
\end{proof}

\begin{lemma}\label{Lm:Gradxvsu}
Under the agent dynamics model \eqref{Eq:AgentDynamics}, the control input signal $u = \{u_t\}_{t\in[0,T]}$ and the state signal $x=\{x_t\}_{t\in[0,T]}$ are related such that, for all  $t,\tau \in [0,T]$,
\begin{equation}\label{Eq:Lm:Gradxvsu}
    \frac{\partial x_t}{\partial u_\tau} = 
    \begin{cases}
    \prod_{i=1}^{t-\tau-1} \left[\frac{\partial f}{\partial x}\Big\vert_{t-i} \right]
    \frac{\partial f}{\partial u}\Big\vert_{\tau}, \ \ &\mbox{if } \tau < t,\\
    0, \ \ &\mbox{if } \tau \geq t.\\ 
    \end{cases}
\end{equation}
\end{lemma}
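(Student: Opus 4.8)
The plan is to mirror the proof of Lemma~\ref{Lm:Gradyvsu}, exploiting the recursive unrolling of the state equation in \eqref{Eq:AgentDynamics} instead of the output equation. The key observation is that iterating $x_t = f(x_{t-1},u_{t-1})$ backward in time yields
\begin{equation*}
x_t = f(x_{t-1},u_{t-1}) = f(f(x_{t-2},u_{t-2}),u_{t-1}) = \cdots = f(\cdots f(f(x_0,u_0),u_1)\cdots,u_{t-1}),
\end{equation*}
so that $x_t$ is a function of the initial condition $x_0$ and of $u_0,u_1,\ldots,u_{t-1}$ only. This immediately gives $\frac{\partial x_t}{\partial u_\tau} = 0$ whenever $\tau \geq t$, which is the second case in \eqref{Eq:Lm:Gradxvsu}.

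For $\tau < t$, I would differentiate the above composition with respect to $u_\tau$ using the chain rule. The control $u_\tau$ enters only through the single occurrence $f(x_\tau,u_\tau)$ that produces $x_{\tau+1}$; propagating this perturbation forward through the subsequent compositions $x_{j+1} = f(x_j,u_j)$ for $j = \tau+1,\ldots,t-1$ contributes a Jacobian factor $\frac{\partial f}{\partial x}\big\vert_{j}$ at each step, so that
\begin{equation*}
\frac{\partial x_t}{\partial u_\tau}
= \frac{\partial f}{\partial x}\Big\vert_{t-1}\frac{\partial f}{\partial x}\Big\vert_{t-2}\cdots\frac{\partial f}{\partial x}\Big\vert_{\tau+1}\frac{\partial f}{\partial u}\Big\vert_{\tau}
= \prod_{i=1}^{t-\tau-1}\left[\frac{\partial f}{\partial x}\Big\vert_{t-i}\right]\frac{\partial f}{\partial u}\Big\vert_{\tau},
\end{equation*}
which is the first case. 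The same conclusion follows cleanly by induction on $t-\tau$: the base case $\tau = t-1$ gives $\frac{\partial x_t}{\partial u_{t-1}} = \frac{\partial f}{\partial u}\big\vert_{t-1}$ (empty product), and the inductive step uses $x_t = f(x_{t-1},u_{t-1})$ together with $\frac{\partial x_t}{\partial u_\tau} = \frac{\partial f}{\partial x}\big\vert_{t-1}\frac{\partial x_{t-1}}{\partial u_\tau}$ for $\tau < t-1$.

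The statement is essentially routine and I do not anticipate a real obstacle; the only point requiring care is the bookkeeping of the matrix product. Specifically, one must preserve the ordering of the factors (the $n\times n$ Jacobians $\frac{\partial f}{\partial x}\big\vert_{t-i}$ are multiplied from $i=1$ down to $i=t-\tau-1$, i.e.\ from $\frac{\partial f}{\partial x}\big\vert_{t-1}$ to $\frac{\partial f}{\partial x}\big\vert_{\tau+1}$, and these matrices do not in general commute), and adopt the convention that the product $\prod_{i=1}^{t-\tau-1}[\cdot]$ is the identity matrix when $\tau = t-1$. A quick dimension check confirms consistency: $\frac{\partial f}{\partial u}\big\vert_{\tau}\in\R^{n\times m}$ and each $\frac{\partial f}{\partial x}\big\vert_{t-i}\in\R^{n\times n}$, so $\frac{\partial x_t}{\partial u_\tau}\in\R^{n\times m}$, as expected for $x_t\in\R^n$ and $u_\tau\in\R^m$. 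This matches the block $\frac{\partial x_t}{\partial u_\tau}$ appearing within the composite Jacobian $\frac{\partial s}{\partial u}$ of \eqref{Eq:GradsvsuBlockForm}.
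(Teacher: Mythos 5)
Your proposal is correct and follows essentially the same route as the paper: unroll the recursion $x_t = f(x_{t-1},u_{t-1}) = f(f(x_{t-2},u_{t-2}),u_{t-1}) = \cdots$ to see that $x_t$ depends only on $x_0$ and $u_0,\ldots,u_{t-1}$ (giving the zero case), then apply the chain rule to obtain the ordered product of Jacobians for $\tau < t$. Your added remarks on the non-commutativity of the factors, the empty-product convention for $\tau = t-1$, and the dimension check are sound refinements of the paper's more informal ``write out the first few cases and observe the pattern'' presentation.
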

\begin{proof}
From the agent dynamics model \eqref{Eq:AgentDynamics}, we can write  
\begin{equation}\label{Eq:AgentStateRecursive}
\begin{aligned}
x_t =& f(x_{t-1},u_{t-1}) \\
=& f(f(x_{t-2},u_{t-2}),u_{t-1}) \\
=& f(f(f(x_{t-3,u_{t-3}}),u_{t-2}),u_{t-1}) \\
\vdots\, &
\end{aligned}
\end{equation}
Hence, $\frac{\partial x_t}{\partial u_\tau} = 0$ for any $\tau \geq t$ and $\frac{\partial x_t}{\partial u_{t-1}} = \frac{\partial f}{\partial u}\big\vert_{t-1}$. Similarly, using \eqref{Eq:AgentStateRecursive} together with the product rule, we can write 
\begin{align*}
    \frac{\partial x_t}{\partial u_{t-2}}  =& \frac{\partial f}{\partial x}\Big\vert_{t-1} \frac{\partial f}{\partial u}\Big\vert_{t-2}\\
    \frac{\partial x_t}{\partial u_{t-2}}  =& \frac{\partial f}{\partial x}\Big\vert_{t-1} \frac{\partial f}{\partial x}\Big\vert_{t-2} \frac{\partial f}{\partial u}\Big\vert_{t-3} \\
    \vdots\ &
\end{align*}
Therefore, it is clear that \eqref{Eq:Lm:Gradxvsu} provides an accurate and concise representation of these results. 
\end{proof}

Using these two lemmas, we now propose a theorem that can be used to determine the required gradient term: $\frac{\partial s}{\partial u}$. 

\begin{theorem}\label{Th:STLGradiensDynamics}
Under the agent dynamics model \eqref{Eq:AgentDynamics}, the gradient of the composite signal $s(u)$ with respect to the control signal $u$ is given by a $(T+1)\times(T+1)$ block matrix $\frac{\partial s}{\partial u}$ (as in \eqref{Eq:GradsvsuBlockForm}) with its $(t,\tau)$\textsuperscript{th} block being 
\begin{equation}
    \frac{\partial s_t}{\partial u_\tau} = 
    \begin{bmatrix}
    \frac{\partial y_t}{\partial u_\tau}\\    
    \frac{\partial x_t}{\partial u_\tau}\\    
    I_m
    \end{bmatrix} \in \R^{q \times m}
\end{equation}
where $\frac{\partial y_t}{\partial u_\tau} \in \R^{p \times m}$ is given by \eqref{Eq:Lm:Gradyvsu}, $\frac{\partial x_t}{\partial u_\tau} \in \R^{n \times m}$ is given by \eqref{Eq:Lm:Gradxvsu} and $I_m$ is the identity matrix in $\R^{m \times m}$.    
\end{theorem}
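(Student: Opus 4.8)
The plan is to read this theorem off the definition of the composite signal together with the two preceding lemmas, so the proof is a short assembly step rather than a new argument. Recall from Section~\ref{Sec:ProblemFormulation} that $s_t(u) \triangleq [y_t^\top(u), x_t^\top(u), u_t^\top]^\top \in \R^{q}$ with $q \triangleq p+n+m$. First I would observe that, because the three sub-vectors $y_t(u)$, $x_t(u)$ and $u_t$ occupy disjoint blocks of rows of $s_t(u)$, differentiating with respect to $u_\tau$ simply stacks the three Jacobian blocks:
\begin{equation*}
\frac{\partial s_t}{\partial u_\tau} =
\begin{bmatrix}
\partial y_t/\partial u_\tau \\
\partial x_t/\partial u_\tau \\
\partial u_t/\partial u_\tau
\end{bmatrix} \in \R^{q\times m},
\end{equation*}
with the blocks living in $\R^{p\times m}$, $\R^{n\times m}$ and $\R^{m\times m}$ respectively.

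Next I would substitute the explicit forms for the blocks. The first block is exactly \eqref{Eq:Lm:Gradyvsu} of Lemma~\ref{Lm:Gradyvsu}, and the second is exactly \eqref{Eq:Lm:Gradxvsu} of Lemma~\ref{Lm:Gradxvsu}; both already carry the three-way case split on $\tau$ versus $t$ coming from the causal structure of \eqref{Eq:AgentDynamics}. For the third block, since the entries of $u$ are the free optimization variables, $u_t$ depends on $u_\tau$ only when $\tau=t$, so $\partial u_t/\partial u_\tau = I_m$ for $\tau=t$ and $0_{m\times m}$ otherwise. Collecting the $(t,\tau)$\textsuperscript{th} block for all $t,\tau\in[0,T]$ into the layout \eqref{Eq:GradsvsuBlockForm} then produces the claimed $(T+1)\times(T+1)$ block matrix $\partial s/\partial u \in \R^{(T+1)q\times(T+1)m}$, which completes the proof.

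The only delicate point — and the closest thing to an obstacle here — is bookkeeping the case split consistently across the three sub-blocks. For $\tau>t$ causality forces $\partial y_t/\partial u_\tau = 0$, $\partial x_t/\partial u_\tau = 0$ and $\partial u_t/\partial u_\tau = 0$, so the entire block vanishes and $\partial s/\partial u$ is block lower-triangular apart from its diagonal $I_m$ terms; for $\tau=t$ the bottom block is $I_m$ while the other two are $\partial g/\partial u\big\vert_t$ and $0$; and for $\tau<t$ the bottom block is $0$ while the upper two are the product-form expressions of Lemmas~\ref{Lm:Gradyvsu}--\ref{Lm:Gradxvsu}. Writing the block uniformly as $[\,\partial y_t/\partial u_\tau;\ \partial x_t/\partial u_\tau;\ I_m\,]$ is therefore a mild shorthand whose bottom entry is understood to equal $I_m$ precisely when $\tau=t$ (and $0_{m\times m}$ otherwise), and I would state that convention explicitly in the final write-up so that the block-triangular structure of $\partial s/\partial u$ is unambiguous.
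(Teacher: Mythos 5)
Your proof is correct and takes essentially the same route as the paper, which simply cites Lemmas~\ref{Lm:Gradyvsu} and \ref{Lm:Gradxvsu} together with the definition $s_t(u) = [y_t^\top(u), x_t^\top(u), u_t^\top]^\top$ and stacks the three Jacobian blocks. Your additional observation that the bottom block $\partial u_t/\partial u_\tau$ equals $I_m$ only when $\tau = t$ and is $0_{m\times m}$ otherwise is a valid and worthwhile clarification of a point the paper's statement leaves implicit.
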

\begin{proof} 
This result directly follows from Lms. \ref{Lm:Gradyvsu}, \ref{Lm:Gradxvsu} and the format of the composite signal $s(u) = \{s_t(u)\}_{t\in[0,T]} \in \R^{1\times(T+1)q}$ where $s_t(u) \triangleq [y_t^\top(u),x_t^\top(u),u_t^\top]^\top$.
\end{proof}

With that, we now can explicitly evaluate the gradient of the used SRM: $\frac{\partial \tilde{\rho}^{\varphi}(s(u))}{\partial u}$ (i.e., the gradient of the objective function used in the synthesis problem \eqref{Eq:SynthesisProblem3}) using \eqref{Eq:MainChainRule} and Theorems \ref{Th:STLGradientSemantics} and \ref{Th:STLGradiensDynamics}.

\section{Simulation Results}
\label{Sec:SimulationResults}

In this section, to highlight our contributions, we consider the four symbolic control problems (SCPs) shown in Fig. \ref{Fig:ProblemConfigurations}. In each SCP, the objective is to synthesize a control signal for a simple simulated robot that follows 2-D integrator dynamics \cite{Gilpin2021} such that it satisfies a given specification (indicated in Fig. \ref{Fig:ProblemConfigurations}). We have implemented: 1) the SCPs shown in Fig. \ref{Fig:ProblemConfigurations}, 2) the SRMs discussed in Section \ref{Sec:SmoothRobustnessMeasures}, 3) the STL error semantics proposed in Section \ref{Sec:ApproximationErrors} and 4) the STL gradient semantics proposed in \ref{Sec:Gradients}, in a Python environment that has been made available at \url{https://github.com/shiran27/Symbolic-Control-via-STL} for reproduction and reuse purposes. 

In order to solve the aforementioned synthesis problem (i.e., \eqref{Eq:SynthesisProblem3}) in an energy-aware manner, we follow the approach used in \cite{Gilpin2021} where a smooth cost function 
\begin{equation}\label{Eq:SmoothCostFunction}
\tilde{J}(u) \triangleq \tilde{\rho}^\varphi(s(u))-0.01\Vert u \Vert^2    
\end{equation}
is optimized (as an alternative to optimizing its non-smooth version: $J(u) \triangleq \rho^\varphi(s(u))-0.01\Vert u \Vert^2$, see also \eqref{Eq:SynthesisProblem4}), using the SciPy's SQP method \cite{Virtanen2020}.

\begin{figure*}[!t]
    \centering
    \begin{subfigure}[t]{0.24\textwidth}
        \centering
        \captionsetup{justification=centering}
        \includegraphics[width=\textwidth]{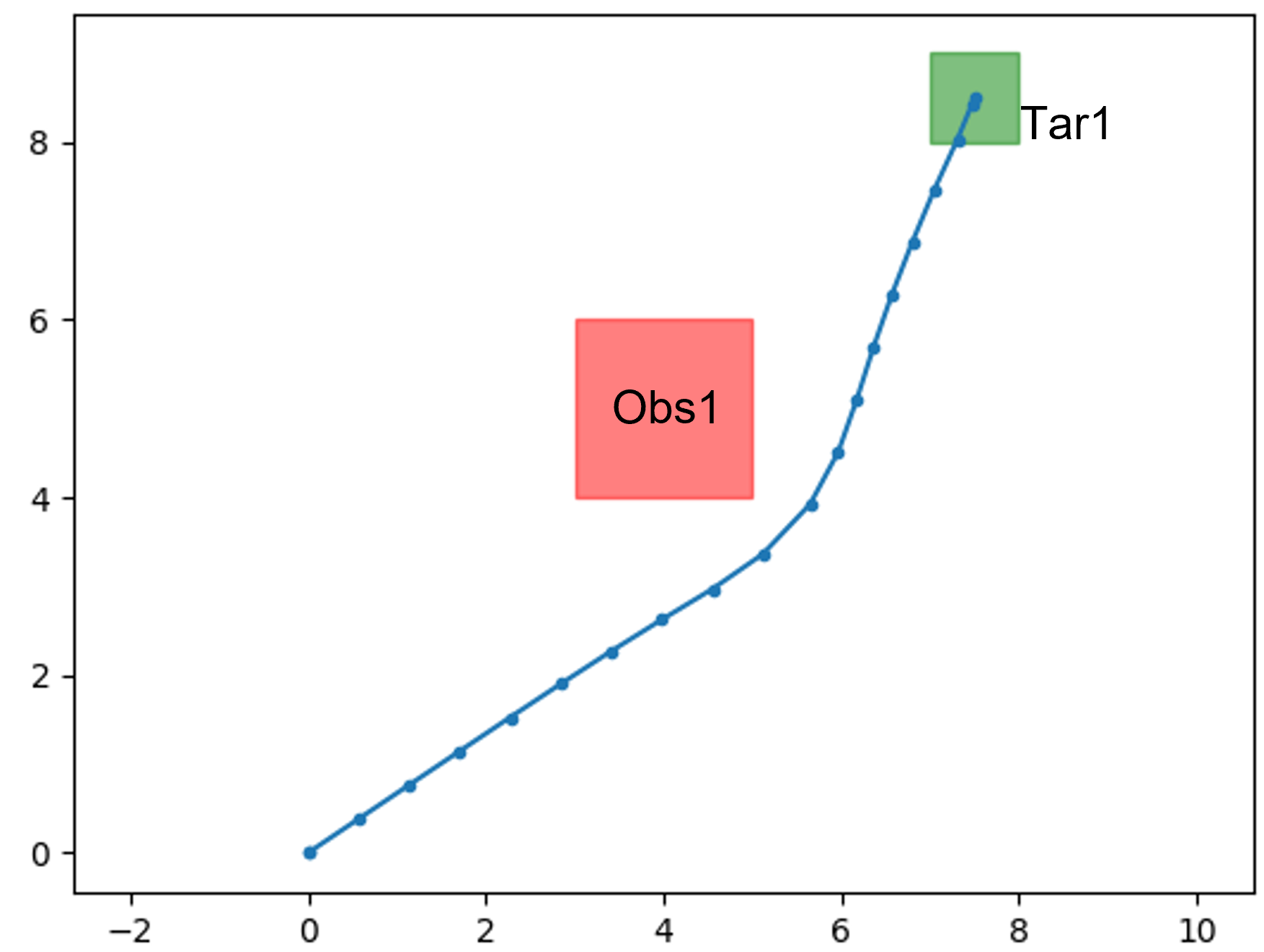}
        \caption{SCP1: The specification is $\varphi = \varphi_o\land\varphi_u\land\textbf{F}_{[0,20]}\mbox{Tar}1$.}
    \end{subfigure}
    \hfill
    \begin{subfigure}[t]{0.24\textwidth}
        \centering
        \captionsetup{justification=centering}
        \includegraphics[width=\textwidth]{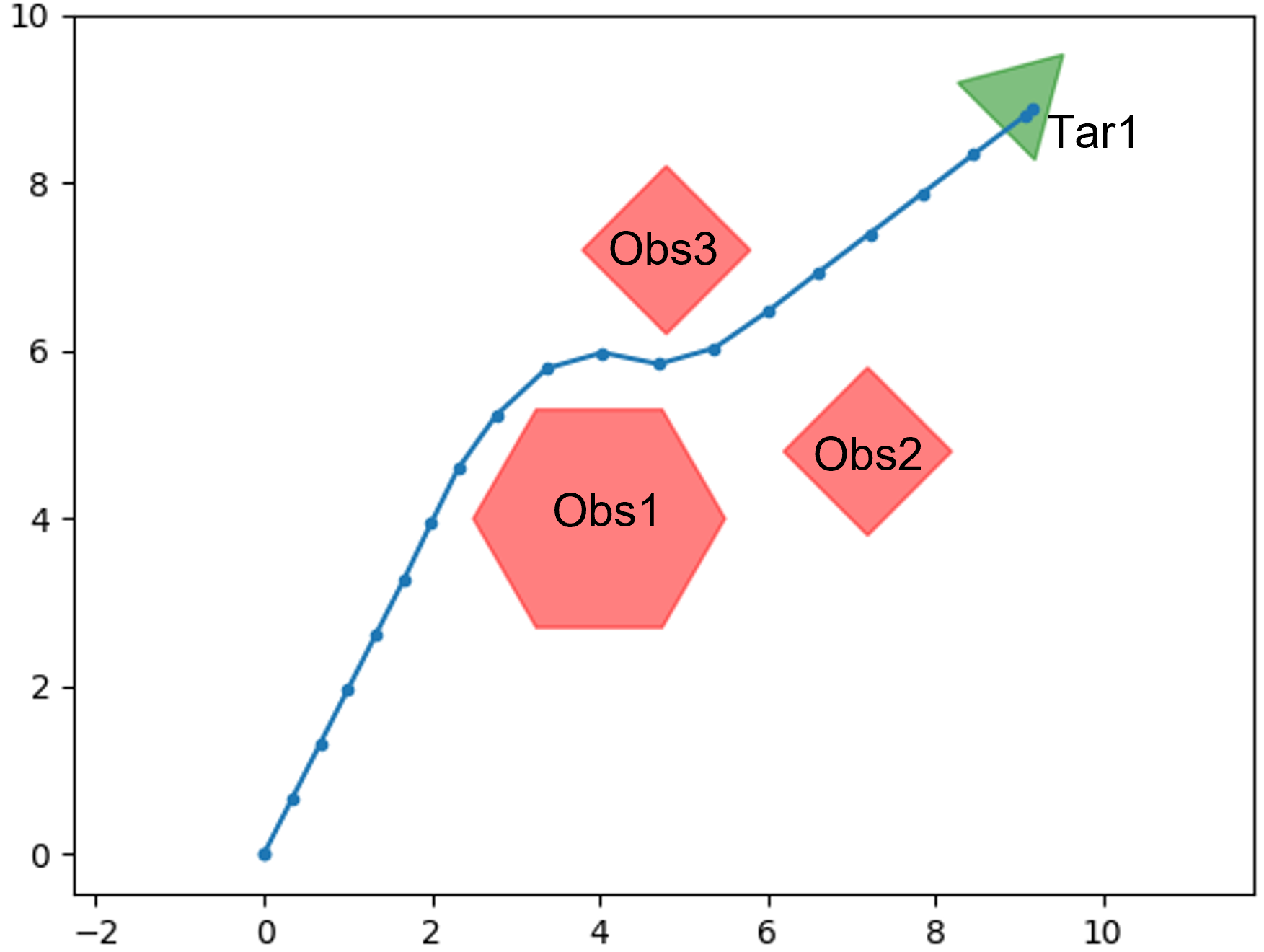}
        \caption{SCP2: The specification is $\varphi = \varphi_o\land\varphi_u\land\textbf{F}_{[0,20]}\mbox{Tar}1$.}
    \end{subfigure}
    \hfill
    \begin{subfigure}[t]{0.24\textwidth}
        \centering
        \captionsetup{justification=centering}
        \includegraphics[width=\textwidth]{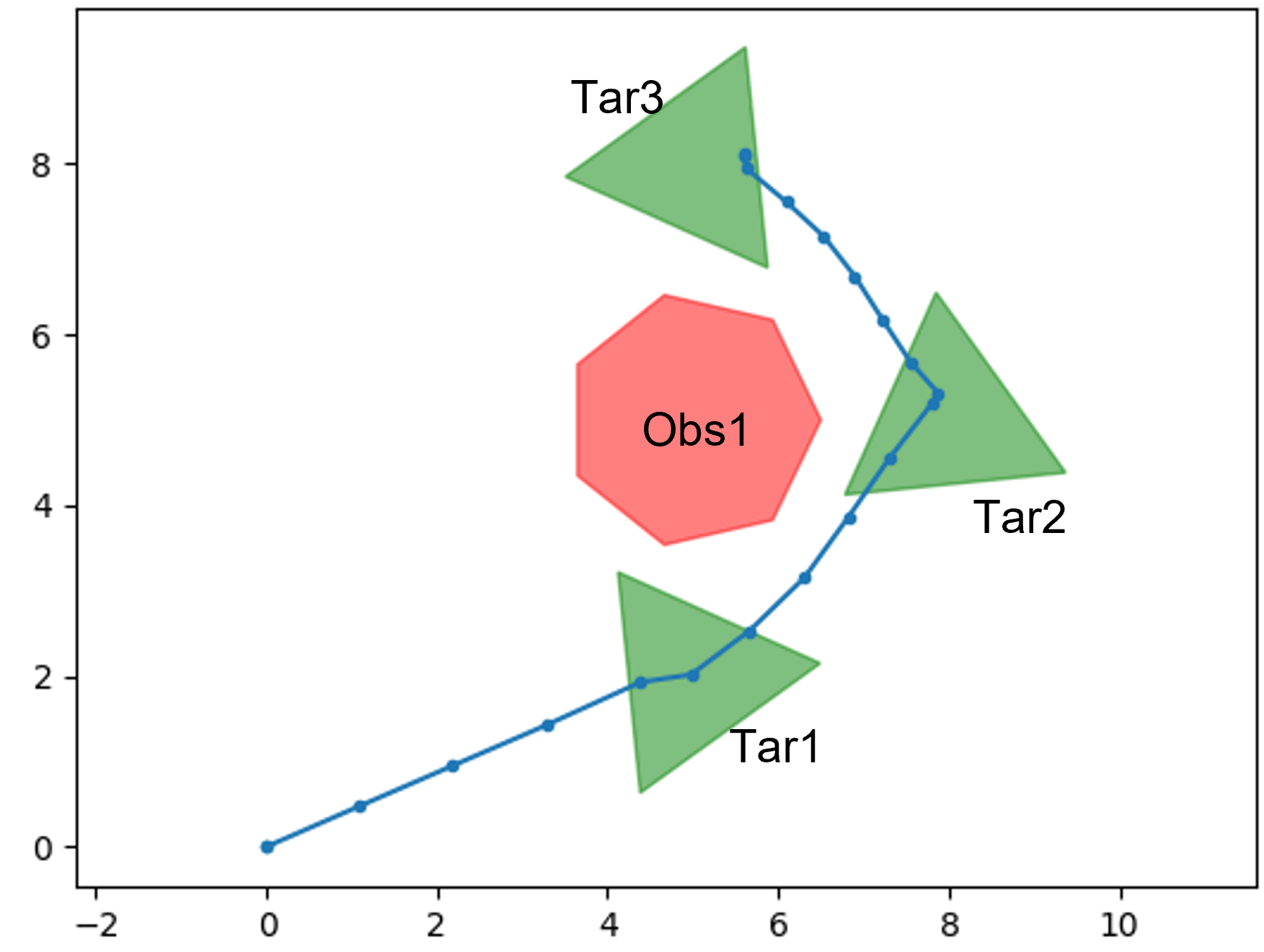}
        \caption{SCP3: The specification is $\varphi = \varphi_o\land\varphi_u\land\textbf{F}_{[0,6]}(\mbox{Tar}1) \land \textbf{F}_{[6,12]}(\mbox{Tar}2) \land \textbf{F}_{[14,20]}\mbox{Tar}3$.}
    \end{subfigure}
    \hfill
    \begin{subfigure}[t]{0.24\textwidth}
        \centering
        \captionsetup{justification=centering}
        \includegraphics[width=\textwidth]{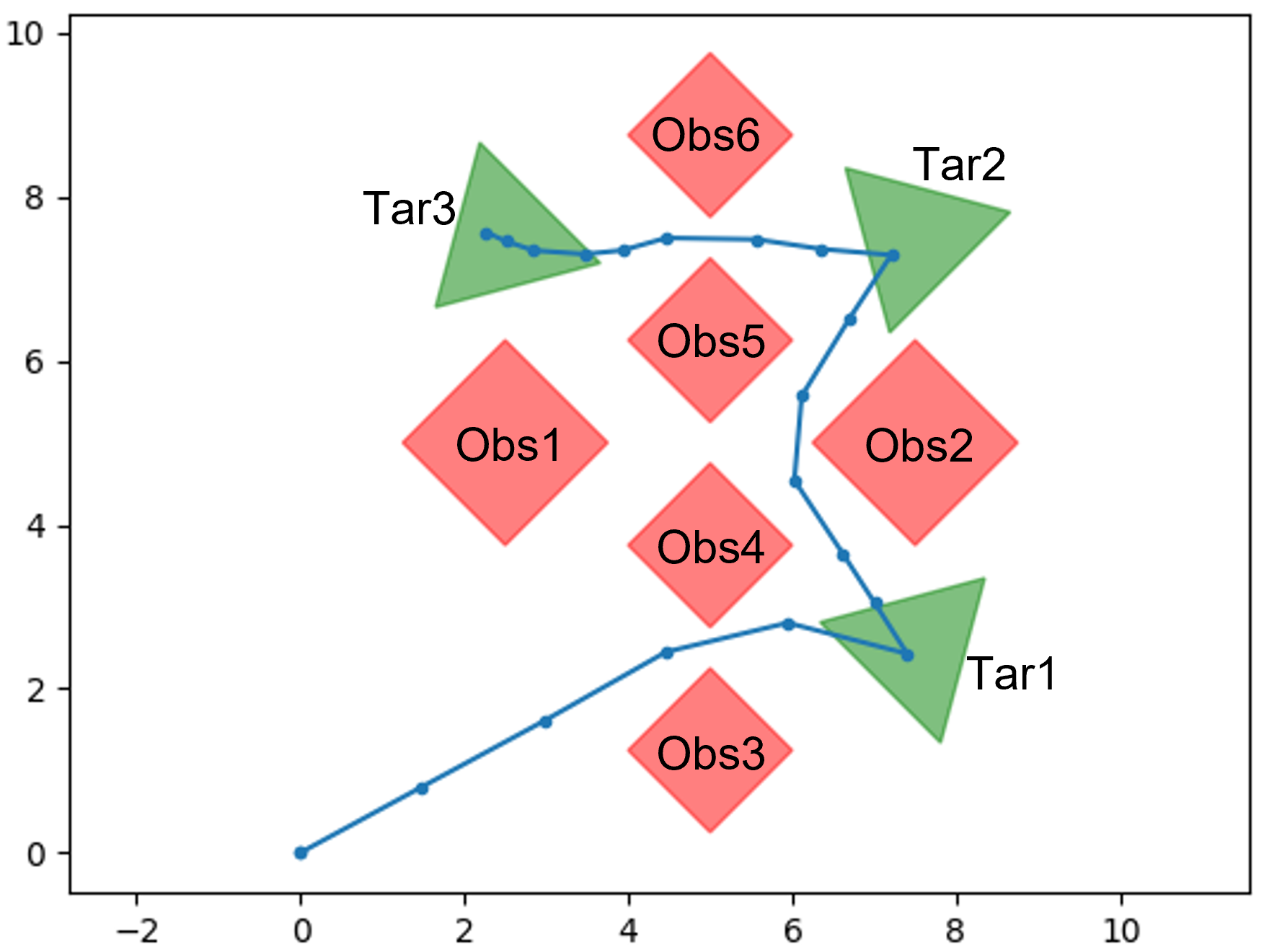}
        \caption{SCP4: The specification is $\varphi = \varphi_o\land\varphi_u\land\textbf{F}_{[0,6]}(\mbox{Tar}1) \land \textbf{F}_{[6,12]}(\mbox{Tar}2) \land \textbf{F}_{[12,20]}\mbox{Tar}3$}
    \end{subfigure}
    \caption{Considered symbolic control problems (SCPs). Note that $\varphi_{o} = \textbf{G}_{[0,20]}\land_{i}(\neg \mbox{Obs}i)$ and $\varphi_u = \textbf{G}_{[0,20]}(-1 \leq u_t \leq 1)$. All agent trajectories have been generated by optimizing \eqref{Eq:SmoothCostFunction} with the SRM3, under smooth operator parameters $k_1=k_2=3$.}
    \label{Fig:ProblemConfigurations}
\end{figure*}

\subsection{Effectiveness of the explicit gradients}

We start with illustrating the efficiency and the accuracy of the explicit gradient computation technique proposed in Section \ref{Sec:Gradients}. For this purpose, we compare such \emph{explicit} gradients with the gradients provided by the symbolic \emph{AutoGrad} software package \cite{Maclaurin2015} (that was used in \cite{Gilpin2021}) - in terms of the execution times and the absolute component differences (errors). It is also worth noting that the said AutoGrad based approach is much more efficient and accurate compared to numerical methods where the gradient is computed using finite differences \cite{Pant2017}.

In our experiments, for each SCP shown in Fig. \ref{Fig:ProblemConfigurations} and for each SRM, the explicit gradient and the AutoGrad based gradient (of the cost function) were evaluated at 500 randomly selected control signals. The observed mean execution times and the ($\log$) mean absolute component errors are reported in Tab. \ref{Tab:GradientResults1}.  According to the observed execution times, the proposed explicit gradient evaluation technique is significantly efficient compared to the AutoGrad based technique as it has improved the mean execution time on average by 57.7\% irrespective of the considered SCP or the SRM. Moreover, the observed error values (in log scale) reveal that both gradient computing techniques essentially provide the same gradient values.   

However, notice a considerable difference in the scale (albeit small in the magnitude) in the observed error values across different SRMs. Therefore, this behavior was further investigated under varying smooth operator parameters $k_1,k_2$, for the SCP3. The observed results are illustrated in Fig. \ref{Fig:GradientResults1}. Intuitively, with increasing $k_1$, $k_2$ values, an SRM becomes approximately close to the actual non-smooth robustness measure, and thus, the corresponding gradients evaluated using the AutoGrad method can be expected to deviate from the actual explicit gradients. However, according to the observations shown in Fig. \ref{Fig:GradientResults1}, such a behavior can only be seen for SRM2 and SRM4. Moreover, error magnitudes associated with SRM3 and SRM1 are lower (mush significantly in the latter case) than those of SRM2 and SRM4. Hence, we can conclude that SRMs like SRM1 and SRM3 can be used even with the AutoGrad package (to compute gradients) without running into significant numerical inaccuracies irrespective of the used $k_1$, $k_2$ values.

\begin{table}[!t]
\caption{A comparison between explicit gradients and AutoGrad based gradients. Each SRM was defined using the smooth operator parameters $k_1=k_2=3$.}
\resizebox{\columnwidth}{!}{
\begin{tabular}{|c|c|r|r|r|r|r|}
\Xhline{2\arrayrulewidth}
\multirow{2}{*}{SCP} & \multirow{2}{*}{SRM} & \multicolumn{3}{c|}{Mean Execution Time / (ms)} & \multicolumn{2}{l|}{Absolute Component Error} \\ \cline{3-7}  
& &  
  \multicolumn{1}{c|}{\begin{tabular}[c]{@{}c@{}}Explicit\end{tabular}} &
  \multicolumn{1}{c|}{\begin{tabular}[c]{@{}c@{}}AutoGrad\end{tabular}} &
  \multicolumn{1}{c|}{\begin{tabular}[c]{@{}c@{}}\% Impr.\end{tabular}} &
  \multicolumn{1}{c|}{$\log_{10}(\text{Mean})$} &
  \multicolumn{1}{c|}{$\log_{10}(\text{St.D.})$} \\ \Xhline{2\arrayrulewidth}
                        & SRM1 & 23.3 & 54.9  & 57.6 & -16.6 & -16.4 \\ \cline{2-7} 
                        & SRM2 & 23.9 & 56.0  & 57.2 & -16.1 & -15.8 \\ \cline{2-7} 
                        & SRM3 & 24.8 & 56.6  & 56.3 & -16.0 & -15.8 \\ \cline{2-7} 
\multirow{-4}{*}{SCP1} & SRM4 & 25.6 & 57.0  & 55.2 & -15.8 & -15.5 \\ \Xhline{2\arrayrulewidth}
                        & SRM1 & 46.8 & 115.0 & 59.3 & -16.2 & -15.9 \\ \cline{2-7} 
                        & SRM2 & 44.1 & 105.8 & 58.3 & -15.7 & -15.3 \\ \cline{2-7} 
                        & SRM3 & 43.8 & 104.2 & 58.0 & -15.6 & -15.3 \\ \cline{2-7} 
\multirow{-4}{*}{SCP2} & SRM4 & 49.8 & 114.5 & 56.5 & -15.4 & -15.1 \\ \Xhline{2\arrayrulewidth}
                        & SRM1 & 24.0 & 58.9  & 59.3 & -15.5 & -15.1 \\ \cline{2-7} 
                        & SRM2 & 24.9 & 60.9  & 59.0 & -13.9 & -13.5 \\ \cline{2-7} 
                        & SRM3 & 26.3 & 63.0  & 58.3 & -14.2 & -13.7 \\ \cline{2-7} 
\multirow{-4}{*}{SCP3} & SRM4 & 27.3 & 64.9  & 57.9 & -13.8 & -13.5 \\ \Xhline{2\arrayrulewidth}
                        & SRM1 & 62.8 & 150.9 & 58.4 & -15.7 & -15.4 \\ \cline{2-7} 
                        & SRM2 & 66.9 & 156.1 & 57.2 & -14.7 & -14.3 \\ \cline{2-7} 
                        & SRM3 & 64.4 & 150.7 & 57.2 & -14.5 & -14.2 \\ \cline{2-7} 
\multirow{-4}{*}{SCP4} & SRM4 & 72.3 & 168.3 & 57.0 & -14.4 & -14.1 \\ \Xhline{2\arrayrulewidth}
\end{tabular}}
\label{Tab:GradientResults1}
\end{table}

\begin{figure}[!t]
    \centering
    \includegraphics[width=1.7in]{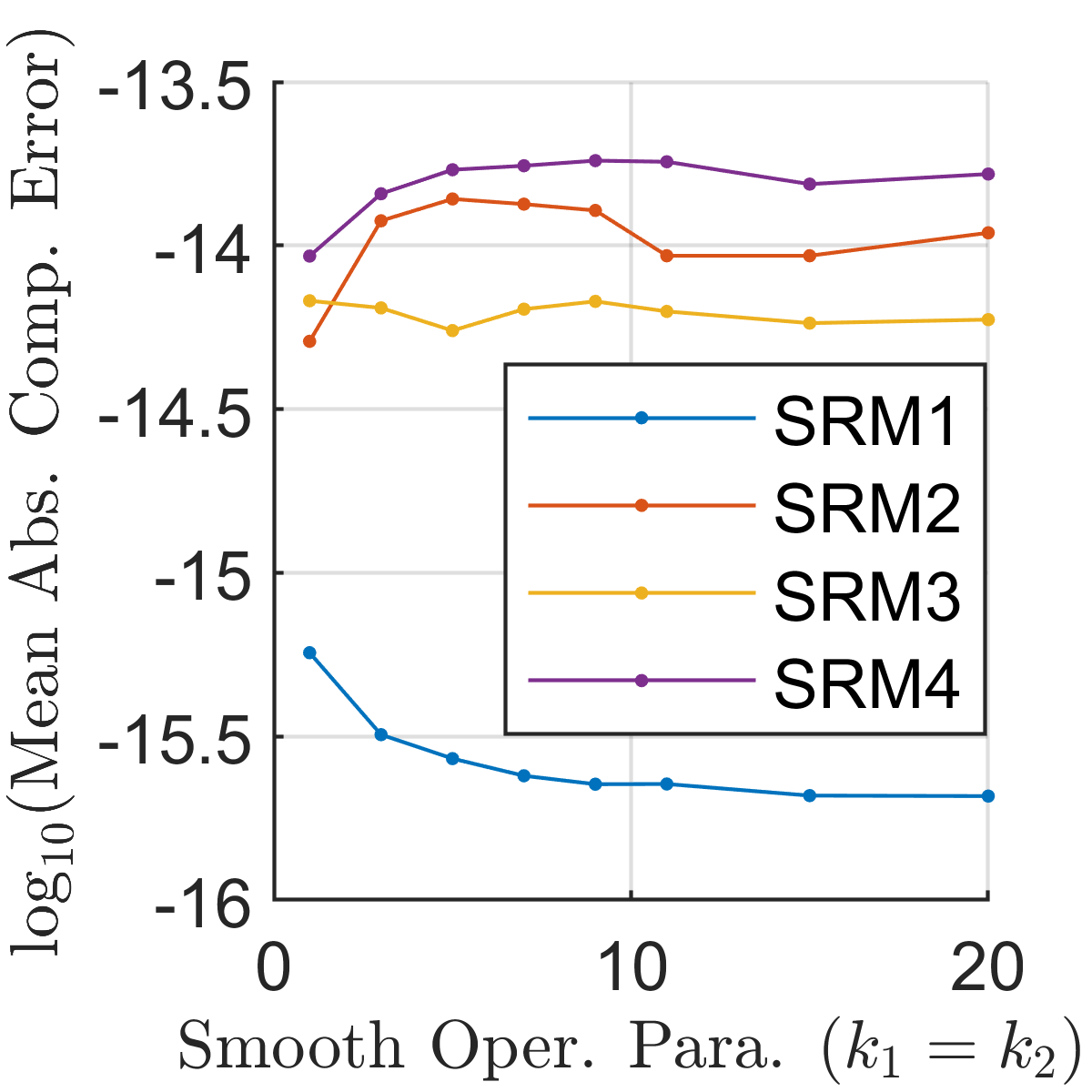}
    \caption{Variation of the  ($\log$) mean absolute component error (between explicit and AdaGrad gradients) with respect to smooth operator parameter values (i.e., $k_1,k_2$) under different SRMs for the SCP3.}
    \label{Fig:GradientResults1}
\end{figure}

\subsection{Effectiveness of different SRMs}


We now compare different control solutions obtained for each SCP shown in Fig. \ref{Fig:ProblemConfigurations} under different SRMs discussed in Section \ref{Sec:SmoothRobustnessMeasures}. Here we also provide the details of respective approximation error bounds revealed by the STL error semantics proposed in Section \ref{Sec:ApproximationErrors}. As mentioned before, in each scenario, similar to \cite{Gilpin2021}, we obtain the control solution by optimizing the smooth cost function $\tilde{J}(u)$ \eqref{Eq:SmoothCostFunction} using the SciPy's SQP method \cite{Virtanen2020}. However, one key difference compared to \cite{Gilpin2021} is that we now provide explicitly computed gradients (i.e., $\frac{\partial \tilde{J}(u)}{\partial u}$) to this SQP solver. 

For the purpose of comparing different control solutions obtained for a SCP, we can directly use the performance metrics: 1) the approximation error bound width $U_\theta^\varphi(s(u))-L_\theta^\varphi(s(u))$, 2) the control cost $0.01\Vert u \Vert^2$, 3) the non-smooth (actual) robustness measure $\rho^\varphi(s(u))$ and 4) the non-smooth (total) cost function $J(u)$. These performance metrics, along with a few other metrics (that cannot be used for such comparisons, e.g., the smooth robustness measure $\tilde{\rho}(s(u))$) that were observed are summarized in Tab. \ref{Tab:SRMPerformanceResults1}. It is important to point out that all the values provided in Tab. \ref{Tab:SRMPerformanceResults1} are average values computed over 50 realizations, where in each realization, the SQP solver was initialized with a randomly selected control solution.

According to the simulation results reported in Tab. \ref{Tab:SRMPerformanceResults1}, in each SCP, the control solution found using the SRM3 provides the best total cost function value and the best actual robustness value, while the control solution found using the SRM1 provides the best control cost and the best approximation error bound width. This behavior implies that: 1) using an SRM with a moderate-sized approximation error bound (like the SRM3, as opposed to the SRM1) can lead to control solutions with better actual robustness measures, and 2) different SRMs can favor different aspects of the composite cost function (e.g., SRM1 favors reducing the control cost while the SRM2 favors improving robustness measure). Moreover, based on the performance metrics considered in Tab. \ref{Tab:SRMPerformanceResults1}, notice that both SRM2 and SRM4 do not show noticeable results compared to SRM1 or SRM3. In fact, for the SCP4, on average, SRM2 and SRM4 fail to find a feasible solution (as $\rho^\varphi<0$). 

To further investigate the properties of different SRMs, we executed the same set of experiments (that generated the results reported in Tab. \ref{Tab:SRMPerformanceResults1}) with different smooth operator parameter values: $k_1,k_2\in\{1,3,5,7,9\}$. The observed results (omitting some indecisive cases for simplicity) are summarized in Tab. \ref{Tab:SRMPerformanceResults2}. Similar to before, these observations show that across different $k_1, k_2$ values and SCPs, the SRM1 provides the best control cost and the best approximation error bound width (if it finds a feasible solution). Also, in terms of the actual robustness measure and the total cost function value, the SRM3 performs better than the SRM1 in 65\% of the cases considered. Furthermore, the results in Tab. \ref{Tab:SRMPerformanceResults2} show that: 1) the overall best robustness measure value achieved for each SCP has been achieved when using the SRM3, 2) the SRM3 is more effective (compared to the SRM1) with low $k_1, k_2$ values, 3) the SRM3 finds a feasible solution irrespective of the $k_1, k_2$ values used, 4) the SRM1 is more effective (compared to the SRM3) with high $k_1, k_2$ values, and, 5) the SRM1 can fail to find a feasible solution when the used $k_1, k_2$ values are small. 

Naturally, the aforementioned conflicting properties among different SRMs encourage one to jointly use two (or more) SRMs to achieve better performance metrics. For example, a control solution obtained using the SRM3 can be applied to initialize a subsequent stage where the SRM1 will be used. In fact, when this exact strategy was applied to the considered four SCPs (with $k_1=k_2=3$, as in Tab. \ref{Tab:SRMPerformanceResults1}), the percentage improvements achieved in terms of the average actual robustness measure were: 0.17\%, 1.80\%, 45.9\% and 0.83\%, respectively.

Another example approach (to jointly use two SRMs) is to constantly switch between using the SRM2 and the SRM3 while following a gradient ascent scheme that optimizes the smooth cost function \eqref{Eq:SmoothCostFunction}. Recall that the SRM2 and the SRM3 are essentially under- and over-approximations to the actual robustness measure. Hence, as illustrated in Fig. \ref{Fig:Switching}, with a properly selected switching strategy, this approach will guarantee an approximation error bound width $(\tilde{\rho}^\varphi_3 - \tilde{\rho}^\varphi_2)$, where $\tilde{\rho}^\varphi_3$ and $\tilde{\rho}^\varphi_2$ are the converged SRM2 and SRM3 values, respectively. For example, for the four considered SCPs (with $k_1=k_2=3$), if the reported SRM2 and SRM3 values in Tab. \ref{Tab:SRMPerformanceResults1} were assumed to be the converged respective SRM2 and SRM3 values under a switching-gradient-ascent approach (as in Fig. \ref{Fig:Switching}), the percentage improvements in the approximation error bound widths would be: 63.3\%, 71.7\%, 77.7\% and 44.1\%, respectively.

\begin{figure}[!t]
    \centering
    \includegraphics[width=2in]{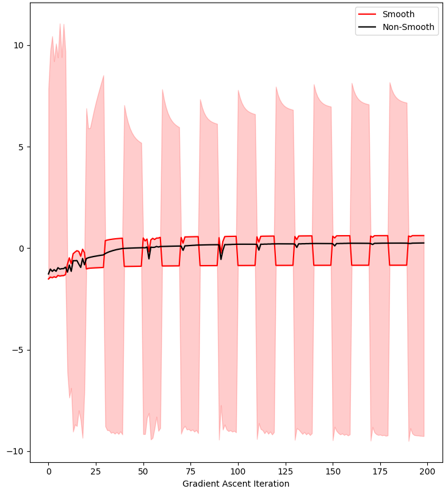}
    \caption{An example case: Switching between SRM2 and SRM3, while following a gradient ascent scheme to optimize the smooth cost function  \eqref{Eq:SmoothCostFunction}. Light red colored regions indicate the error bounds.}
    \label{Fig:Switching}
\end{figure}


\begin{table}[!h]
\caption{A performance comparison of different control solutions observed under different SRMs. Each SRM was defined using the smooth operator parameters $k_1=k_2=3$. All the reported values are average values computed over 50 realizations.}
\label{Tab:SRMPerformanceResults1}
\centering
\resizebox{\columnwidth}{!}{
\begin{tabular}{|c|c|r|r|r|r|r|r|r|r|}
\Xhline{2\arrayrulewidth}
\multirow{2}{*}{} &
  \multirow{2}{*}{} &
  \multirow{2}{*}{} &
  \multicolumn{2}{c|}{Smooth} &
  \multicolumn{3}{c|}{Error Bound Information} &
  \multicolumn{2}{c|}{Non-Smooth} \\ \cline{4-10} 
 SCP &
 SRM &
 \begin{tabular}{c}Cont.\\Cost\end{tabular} &
  \multicolumn{1}{c|}{\begin{tabular}{c}Cost\\$\tilde{J}$\end{tabular}} &
  \multicolumn{1}{c|}{\begin{tabular}{c}Rob.\\$\tilde{\rho}^\varphi$\end{tabular}} &
  \multicolumn{1}{c|}{\begin{tabular}{c}$L_\theta^\varphi$\end{tabular}} &
  \multicolumn{1}{c|}{\begin{tabular}{c}$U_\theta^\varphi$\end{tabular}} &
  \multicolumn{1}{c|}{Width} &
  \multicolumn{1}{c|}{\begin{tabular}{c}Rob.\\$\rho^\varphi$\end{tabular}} &
  \multicolumn{1}{c|}{\begin{tabular}{c}Cost\\$J$\end{tabular}}
  \\ \Xhline{2\arrayrulewidth}
\multirow{4}{*}{SCP1} & SRM1 & \textbf{0.072} & -0.810 & -0.739 & -1.025  & 1.853 & \textbf{2.878} & 0.182          & 0.110          \\ \cline{2-10} 
                       & SRM2 & 0.074          & -0.847 & -0.772 & -0.010  & 4.245 & 4.255          & 0.338          & 0.264          \\ \cline{2-10} 
                       & SRM3 & 0.079          & 0.620  & 0.698  & -3.982  & 0.010 & 3.992          & \textbf{0.401} & \textbf{0.323} \\ \cline{2-10} 
                       & SRM4 & 0.096          & 0.395  & 0.491  & -3.373  & 4.419 & 7.792          & 0.152          & 0.057          \\ \Xhline{2\arrayrulewidth}
\multirow{4}{*}{SCP2} & SRM1 & \textbf{0.093} & -0.892 & -0.799 & -1.025  & 1.853 & \textbf{2.878} & 0.264          & 0.170          \\ \cline{2-10} 
                       & SRM2 & 0.100          & -0.942 & -0.843 & -0.010  & 5.184 & 5.194          & 0.150          & 0.050          \\ \cline{2-10} 
                       & SRM3 & 0.102          & 0.528  & 0.630  & -10.037 & 0.010 & 10.047         & \textbf{0.321} & \textbf{0.220} \\ \cline{2-10} 
                       & SRM4 & 0.116          & 0.352  & 0.468  & -9.543  & 4.110 & 13.653         & 0.155          & 0.040          \\ \Xhline{2\arrayrulewidth}
\multirow{4}{*}{SCP3} & SRM1 & \textbf{0.214} & -0.416 & -0.202 & -0.703  & 1.853 & \textbf{2.556} & 0.524          & 0.310          \\ \cline{2-10} 
                       & SRM2 & 0.245          & -0.522 & -0.276 & -0.010  & 6.266 & 6.276          & 0.523          & 0.278          \\ \cline{2-10} 
                       & SRM3 & 0.236          & 0.856  & 1.091  & -6.176  & 0.010 & 6.186          & \textbf{0.620} & \textbf{0.385} \\ \cline{2-10} 
                       & SRM4 & 0.276          & 0.625  & 0.901  & -5.359  & 3.823 & 9.182          & 0.463          & 0.187          \\ \Xhline{2\arrayrulewidth}
\multirow{4}{*}{SCP4} & SRM1 & \textbf{0.231} & -0.673 & -0.442 & -0.703  & 1.988 & \textbf{2.691} & 0.410          & 0.179          \\ \cline{2-10} 
                       & SRM2 & \sout{0.167}          & \sout{-1.896} & \sout{-1.729} & \sout{-0.010}  & \sout{4.409} & \sout{4.419}          & \textcolor{red}{-1.345}         & \sout{-1.512}         \\ \cline{2-10} 
                       & SRM3 & 0.242          & 0.574  & 0.816  & -7.107  & 0.010 & 7.117          & \textbf{0.506} & \textbf{0.264} \\ \cline{2-10} 
                       & SRM4 & \sout{0.147}          & \sout{-1.450} & \sout{-1.303} & \sout{-9.295}  & \sout{4.526} & \sout{13.822}         & \textcolor{red}{-1.535}         & \sout{-1.681}         \\ \Xhline{2\arrayrulewidth}
\end{tabular}}
\end{table}

\begin{table*}[!t]
\caption{A performance comparison of different control solutions observed under SRM1 and SRM3 with varying smooth operator parameter values $k_1,k_2\in\{1,3,5,7,9\}$. All the reported values are average values computed over 50 realizations.}
\label{Tab:SRMPerformanceResults2}
\resizebox{\textwidth}{!}{
\begin{tabular}{|c|c|r|r|r|r|r|r|r|r|r|r|r|r|r|r|r|r|r|r|r|r|}
\Xhline{2\arrayrulewidth}
\multirow{3}{*}{SCP} &
  \multirow{3}{*}{SRM} &
  \multicolumn{5}{c|}{Error Bound width: $U^\varphi_\theta(s(u)) - L^\varphi_\theta(s(u))$} &
  \multicolumn{5}{c|}{Control Cost: $0.01 \Vert u \Vert^2$} &
  \multicolumn{5}{c|}{Robustness Measure: $\rho^\varphi(s(u))$} &
  \multicolumn{5}{c|}{Cost: $J(u) = \rho^\varphi(s(u)) - 0.01 \Vert u \Vert^2$} \\ \cline{3-22} 
 &
   &
  \multicolumn{5}{c|}{k\_1=k\_2} &
  \multicolumn{5}{c|}{k\_1=k\_2} &
  \multicolumn{5}{c|}{k\_1 = k\_2} &
  \multicolumn{5}{c|}{k\_1 = k\_2} \\ \cline{3-22} 
 &
   &
  \multicolumn{1}{c|}{1} &
  \multicolumn{1}{c|}{3} &
  \multicolumn{1}{c|}{5} &
  \multicolumn{1}{c|}{7} &
  \multicolumn{1}{c|}{9} &
  \multicolumn{1}{c|}{1} &
  \multicolumn{1}{c|}{3} &
  \multicolumn{1}{c|}{5} &
  \multicolumn{1}{c|}{7} &
  \multicolumn{1}{c|}{9} &
  \multicolumn{1}{c|}{1} &
  \multicolumn{1}{c|}{3} &
  \multicolumn{1}{c|}{5} &
  \multicolumn{1}{c|}{7} &
  \multicolumn{1}{c|}{9} &
  \multicolumn{1}{c|}{1} &
  \multicolumn{1}{c|}{3} &
  \multicolumn{1}{c|}{5} &
  \multicolumn{1}{c|}{7} &
  \multicolumn{1}{c|}{9} \\ \Xhline{2\arrayrulewidth}
\multirow{2}{*}{SCP1} &
  SRM1 &
  \sout{8.594} &
  \textbf{2.878} &
  \textbf{1.735} &
  \textbf{1.245} &
  {\ul \textbf{0.973}} &
  \sout{0.068} &
  {\ul \textbf{0.072}} &
  \textbf{0.072} &
  \textbf{0.073} &
  \textbf{0.073} &
  \textcolor{red}{-1.213} &
  0.182 &
  0.349 &
  0.400 &
  \textbf{0.416} &
  \sout{-1.281} &
  0.110 &
  0.277 &
  \textbf{0.327} &
  {\ul \textbf{0.343}} \\ \cline{2-22} 
 &
  SRM3 &
  \textbf{8.635} &
  3.992 &
  3.420 &
  2.766 &
  2.946 &
  \textbf{0.080} &
  0.079 &
  0.079 &
  0.079 &
  0.078 &
  \textbf{0.192} &
  \textbf{0.401} &
  {\ul \textbf{0.416}} &
  \textbf{0.401} &
  0.399 &
  \textbf{0.112} &
  \textbf{0.323} &
  \textbf{0.338} &
  0.323 &
  0.320 \\ \Xhline{2\arrayrulewidth}
\multirow{2}{*}{SCP2} &
  SRM1 &
  \sout{8.594} &
  \textbf{2.878} &
  \textbf{1.735} &
  \textbf{1.245} &
  {\ul \textbf{0.973}} &
  \sout{0.097} &
  {\ul \textbf{0.093}} &
  \textbf{0.094} &
  \textbf{0.094} &
  \textbf{0.093} &
  \textcolor{red}{-0.896} &
  0.264 &
  0.335 &
  \textbf{0.360} &
  \textbf{0.372} &
  \sout{-0.993} &
  0.170 &
  0.241 &
  \textbf{0.266} &
  {\ul \textbf{0.278}} \\ \cline{2-22} 
 &
  SRM3 &
  \textbf{11.648} &
  10.047 &
  9.456 &
  8.982 &
  8.741 &
  \textbf{0.102} &
  0.102 &
  0.098 &
  0.100 &
  0.098 &
  \textbf{0.185} &
  \textbf{0.321} &
  {\ul \textbf{0.372}} &
  0.360 &
  0.346 &
  \textbf{0.083} &
  \textbf{0.220} &
  \textbf{0.274} &
  0.260 &
  0.248 \\ \Xhline{2\arrayrulewidth}
\multirow{2}{*}{SCP3} &
  SRM1 &
  \textbf{7.629} &
  \textbf{2.556} &
  \textbf{1.542} &
  \textbf{1.107} &
  {\ul \textbf{0.865}} &
  {\ul \textbf{0.118}} &
  \textbf{0.214} &
  \textbf{0.225} &
  \textbf{0.223} &
  \textbf{0.219} &
  0.187 &
  0.524 &
  \textbf{0.602} &
  \textbf{0.596} &
  \textbf{0.590} &
  0.069 &
  0.310 &
  \textbf{0.376} &
  \textbf{0.374} &
  \textbf{0.371} \\ \cline{2-22} 
 &
  SRM3 &
  7.251 &
  6.186 &
  5.498 &
  5.105 &
  5.179 &
  0.165 &
  0.236 &
  0.248 &
  0.235 &
  0.226 &
  \textbf{0.542} &
  {\ul \textbf{0.620}} &
  0.568 &
  0.590 &
  0.574 &
  \textbf{0.376} &
  {\ul \textbf{0.385}} &
  0.320 &
  0.355 &
  0.348 \\ \Xhline{2\arrayrulewidth}
\multirow{2}{*}{SCP4} &
  SRM1 &
  \textbf{8.034} &
  \textbf{2.691} &
  \textbf{1.623} &
  \textbf{1.165} &
  {\ul \textbf{0.910}} &
  \textbf{0.227} &
  \textbf{0.231} &
  \textbf{0.229} &
  0.222 &
  \textbf{0.219} &
  0.061 &
  0.410 &
  \textbf{0.506} &
  0.512 &
  0.524 &
  -0.166 &
  0.179 &
  \textbf{0.276} &
  0.290 &
  0.304 \\ \cline{2-22} 
 &
  SRM3 &
  8.866 &
  7.117 &
  6.451 &
  6.348 &
  6.021 &
  0.231 &
  0.242 &
  0.235 &
  {\ul \textbf{0.217}} &
  0.220 &
  \textbf{0.151} &
  \textbf{0.506} &
  0.494 &
  \textbf{0.532} &
  {\ul \textbf{0.538}} &
  \textbf{-0.080} &
  \textbf{0.264} &
  0.258 &
  \textbf{0.315} &
  {\ul \textbf{0.317}} \\ \Xhline{2\arrayrulewidth}
\end{tabular}}
\end{table*}

\subsection{The use of approximation error bounds}

We finally highlight three possible interesting uses of approximation error bounds determined by the STL error semantics proposed in Section \ref{Sec:ApproximationErrors}. Before getting into the details, recall the notation that we used to represent an approximation error bound: $[L_\theta^\varphi,U_\theta^\varphi] \ni \tilde{e}^\varphi(s(u)) \triangleq (\rho^\varphi(s(u)) - \tilde{\rho}^\varphi(s(u))\})$ and the fact that we optimize a SRM $\tilde{\rho}^\varphi(s(u))$ using a gradient-based optimization process. 

The first use of knowing the error bound $[L_\theta^\varphi,U_\theta^\varphi]$ is that it allows one to terminate the said optimization process prematurely if a given minimum level of actual robustness $\rho^\varphi(s(u))$ is guaranteed to be achieved. For example, if we need to ensure $\rho^\varphi(s(u))>L$ for some known $L\in\R$ value, we can use the simple (also efficient and smooth) termination condition: $\tilde{\rho}^\varphi(s(u))\geq L-L_\theta^\varphi$ in the gradient-based optimization process. As a numerical example, consider the (average) case for the SCP1 with the SRM2 reported in Tab. \ref{Tab:SRMPerformanceResults1} and assume we need to ensure $\rho^\varphi(s(u))>-1$. By simply looking at the achieved $\tilde{\rho}^\varphi(s(u))$ and $L_\theta^\varphi$ values, one can directly conclude that $\rho^\varphi(s(u))>-1$ as $\tilde{\rho}^\varphi(s(u)) = -0.772 \geq L-L_\theta^\varphi = -1-(-0.01) = -0.99$.

The next use of error bounds is that it allows one to see how uncertainties in the mission space translate into uncertainties in the actual robustness measure. For example, in Tab. \ref{Tab:SRMPerformanceResults1}, under any SCP, notice that $L^\varphi_\theta = -0.01$ in SRM2 and $U^\varphi_\theta = 0.01$ in SRM3 (even though, they both should be zero according to Lms. \ref{Lm:SRM2ApproxError} and \ref{Lm:SRM3ApproxError}, respectively). This is because, in our experiments, we have assumed that each predicate value (i.e., $\mu^\pi(s_t)$) can be deviated by $\pm0.01$ from its true value due to some source of noise. In other words, we set $L^\pi(s_t) = L^\pi = -0.01$ and $U^\pi(s_t) = U^\pi = 0.01$ for all $s_t$ and $\pi$, in the experiments (see also \eqref{Eq:ModifiedSmoothSTLRobustSemantic}). In applications, since these $L^\pi$ and $U^\pi$ values can take different values depending on the predicate $\pi$, it is far from obvious that how such terms will determine an accuracy error bound: $[L^\varphi,U^\varphi] \ni \hat{e}^\varphi(s(u)) \triangleq (\rho^\varphi(s(u)) - \hat{\rho}^\varphi(s(u)))$, where $\hat{\rho}^\varphi(s(u))$ is the computed/estimated actual robustness measure and $\rho^\varphi(s(u))$ is the true actual robustness measure. However, note that we now can address this challenge by simply using the supplementary STL error semantics proposed in Def. \ref{Def:SuppSmoothSTLRobustSemantics} with $L_{k_1,m}^{\min} = U_{k_1,m}^{\min} = L_{k_2,m}^{\min} = U_{k_2,m}^{\min} = 0$ to compute the said accuracy error bound $[L^\varphi,U^\varphi]$.

Finally, we elaborate on the main intended use of the proposed error bound analysis, i.e., to select (tune) the smooth operator parameters $\theta = \{k_1,k_2,...\}$. Intuitively, for this purpose, one can solve an optimization problem of the form: 
\begin{equation}\label{Eq:ParameterTuning}
    \theta^* = \underset{\theta > 0}{\arg\min}\ \ (U^\varphi_\theta - L^\varphi_\theta) + \alpha \theta^T\theta,
\end{equation}
where $\alpha$ is a scaling factor and the second term in the RHS is a regulatory term that ensures the parameters in $\theta$ will not grow indefinitely. Note that this latter term is required because the error bound width $(U^\varphi_\theta - L^\varphi_\theta)$ term in \eqref{Eq:ParameterTuning} converges to zero as the parameters in $\theta$ grow (irrespective of the used SRM). Clearly, with respect to the optimization process of the cost function \eqref{Eq:SmoothCostFunction}, this parameter tuning stage \eqref{Eq:ParameterTuning} can be executed either off-line or on-line. A situation where off-line parameter tuning has lead to an improved solution is shown in Fig. \ref{Fig:OfflineTuning}. Further, Fig. \ref{Fig:OnlineTuning} shows a case where on-line parameter tuning has been used - while the cost function  \eqref{Eq:SmoothCostFunction} is being optimized using a gradient ascent process. From these motivating results, the importance of the proposed error bound analysis is evident. The future work is directed towards studying efficient and accurate ways to solve this parameter tuning problem \eqref{Eq:ParameterTuning} and investigating effective ways to combine it with the main optimization problem where the cost function \eqref{Eq:SmoothCostFunction} is optimized.

\begin{figure}[!t]
    \centering
    \begin{subfigure}[t]{0.48\columnwidth}
        \centering
        \captionsetup{justification=centering}
        \includegraphics[width=\textwidth]{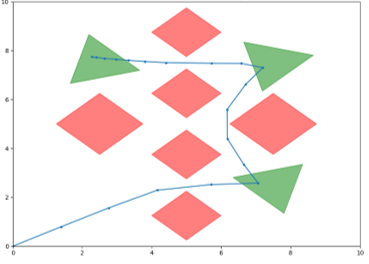}
        \caption{$\rho^\varphi(s(u^*)) = 0.423$, $[L_\theta^\varphi, U_\theta^\varphi] \equiv [-0.7031, 1.9883]$, $\theta = [3.000, 3.000, 3.000,\ldots]$}
    \end{subfigure}
    \hfill
    \begin{subfigure}[t]{0.48\columnwidth}
        \centering
        \captionsetup{justification=centering}
        \includegraphics[width=\textwidth]{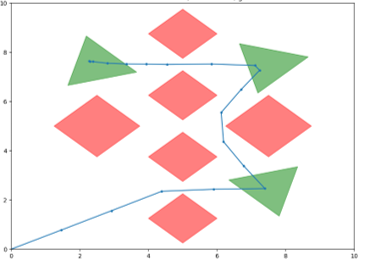}
        \caption{$\rho^\varphi(s(u^*)) = 0.493$,  $[L_\theta^\varphi, U_\theta^\varphi] \equiv [-0.4999, 0.9457]$, $\theta = [2.715, 4.038, 2.776,\ldots]$}
    \end{subfigure}
    \caption{Off-line smooth parameter tuning (of the SRM1): Generated results from the SQP solver: (a) with and (b) without off-line parameter tuning.}
    \label{Fig:offlineTuning}
\end{figure}

\begin{figure}[!t]
    \centering
    \begin{subfigure}[t]{0.48\columnwidth}
        \centering
        \captionsetup{justification=centering}
        \includegraphics[width=\textwidth]{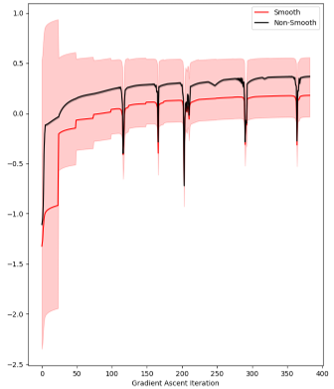}
        \caption{The evolution of $\tilde{\rho}^\varphi, \rho^\varphi$ and $[L_\theta^\varphi, U_\theta^\varphi]$ over the gradient ascent steps.}
    \end{subfigure}
    \hfill
    \begin{subfigure}[t]{0.48\columnwidth}
        \centering
        \captionsetup{justification=centering}
        \includegraphics[width=\textwidth]{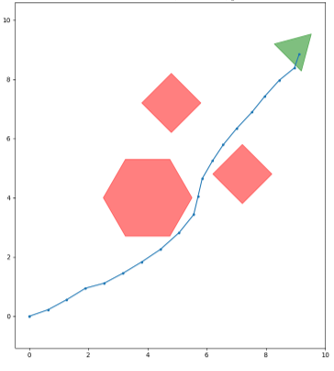}
        \caption{The converged result: $\rho^\varphi(s(u^*)) = 0.367$ (39.1\% better than the average: 0.264).}
    \end{subfigure}
    \caption{On-line smooth operator parameter tuning (of the SRM1): The cost function \eqref{Eq:SmoothCostFunction} was optimized using a gradient ascent process while tuning smooth operator parameters $\theta$ as per \eqref{Eq:ParameterTuning} every 25 iterations.}
    \label{Fig:OnlineTuning}
\end{figure}

\section{Conclusion}
\label{Sec:Conclusion}

In this paper, we motivated the use of smooth robustness measures and gradient-based optimization methods to efficiently solve complex symbolic control problems where the specifications are given using Signal Temporal Logic. First, two existing and two new smooth robustness measures were extensively discussed while establishing their fundamental properties. Then, to characterize the approximation errors associated with different SRMs, we proposed a set of semantic rules that we named \emph{STL error semantics}. This error analysis, among its other uses, enables one to assess different SRMs and sensibly select SRM parameters. Finally, to efficiently and accurately compute the gradients of different SRMs, we proposed another set of semantic rules that we named \emph{STL gradient semantics}. Simulation results are provided to highlight the improvements achieved due to these contributions. Ongoing work aims to determine effective ways to use multiple SRMs in gradient-based optimization schemes simultaneously.

\bibliographystyle{IEEEtran}
\bibliography{References}


\end{document}